\newtheorem{theorem}{Theorem}
\newtheorem{lemma}{Lemma}
\newtheorem{corollary}{Corollary}
\newtheorem{remark}{Remark}
\newtheorem{definition}{Definition}
\newtheorem*{lemclosedformula}{Lemma \ref{lem:closedformula}}
\newtheorem*{lempath}{Lemma \ref{lem:path}}
\newtheorem*{lemzero}{Lemma \ref{lem:zero}}
\newtheorem*{lemsmalllambda}{Lemma \ref{lem:smalllambda}}
\newtheorem*{thmprecision}{Theorem \ref{thm:precision}}
\def\NP{\mathsf{NP}}
\def\Zin{Z^{\mathsf{in}}}
\def\Zout{Z^{\mathsf{out}}}
\def\Reals{\mathbb{R}}
\def\Rnz{\mathbb{R}_{\neq0}}
\def\lambdab{\ensuremath{\boldsymbol{\lambda}}}
\def\Mb{\ensuremath{\mathbf{M}}}
\def\Bc{\ensuremath{\mathcal{B}}}
\def\Hardcore{\#\mathsf{BipHardCore}(\lambda,\Delta,c)}
\def\HardCore#1{\#\mathsf{BipHardCore}(\lambda,\Delta,#1)}
\def\TwoSpin#1#2{\#\mathsf{2Spin}(#1,#2,c)}
\def\prob#1#2#3{\goodbreak\begin{list}{}{\labelwidth\z@ \itemindent-\leftmargin
                        \itemsep\z@  \topsep6\p@\@plus6\p@
                        \let\makelabel\descriptionlabel}
                \item[\it Name]#1
               \item[\it Instance]                #2
                \item[\it Output]#3
                \end{list}}
\def\bezakova{Bez\'{a}kov\'{a}}
\title{Implementations and the independent set polynomial below the Shearer threshold\thanks{To appear in \emph{Theoretical Computer Science}. A preliminary short version of this paper (without the proofs) appeared in the Proceedings
of ICALP 2017 \cite{confversion}.}}
\author{
Andreas Galanis\thanks{
  Department of Computer Science, University of Oxford, Wolfson Building, Parks Road, Oxford, OX1~3QD, UK.
  The research leading to these results has received funding from the European Research Council under
  the European Union's Seventh Framework Programme (FP7/2007-2013) ERC grant agreement no.\ 334828. The paper
  reflects only the authors' views and not the views of the ERC or the European Commission.
  The European Union is not liable for any use that may be made of the information contained therein.}
  \and
  Leslie Ann Goldberg$^\dag$
\and
 Daniel \v{S}tefankovi\v{c}\thanks{
Department of Computer Science, University of Rochester,
Rochester, NY 14627.  Research
supported by NSF grant CCF-0910415.}
 }
\date{}
\begin{document}

\maketitle
\begin{abstract}  
 The independent set polynomial is important in many areas of combinatorics, computer science, and statistical physics.
 For every integer $\Delta\geq 2$, the Shearer threshold is the value 
 $\lambda^*(\Delta)=(\Delta-1)^{\Delta-1}/\Delta^{\Delta}$ . It is known that for $\lambda < - \lambda^*(\Delta)$, there
 are graphs~$G$ with maximum degree~$\Delta$ whose independent set polynomial, evaluated at~$\lambda$, is at most~$0$.
 Also, there are no such graphs for any  $\lambda > -\lambda^*(\Delta)$. 
 This paper is motivated by the computational problem of approximating the independent
 set polynomial when $\lambda < - \lambda^*(\Delta)$.
 The key issue in  complexity bounds for this problem  is ``implementation''.
  Informally, an implementation of a real number $\lambda'$ is a graph whose hard-core partition function, evaluated at~$\lambda$,
simulates a vertex-weight of~$\lambda'$ in the sense that 
$\lambda'$ is the ratio between the contribution to the partition
function from independent sets containing a certain vertex and the contribution from independent sets that do not contain that vertex.
 Implementations are the cornerstone of  intractability results for the  problem of approximately evaluating the independent set polynomial.
 Our main result is that, for any $\lambda < - \lambda^*(\Delta)$, it is possible to implement a set of values that is dense over the reals.
 The result is tight in the sense that it
 is not possible to implement a set of values that is dense over the reals
 for any $\lambda> \lambda^*(\Delta)$. 
 Our result has already been used  in  a paper  with \bezakova{} (STOC 2018)
 to show that it is \#P-hard to approximate the evaluation of the independent set polynomial
 on graphs of degree at most~$\Delta$
 at any value $\lambda<-\lambda^*(\Delta)$. 
 In the appendix, we give an additional incomparable inapproximability result (strengthening
 the inapproximability bound to an exponential factor, but weakening the hardness to NP-hardness).
 \end{abstract}

\section{Introduction}

The independent set polynomial  is a fundamental object in computer science which has been studied with various motivations. From an algorithmic viewpoint, the evaluation of this  polynomial is crucial for  determining the applicability of the Lov\'{a}sz Local Lemma and thus obtaining efficient algorithms for
both
finding \cite{MoserTardos} and approximately counting \cite{Guo,Moitra} combinatorial objects with specific properties.

The independent set polynomial also arises in statistical physics, where it is
called the \emph{hard-core partition function}.
Given a graph~$G$,
the value of the independent set polynomial
of~$G$
at a point $\lambda$
is equal to the value of the partition function of the hard-core model
where the so-called ``activity parameter'' is equal to~$\lambda$.
We use the following notation.
Given a graph~$G$,   let $\mathcal{I}_G$ denote the set of independent sets in $G$.
The weight of an independent set $I\in \mathcal{I}_G$ is given by $\lambda^{|I|}$.
The hard-core   partition function with parameter $\lambda$ is defined as 
\begin{equation}\label{eq:defZG}
Z_G(\lambda):=\sum_{I\in \mathcal{I}_G}\lambda^{|I|}.
\end{equation}

The evaluation of the hard-core partition function for $\lambda<0$ has significant algorithmic interest due to its connection with the Lov\'{a}sz Local Lemma (LLL) and, more precisely, to the problem of checking when the LLL applies. 
Shearer, as part of his work \cite{Shearer} on the LLL, implicitly established that for every $\Delta\geq 2$, there is a threshold $\lambda^*(\Delta)$, given by $\lambda^*(\Delta)=(\Delta-1)^{\Delta-1}/\Delta^{\Delta}$, such that 
\begin{enumerate}
\item for all $\lambda\geq-\lambda^*(\Delta)$, for all graphs $G$ of maximum degree $\Delta$, it holds that $Z_G(\lambda)>0$.
\item for all $\lambda<-\lambda^*(\Delta)$, there exists a graph $G$ of maximum degree $\Delta$  such that $Z_G(\lambda)\leq 0$.
\end{enumerate}
We refer to the point $-\lambda^*(\Delta)$ as the \emph{Shearer threshold}.
 Scott and Sokal \cite{ScottSokal} were the first to realise the relevance of  Shearer's work to  the phase transitions of the hard-core model, and to make explicit Shearer's contribution in this context. 
From an algorithmic viewpoint, the Shearer threshold is tacitly present in most, if not all, applications of the (symmetric) LLL. In particular, Shearer \cite{Shearer} proved that $\lambda^*(\Delta)$ is the maximum value $p$  such that 
in a probability space where each event occurs with probability
at most~$p$
and each event is independent of all    except at most~$\Delta$ other events,
there is a positive probability that no events occur.

A key concept in the hard-core model is that of ``implementation''. 
A formal definition follows shortly. 
 Informally, an \emph{implementation} of a real number $\lambda'$ is a graph whose hard-core partition function~$Z_G(\lambda)$
  simulates a vertex-weight of~$\lambda'$ in the sense that 
$\lambda'$ is the ratio between the contribution to~$Z_G(\lambda)$   from independent sets containing a certain vertex and the contribution from independent sets that do not contain that vertex.
Implementation is the cornerstone of all inapproximability results for the independence polynomial/hard-core partition function.

The main result of this paper (Theorem~\ref{thm:precision}) is that
if $\lambda<-\lambda^*(\Delta)$ then, for any real number~$\lambda'$
and any desired error, there is a bipartite graph~$G$ of maximum degree at most~$\Delta$ that 
implements the activity~$\lambda'$ within the desired error.
 
We first give the necessary definitions, in order to state the result formally. We then describe the algorithmic consequences.

\subsection{Implementations}

Let $\lambda\in \mathbb{R}$ and let~$G=(V,E)$ be a  graph. Given a vertex $v\in V$,  define
\[\Zin_{G,v}(\lambda):=\sum_{I\in \mathcal{I}_G;\, v\in I}\lambda^{|I|} \quad \mbox{and}\quad  \Zout_{G,v}(\lambda):=\sum_{I\in \mathcal{I}_G;\, v\notin I}\lambda^{|I|}.\]
Thus, $\Zin_{G,v}(\lambda)$ is the contribution to the partition function $Z_G(\lambda)$ from those independent sets $I\in \mathcal{I}_G$ such that $v\in I$; similarly, $\Zout_{G,v}(\lambda)$ is the contribution to $Z_G(\lambda)$ from those $I\in \mathcal{I}_G$ such that $v\notin I$.  

\begin{definition}\label{def:Gimplement} [Implementing an activity~$\lambda'$]
Fix $\lambda\in\Rnz$. We say that the graph $G$ implements the activity $\lambda'\in \mathbb{R}$ with error $\epsilon>0$ if there is a vertex $v$ in $G$ such that $\Zout_{G,v}(\lambda)\neq 0$  and
\begin{enumerate}
\item $v$ has degree one in $G$, and
\item  \label{it:epslambda12}  $\displaystyle \Big|\frac{\Zin_{G,v}(\lambda)}{\Zout_{G,v}(\lambda)}-\lambda'\Bigr| \leq \epsilon $.
\end{enumerate}
We  refer to the vertex $v$ as the ``terminal'' of $G$. \end{definition}

Suppose that $G$ implements~$\lambda'$ with error~$0$ and that~$v$ is the terminal of~$G$.
It is clear from the definition 
that $\lambda'$ is the ratio between $\Zin_{G,v}(\lambda)$ and $\Zout_{G,v}(\lambda)$.
To make use of this fact, reductions use the graph~$G$ as a ``gadget'' to simulate the activity~$\lambda'$.
This technique is crucial in all inapproximability results for the hard-core partition function --- the key to
showing inapproximability for a fixed activity~$\lambda$ is to use~$\lambda$ to implement a dense
set of activities~$\lambda'$.
Our main result shows that this is possible for every~$\lambda$ below the Shearer threshold.

\newcommand{\statethmprecision}{Let $\Delta\geq 3$ and $\lambda<-\lambda^*(\Delta)$.  Then, for every $\lambda'\in\mathbb{R}$, for every $\epsilon>0$, there exists a bipartite graph $G$ of maximum degree at most  $\Delta$ that implements $\lambda'$ with error $\epsilon$. }
\begin{theorem}\label{thm:precision}
\statethmprecision
\end{theorem}

Theorem~\ref{thm:precision} provides a sharp threshold in the sense that the theorem would be false for
any   $\lambda>-\lambda^*(\Delta)$. 
 In particular, in the regime $\lambda>-\lambda^*(\Delta)$, 
Scott and Sokal \cite{ScottSokal} have shown that $\Zin_{G,v}(\lambda)/\Zout_{G,v}(\lambda)>-1$ for all graphs $G$ of maximum degree $\Delta$ (and all vertices $v$ in $G$).

\subsection{Algorithmic Consequences} \label{sec:alg}

Our Theorem~\ref{thm:precision} is 
a key ingredient in the work of \cite{OurComplex} which demonstrates  the intractability of approximating the independent set polynomial  all the way up to the
Shearer threshold.  

We first explain the result of~\cite{OurComplex} and then explain how it relies on   our work.
Taken together, Theorem~1 and Corollary~3 of~\cite{OurComplex} show that, for any $\Delta\geq 3$ and $\lambda<-\lambda^*(\Delta)$,
it is \#P-hard to approximate the absolute value of $Z_G(\lambda)$ within a factor of~$1.01$, even when the input graph~$G$,
which has maximum degree at most~$\Delta$,
is restricted to be bipartite.\footnote{The paper \cite{OurComplex} also considered the case where~$\lambda$ is a complex number, but that is
beyond the scope of this paper.}
They show that, with the same restriction,
it is also \#P-hard to determine whether $Z_G(\lambda)$ is positive.
As \cite{OurComplex} explains, the value ``$1.01$'' in the statement of their Theorem~1 is not important --- for any fixed $\epsilon>0$, 
Theorem~1 can be  ``powered up'' to show 
that it is \#P-hard to approximate~$|Z_G(\lambda)|$ within a factor of~$2^{n^{1-\epsilon}}$, where $n$ is the number of vertices of~$G$.

The role of  our current work in the result of~\cite{OurComplex} is that our Theorem, Theorem~\ref{thm:precision}, is  re-stated (and used) as Lemma~16 of~\cite{OurComplex}, and it is a key part of the proofs of those theorems.
 
 One weakness of the results of~\cite{OurComplex} is that it is not known how to strengthen their theorems 
  to show that it is \#P-hard to  
approximate~$|Z_G(\lambda)|$ \emph{within an exponential factor}. 
However, this turns out to be the case, and is a result of our current  work. 
 In the appendix of this paper, we again use our main result, Theorem~\ref{thm:precision},
to give another inapproximability result 
(Theorem~\ref{thm:neglambda}) which shows that,
with the same restrictions on~$\Delta$, $\lambda$ and~$G$, it is
NP-hard to approximate~$|Z_G(\lambda)|$ within an exponential factor.

Before turning to the proof of our main theorem,
we  briefly give the context of the algorithmic consequences.
In addition to the Shearer threshold $\lambda^*(\Delta)$, there is another key threshold --- the threshold of the uniqueness phase transition, which is
given by 
$\lambda_c(\Delta):=(\Delta-1)^{\Delta-1}/(\Delta-2)^{\Delta}$.

We have already noted that it is  \#P-hard to approximate the absolute value of $Z_G(\lambda)$ 
for $\lambda<-\lambda^*(\Delta)$. The following complementary results 
give a full characterisation of the complexity of this problem for  real activities~$\lambda$,
apart from at the critical values $-\lambda^*(\Delta)$ and $\lambda_c(\Delta)$.
 
\begin{enumerate}
\item For $-\lambda^*(\Delta)<\lambda<\lambda_c(\Delta)$, there is an FPTAS for approximating $Z_G(\lambda)$ on graphs $G$ of maximum degree $\Delta$; this follows by  \cite{Piyush,PR} for $-\lambda^*(\Delta)<\lambda<0$ 
(see also \cite{Piyushnote})
and by \cite{Weitz} for $0<\lambda<\lambda_c(\Delta)$.\footnote{The case $\lambda=0$ is trivial since $Z_G(\lambda)=1$ for all graphs $G$.}
\item For  $\lambda>\lambda_c(\Delta)$, it is $\NP$-hard to approximate $|Z_G(\lambda)|$ on graphs $G$ of maximum degree $\Delta$, even within an exponential factor; this follows by   \cite{SlySun}.
\end{enumerate}

 \section{Proof of  the theorem}\label{sec:precision}

We start with some useful definitions. 
If a graph~$G$ implements an activity $\lambda' \in \mathbb{R}$ with error~$0$
then we say that \emph{$G$ implements~$\lambda'$}.
We use the following definition to avoid explicit mention of the graph~$G$.

\begin{definition}\label{def:implement}
Let $\Delta\geq 2$ be an integer and $\lambda\in \Rnz$. We say that $(\Delta,\lambda)$ \emph{implements} the activity $\lambda'\in\mathbb{R}$ if there is a \emph{bipartite} graph $G$ of maximum degree at most  $\Delta$ which implements the activity $\lambda'$.
More generally, we say that $(\Delta,\lambda)$ implements a set of activities $S\subseteq \mathbb{R}$, if for every $\lambda'\in S$ it holds that $(\Delta,\lambda)$ implements $\lambda'$.
\end{definition}

Using this notation, Theorem~\ref{thm:precision} says
is $\Delta \geq 3$ and $\lambda<-\lambda^*(\Delta)$
then $(\Delta,\lambda)$ implements a set of activities $S$ which is dense in $\mathbb{R}$.

\subsection{Overview of the proof}\label{sec:overviewprecision}
In this section, we give an overview of the proof of Theorem~\ref{thm:precision}. We also
state the main lemmas that we need, and use them to prove
 the theorem. The remaining sections  contain a proof of the main lemmas.

Consider a degree bound $\Delta\geq 3$. 
In order to prove Theorem~\ref{thm:precision}, we will  show that for any fixed $\lambda<-\lambda^*(\Delta)$, we can implement a dense set of activities using bipartite graphs of maximum degree $\Delta$. At a very rough level, the proof of Theorem~\ref{thm:precision} splits into two regimes:
\begin{enumerate}
\item \label{it:regime1} when $\lambda<-\lambda^*(2)=-1/4$,
\item \label{it:regime2} when $-1/4\leq \lambda<-\lambda^*(\Delta)$.
\end{enumerate}
Roughly, in regime~\ref{it:regime1}, we will be able to use \emph{paths} to implement a dense set of activities. In regime~\ref{it:regime2}, we will first use a $(\Delta-1)$-ary tree to implement an activity $\lambda'<-1/4$. Then, using the activity $\lambda'$, we will be able to use the path construction of the first regime to implement a dense set of activities. 

Unfortunately, the actual proof is more intricate, since as it turns out there is a set $\Bc\subset \mathbb{R}$, dense
 in $(-\infty,-1/4)$, such that, if $\lambda\in \Bc$, paths exhibit a periodic behaviour in terms of implementing activities (and thus can only be used to implement a finite set of activities). The following lemma will be important in specifying the set $\Bc$ and understanding this periodic behaviour. The proof is given in Section~\ref{sec:paths}.
\newcommand{\statelemclosedformula}{Let $\lambda< -1/4$ and $\theta\in (0,\pi/2)$ be such that $\lambda= -1/(2\cos\theta)^2$. Then, the partition function of the path $P_n$ with $n$ vertices is given by
\begin{equation*}
Z_{P_n}(\lambda)=\frac{\sin((n+2)\theta)}{2^{n} (\cos\theta)^n\sin(2\theta)}.
\end{equation*}
}
\begin{lemma}\label{lem:closedformula}
\statelemclosedformula
\end{lemma}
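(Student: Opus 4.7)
The plan is to set up and solve the standard linear recurrence for the independent set polynomial of a path. Let $Z_n := Z_{P_n}(\lambda)$, and let $v$ be an endpoint of $P_n$. Splitting independent sets according to whether $v\in I$ or $v\notin I$ immediately gives
\begin{equation*}
Z_n = Z_{n-1} + \lambda\, Z_{n-2} \qquad (n\geq 2),
\end{equation*}
with base cases $Z_0 = 1$ and $Z_1 = 1+\lambda$. Since the right-hand side of the claimed formula also depends only on $n$ and $\theta$, it suffices to verify (i) that it satisfies this two-term recurrence, and (ii) that it matches the two base cases; unique determination then finishes the proof.

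First I would verify the recurrence directly. Define $f(n) := \sin((n+2)\theta) / \bigl(2^n(\cos\theta)^n \sin(2\theta)\bigr)$, and compute
\begin{equation*}
f(n-1) + \lambda f(n-2) = \frac{1}{2^n(\cos\theta)^n \sin(2\theta)}\Bigl[ 2\cos\theta\cdot \sin((n+1)\theta) - \sin(n\theta)\Bigr],
\end{equation*}
using $\lambda = -1/(2\cos\theta)^2$ to clear the factors of $2\cos\theta$. The bracketed expression collapses to $\sin((n+2)\theta)$ by the product-to-sum identity $2\cos\theta\,\sin((n+1)\theta) = \sin((n+2)\theta) + \sin(n\theta)$, so $f(n-1)+\lambda f(n-2)=f(n)$, as required.

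Next I would check the base cases. For $n=0$, the formula gives $f(0) = \sin(2\theta)/\sin(2\theta) = 1 = Z_0$. For $n=1$, using $\sin(3\theta) = \sin\theta\,(4\cos^2\theta - 1)$ and $\sin(2\theta)=2\sin\theta\cos\theta$, one gets
\begin{equation*}
f(1) = \frac{\sin(3\theta)}{2\cos\theta\,\sin(2\theta)} = \frac{4\cos^2\theta - 1}{4\cos^2\theta} = 1 - \frac{1}{(2\cos\theta)^2} = 1+\lambda = Z_1,
\end{equation*}
which matches. Since both $Z_n$ and $f(n)$ satisfy the same linear recurrence with the same two initial values, they coincide for all $n\geq 0$.

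There is no real obstacle here; the only mildly non-obvious step is recognizing that the substitution $\lambda = -1/(2\cos\theta)^2$ is natural precisely because it forces the characteristic roots $x^2 - x + 1/(2\cos\theta)^2 = 0$ of the recurrence to be the complex conjugates $e^{\pm i\theta}/(2\cos\theta)$, which is exactly what produces the sinusoidal closed form. Everything else is routine verification via a single product-to-sum identity.
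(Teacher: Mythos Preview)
Your proof is correct and follows essentially the same approach as the paper: both set up the linear recurrence $Z_n=Z_{n-1}+\lambda Z_{n-2}$, verify that the claimed sinusoidal expression satisfies it via the same product-to-sum identity, and then match two initial values. The only cosmetic difference is the choice of base cases (the paper checks $n=-1,0$ after extending the sequence, whereas you check $n=0,1$), which makes no substantive difference.
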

 The ``bad'' set $\Bc$ of activities (for which paths exhibit a periodic behaviour) can be read off from Lemma~\ref{lem:closedformula}. To make this precise, let
\begin{equation}\label{eq:bad}
\Bc:=\Big\{\lambda \in \mathbb{R}\mid \lambda=-\frac{1}{4(\cos\theta)^2}\mbox{ for some }\theta \in(0,\pi/2)\mbox{ which is a \emph{rational multiple} of }\pi\Big\}.
\end{equation}
Note, for example, that $-1,-1/2,-1/3\in \Bc$ (set $\theta=\pi/3,\pi/4,\pi/6$, respectively). For $\lambda<-1/4$, it is not hard to infer from Lemma~\ref{lem:closedformula} that  the ratio $\frac{\Zin_{P_n,v}(\lambda)}{\Zout_{P_n,v}(\lambda)}$ is equal to $-\frac{1}{2\cos \theta}\frac{\sin(n\theta)}{\sin((n+1)\theta)}$ (cf. the upcoming equation \eqref{eq:ZNWN}). Therefore, when $\lambda\in\Bc$ or equivalently $\theta$ is a rational multiple of $\pi$, the ratio is periodic in terms of the number of vertices $n$ in the path.  On the other hand, when $\lambda<-1/4$ and $\lambda\notin \Bc$, then we can show that the ratio is  dense in $\mathbb{R}$ as $n$ varies (this follows essentially from the fact that $\{n\theta \bmod 2\pi\mid n\in \mathbb{Z}\}$ is dense on the circle when $\theta$ is irrational) and hence we can use paths to implement a dense set of activities. This is the scope of the next lemma, which is proved in Section~\ref{sec:paths}.
\newcommand{\statelempath}{Let $\lambda<-1/4$ be such that $\lambda\notin \Bc$. Let $P_n$ denote a path with $n$ vertices and let $v$ be one of the endpoints of $P_n$. Then, for every $\lambda'\in \mathbb{R}$, for every $\epsilon>0$, there exists $n$ such that
\begin{equation*}
\Big|\frac{\Zin_{P_n,v}(\lambda)}{\Zout_{P_n,v}(\lambda)}-\lambda'\Big|\leq \epsilon.
\end{equation*}
}
\begin{lemma}\label{lem:path}
\statelempath
\end{lemma}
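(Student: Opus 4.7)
My plan is to first reduce the ratio $\Zin_{P_n,v}(\lambda)/\Zout_{P_n,v}(\lambda)$ to the closed form indicated in the excerpt and then exploit equidistribution.

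\medskip

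First I would decompose by conditioning on the endpoint $v$. If $v$ is a leaf of $P_n$ with unique neighbour $u$, then $\Zout_{P_n,v}(\lambda)=Z_{P_{n-1}}(\lambda)$ (deleting $v$ leaves a path on $n-1$ vertices) and $\Zin_{P_n,v}(\lambda)=\lambda\, Z_{P_{n-2}}(\lambda)$ (including $v$ forces $u$ out and leaves a path on $n-2$ vertices), for $n\geq 2$. Applying Lemma~\ref{lem:closedformula} and writing $\lambda=-1/(2\cos\theta)^2$ with $\theta\in(0,\pi/2)$ an \emph{irrational} multiple of $\pi$ (this is what $\lambda\notin\Bc$ means), the trigonometric factors collapse to
\[
\frac{\Zin_{P_n,v}(\lambda)}{\Zout_{P_n,v}(\lambda)}
\;=\;\lambda\cdot 2\cos\theta\cdot \frac{\sin(n\theta)}{\sin((n+1)\theta)}
\;=\;-\frac{1}{2\cos\theta}\cdot\frac{\sin(n\theta)}{\sin((n+1)\theta)}.
\]
Note $\sin((n+1)\theta)\neq 0$ since $\theta$ is an irrational multiple of $\pi$.

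\medskip

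The task therefore reduces to showing that the set
\[
S \;:=\; \Bigl\{\,\tfrac{\sin(n\theta)}{\sin((n+1)\theta)}\,:\,n\geq 2\,\Bigr\}
\]
is dense in $\mathbb{R}$, since multiplying by the nonzero constant $-1/(2\cos\theta)$ then gives a dense set of ratios. Consider the $\pi$-periodic function $f(x)=\sin x/\sin(x+\theta)$ on the circle $\mathbb{R}/\pi\mathbb{Z}$. On the fundamental domain $[0,\pi)$ its only singularity lies at $x=\pi-\theta$; by continuity and the intermediate value theorem, $f$ surjects onto $[0,+\infty)$ on $[0,\pi-\theta)$ and onto $(-\infty,0]$ on $(\pi-\theta,\pi)$, so $f$ takes every real value at some regular point $x^*\in[0,\pi)\setminus\{\pi-\theta\}$.

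\medskip

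Given $\lambda'\in\mathbb{R}$ and $\epsilon>0$, set $y:=-2\cos\theta\cdot\lambda'$ and pick $x^*\in[0,\pi)\setminus\{\pi-\theta\}$ with $f(x^*)=y$. Since $f$ is continuous at $x^*$, there is an open neighbourhood $U\subset[0,\pi)$ of $x^*$ on which $|f(x)-y|<2|\cos\theta|\,\epsilon$. Because $\theta/\pi$ is irrational, Weyl's equidistribution theorem (equivalently, Kronecker's theorem) implies that $\{n\theta\bmod\pi:n\geq 2\}$ is dense in $[0,\pi)$; hence some $n\geq 2$ satisfies $n\theta\bmod\pi\in U$. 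By the $\pi$-periodicity of $f$, $f(n\theta)=f(n\theta\bmod\pi)$, so
\[
\Bigl|\tfrac{\Zin_{P_n,v}(\lambda)}{\Zout_{P_n,v}(\lambda)}-\lambda'\Bigr|
=\Bigl|\tfrac{-1}{2\cos\theta}\bigl(f(n\theta)-y\bigr)\Bigr|<\epsilon,
\]
as required. The only real content is the closed-form reduction above and the standard fact that irrationality of $\theta/\pi$ gives density modulo $\pi$; the main technical care point is simply verifying that $y$ is attained at a regular (non-singular) point of $f$, which is immediate from the sign analysis on $[0,\pi-\theta)$ and $(\pi-\theta,\pi)$.
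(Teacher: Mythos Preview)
Your argument is correct and essentially equivalent in spirit to the paper's, but the execution is different and, in fact, somewhat cleaner. Both proofs first reduce to the closed form $-\tfrac{1}{2\cos\theta}\cdot\tfrac{\sin(n\theta)}{\sin((n+1)\theta)}$ and then exploit the irrationality of $\theta/\pi$. The paper proceeds by an explicit parametrisation: it proves an auxiliary linear-algebra lemma (Lemma~\ref{lem:existence}) to write any target $w$ as $\sin\theta^*/\sin(\theta+\theta^*)$ for some $\theta^*\in(0,2\pi)$, then uses Dirichlet's approximation theorem to build a small nonzero step $n'\theta-2m'\pi$ and iterates it to land near $\theta^*$ modulo $2\pi$, finishing with a mean-value continuity estimate (Lemma~\ref{lem:crudeapprox}). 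You instead work with the $\pi$-periodic function $f(x)=\sin x/\sin(x+\theta)$, use the intermediate value theorem on the two continuous pieces of $[0,\pi)\setminus\{\pi-\theta\}$ to get surjectivity onto $\mathbb{R}$, and then invoke Kronecker/Weyl density of $\{n\theta\bmod\pi\}$ directly. Your route avoids both auxiliary lemmas and the explicit Dirichlet construction; the paper's route is more constructive in that it pins down $\theta^*$ algebraically. One small quibble: on $(\pi-\theta,\pi)$ the function $f$ actually maps onto $(-\infty,0)$ rather than $(-\infty,0]$, but this is harmless since $0=f(0)$ is already attained on the other piece.
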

When $\lambda\in\Bc$, we can no longer use paths to implement a dense set of activities, as we explained earlier, and we need to use a more elaborate argument. A key observation is that, for $\lambda\in \Bc$,  the partition function of a path of appropriate length is equal to 0. In particular, we have the following simple corollary of Lemma~\ref{lem:closedformula}.
\begin{corollary}\label{lem:bad}
Let $\lambda<-1/4$ be such that $\lambda\in \Bc$. Denote by $P_n$ the path with $n$ vertices. Then, there is an integer $n\geq 1$ such that the partition function of the path $P_{n}$ is zero, i.e., $Z_{P_{n}}(\lambda)=0$.
\end{corollary}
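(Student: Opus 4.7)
The plan is to read the conclusion straight off Lemma~\ref{lem:closedformula}. Since $\lambda \in \mathcal{B}$, the definition in \eqref{eq:bad} gives $\lambda = -1/(2\cos\theta)^2$ for some $\theta \in (0,\pi/2)$ that is a rational multiple of $\pi$. Write this $\theta$ in lowest terms as $\theta = p\pi/q$ with $\gcd(p,q)=1$ and $p,q\geq 1$. Because $\theta$ lies \emph{strictly} between $0$ and $\pi/2$, the cases $q=1$ (which would force $\theta$ to be an integer multiple of $\pi$) and $q=2$ (which would force $\theta = \pi/2$) are both ruled out, so $q \geq 3$.

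Now set $n := q - 2$, which is a positive integer. Applying Lemma~\ref{lem:closedformula} with this $\theta$ and $n$ gives
$$Z_{P_n}(\lambda) \;=\; \frac{\sin((n+2)\theta)}{2^n (\cos\theta)^n \sin(2\theta)}.$$
The numerator equals $\sin(q \cdot p\pi/q) = \sin(p\pi) = 0$. Meanwhile, since $\theta\in(0,\pi/2)$ we have $\cos\theta > 0$ and $\sin(2\theta) > 0$, so the denominator is strictly positive and in particular nonzero. Hence $Z_{P_n}(\lambda) = 0$, which is precisely the required vanishing.

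There is no genuine obstacle here: the entire content of the corollary is already packaged in the closed-form expression of Lemma~\ref{lem:closedformula}, and the only additional ingredient is the elementary observation that a rational multiple of $\pi$ lying strictly inside $(0,\pi/2)$ has denominator at least $3$ when written in lowest terms, which guarantees that $n = q-2$ is a valid (positive) choice.
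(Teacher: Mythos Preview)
Your proof is correct and follows essentially the same approach as the paper: write $\theta=p\pi/q$ in lowest terms, observe $q\geq 3$ since $\theta\in(0,\pi/2)$, and apply Lemma~\ref{lem:closedformula} with $n=q-2$ to make the numerator $\sin(p\pi)$ vanish. The only difference is that you spell out a few details (why $q\geq 3$, why the denominator is nonzero) that the paper leaves implicit.
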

\begin{proof}
Since $\lambda \in\Bc$, there exists $\theta\in(0,\pi/2)$ which is a rational multiple of $\pi$ such that $\lambda=-1/(2\cos\theta)^2$. Write $\theta=\frac{p}{q}\pi$ for positive integers $p,q$ satisfying $\gcd(p,q)=1$. Note that $q\geq 3$ since $\theta\in(0,\pi/2)$. By Lemma~\ref{lem:closedformula}, we have that $Z_{P_{q-2}}(\lambda)=0$, as wanted.
\end{proof}
Having a path $P$ whose partition function equals 0 allows us to implement the activity $-1$: indeed, for an endpoint $v$ of the path $P$, we have that
$\Zin_{P,v}(\lambda)+\Zout_{P,v}(\lambda)=Z_P(\lambda)=0$,
and hence $P$, with terminal $v$, implements $\frac{\Zin_{P,v}(\lambda)}{\Zout_{P,v}(\lambda)}=-1$ (note, we will later ensure that $P$ is such that $\Zout_{P,v}(\lambda)\neq 0$). 
Recall that $-1\in \Bc$, so the theorem doesn't follow from Lemma~\ref{lem:path}.
Instead,
after
generalising the partition function to allow non-uniform activities, 
and then using the given activity~$\lambda$, and also the activity~$-1$ that we have already implemented,
a
 somewhat ad-hoc gadget (see Figure~\ref{fig:activityp1}) allows us to also implement the activity $+1$. Using these two implemented activities, $-1$ and $+1$, we then show how to implement \emph{all} rational numbers using graphs whose structure resembles a caterpillar (the proof is inspired by the ``ping-pong'' lemma in group theory, used to establish free subgroups).  We carry out this scheme in a more general setting where, instead of a path, we have a tree whose partition function is zero (this will also be relevant in the regime $\lambda>-1/4$). More precisely, we have the following lemma.
\newcommand{\statelemzero}{Suppose that $\lambda\in \Rnz$ and that $T$ is a tree with $Z_{T}(\lambda)=0$. Let $d$ be the maximum degree of $T$ and let $\Delta=\max\{d,3\}$. Then, $(\Delta,\lambda)$ implements a dense set of activities in $\mathbb{R}$.}
\begin{lemma}\label{lem:zero}
\statelemzero
\end{lemma}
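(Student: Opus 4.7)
The plan is to realize a dense set of activities in three stages: first implement the activity $-1$ from the given tree $T$; then combine copies of that gadget to implement $+1$ via an ad hoc construction; finally use both gadgets as ``legs'' in a caterpillar-style graph whose spine is a path, and invoke a ping-pong-style argument to show that the resulting family of implemented activities is dense in $\mathbb{R}$. All graphs constructed are trees (hence bipartite), and every vertex has degree at most $\Delta = \max\{d,3\}$: a gadget terminal has degree $1$, so identifying it with a spine vertex of path-degree at most $2$ yields a combined degree of at most $3 \leq \Delta$.

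\emph{Stage 1 (implementing $-1$).} I would induct on $|V(T)|$. Pick any leaf $v$ of $T$. If $\Zout_{T,v}(\lambda) \neq 0$, the hypothesis $Z_T(\lambda) = \Zin_{T,v}(\lambda) + \Zout_{T,v}(\lambda) = 0$ gives $\Zin_{T,v}(\lambda)/\Zout_{T,v}(\lambda) = -1$, so $T$ with terminal $v$ already implements $-1$. Otherwise $\Zout_{T,v}(\lambda) = Z_{T - v}(\lambda) = 0$, and I would iterate on the strictly smaller tree $T - v$, whose maximum degree is still at most $d$. The iteration terminates at a pair $(T_1, v_1)$ with the desired property, unless it is forced down to a single vertex; in that degenerate case $\lambda = -1$, and a separate short ad hoc gadget (for instance a four-vertex path, whose endpoint ratio is easily checked to equal $-1$ at $\lambda = -1$) supplies the implementation of $-1$.

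\emph{Stage 2 (implementing $+1$).} Using $T_1$ as a building block, I would construct the explicit tree $T_{+1}$ depicted in Figure~\ref{fig:activityp1}, formed from a constant number of copies of $T_1$ joined by a few auxiliary vertices. Setting $a := \Zout_{T_1, v_1}(\lambda) \neq 0$ so that $\Zin_{T_1, v_1}(\lambda) = -a$, the claim that the designated terminal of $T_{+1}$ has ratio $+1$ reduces to a short polynomial identity in $\lambda$ and $a$ that can be verified by direct expansion. The auxiliary branching is kept at or below $3$, so together with the internal degrees inherited from $T_1$ the overall maximum degree of $T_{+1}$ is at most $\Delta$.

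\emph{Stage 3 (density via ping-pong).} Given any target $\lambda' \in \mathbb{R}$ and any $\epsilon > 0$, I would assemble a caterpillar: take a spine path $v_0 v_1 \cdots v_n$ with $v_0$ serving as the terminal, and at each interior spine vertex $v_i$ (for $i \geq 1$) optionally identify the terminal of a copy of $T_1$ or $T_{+1}$ with $v_i$. A short tree-recursion computation shows that attaching a gadget of terminal ratio $\mu$ at $v_i$ contributes the factor $\lambda/\mu$ to the product $\prod_{\text{children}}(1 + R_{\text{child}})$ in the ratio recursion at $v_i$, which is equivalent to replacing the activity at $v_i$ by $\mu$. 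Consequently the ratio at $v_0$ is a continued-fraction-type expression $R_0 = \lambda/(1 + \mu_1/(1 + \mu_2/(1 + \cdots)))$ in a chosen sequence $(\mu_1, \ldots, \mu_n) \in \{\lambda, -1, +1\}^n$. The main obstacle, and the crux of the proof, is showing that as $n$ and the $\mu_i$ vary the set of achievable $R_0$ is dense in $\mathbb{R}$. The plan is to argue in the spirit of the ping-pong lemma: partition $\mathbb{R}$ into a few disjoint regions on which the M\"obius maps $R \mapsto \pm 1/(1 + R)$ act in controlled ways, and inductively steer any target rational $\lambda'$ back to an initial value in $\{\lambda, -1, +1\}$ via a backwards sequence of such maps, at each step choosing the sign $\mu_i$ so as to avoid the singular value $R_i = -1$. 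Since every graph produced along the way is a bipartite tree of maximum degree at most $\Delta$, this completes the proof that $(\Delta, \lambda)$ implements a dense set of activities.
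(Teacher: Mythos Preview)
Your three-stage plan is exactly the paper's: Lemma~\ref{lem:minusone} (Parts I and II) for implementing $-1$ and $+1$, then Lemmas~\ref{lem:implementingS} and~\ref{lem:SequalsQ} for the caterpillar/ping-pong. A few corrections to Stage~3: the maps that actually arise from your continued fraction are $r\mapsto 1/(1+\mu r)$ with $\mu=\pm 1$, i.e.\ $f_\pm(r)=1/(1\pm r)$, not $r\mapsto \pm 1/(1+r)$; the natural starting value of the recursion is $r=0$ (obtained by appending one extra vertex to the $-1$-gadget, so that $\Zout$ at that vertex equals $Z_{T_1}=0$), not an element of $\{\lambda,-1,+1\}$; and the ping-pong step, which you rightly flag as the crux, is made precise in the paper not by a region-partition but by a Euclidean-style descent on $|p|+|q|$ for a target rational $p/q$, driven by the identities $f_-\!\circ f_-\!\circ f_+(x)=-x$ and $f_-^{-1}=f_-\!\circ f_-$, which steer any rational to the finite set $\{-1,0,1/2,1,2\}$.
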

The proof of Lemma~\ref{lem:zero} is given in Section~\ref{sec:zero}. It is immediate to combine Lemma~\ref{lem:path}, Corollary~\ref{lem:bad}, and Lemma~\ref{lem:zero} to obtain the following.
\begin{lemma}\label{lem:largelambda}
Let $\lambda<-1/4$. Then, for $\Delta=3$, $(\Delta,\lambda)$ implements a dense set of activities in $\mathbb{R}$.
\end{lemma}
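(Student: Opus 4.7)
The plan is to dichotomize on whether $\lambda$ lies in the ``bad'' set $\Bc$ from \eqref{eq:bad} and to apply a different ingredient in each case, while keeping the maximum degree equal to $3$ throughout.

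\textbf{When $\lambda \notin \Bc$}, I would take the implementing graph itself to be a path $P_n$ with terminal $v$ chosen to be one of its endpoints, and invoke Lemma~\ref{lem:path} directly: for any target $\lambda' \in \mathbb{R}$ and accuracy $\epsilon > 0$, it produces an $n$ making $\Zin_{P_n,v}(\lambda)/\Zout_{P_n,v}(\lambda)$ within $\epsilon$ of $\lambda'$. To confirm that this is a genuine implementation in the sense of Definition~\ref{def:Gimplement}, I need to check that $v$ has degree $1$ (automatic for $n\geq 2$, and the degenerate case $n=1$ only realises the single ratio $\lambda$ and may be discarded) and that $\Zout_{P_n,v}(\lambda)\neq 0$. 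The latter equals $Z_{P_{n-1}}(\lambda)$, which by Lemma~\ref{lem:closedformula} vanishes only when $(n+1)\theta$ is a multiple of $\pi$; since $\theta\in(0,\pi/2)$, this would force $\theta$ to be a rational multiple of $\pi$ and hence $\lambda\in\Bc$, contradicting our assumption. Since paths are bipartite and have maximum degree $2\leq 3$, every approximated activity is implemented by a bipartite graph of maximum degree $3$, as required.

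\textbf{When $\lambda \in \Bc$}, Corollary~\ref{lem:bad} provides a path $T=P_n$ with $Z_T(\lambda)=0$. This $T$ is a tree of maximum degree $d=2$, so the assumption $\Delta=\max\{d,3\}=3$ of Lemma~\ref{lem:zero} is met, and its conclusion yields immediately that $(3,\lambda)$ implements a dense set of activities in $\mathbb{R}$.

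The proof is thus a short case split that wires together the three cited ingredients, and I do not anticipate any internal obstacle: the substantive work has already been packaged into Lemmas~\ref{lem:path} and~\ref{lem:zero}, which handle respectively the ``irrational'' and ``rational'' regimes of path partition functions predicted by Lemma~\ref{lem:closedformula}. The only points that require care during the final assembly are the bookkeeping checks mentioned above---degree of the terminal, non-vanishing of $\Zout$, bipartiteness, and the degree bound $\Delta=3$---all of which follow transparently from the path structure and from the closed formula of Lemma~\ref{lem:closedformula}.
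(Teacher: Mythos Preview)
Your proposal is correct and follows essentially the same approach as the paper: dichotomize on whether $\lambda\in\Bc$, invoking Lemma~\ref{lem:path} directly when $\lambda\notin\Bc$ and combining Corollary~\ref{lem:bad} with Lemma~\ref{lem:zero} when $\lambda\in\Bc$. The paper's proof is terser and omits the bookkeeping checks you spell out (degree of the terminal, non-vanishing of $\Zout_{P_n,v}(\lambda)$, bipartiteness), but these are exactly the right verifications and your argument for them is sound.
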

\begin{proof}
We may assume that $\lambda\in \Bc$, otherwise the result follows directly from Lemma~\ref{lem:path}. For $\lambda\in \Bc$, we have by Corollary~\ref{lem:bad} a path $P$ such that $Z_P(\lambda)=0$. Since $P$ has maximum degree 2, applying Lemma~\ref{lem:zero} gives the desired conclusion.
\end{proof}
Note that Lemma~\ref{lem:largelambda} applies only for values of $\lambda$ which are far from  the threshold $-\lambda^*(\Delta)$ for any $\Delta\geq 3$ and thus it should not be surprising that we can implement a dense set of activities using graphs of maximum degree 3.  This highlights the next obstacle that we have to address: for general degree bounds $\Delta\geq 3$, to get all the way to the threshold $-\lambda^*(\Delta)$ we need to use graphs with maximum degree $\Delta$ (rather than just 3) to have some chance of  implementing 
a dense set of
activities.

Analyzing more complicated graphs for $\Delta\geq 3$ and $-1/4\leq \lambda<-\lambda^*(\Delta)$  might sound daunting given the story for $\lambda<-1/4$, but it turns out that all we need to do is construct a (bipartite) graph $G$ of maximum degree $\Delta$ that implements an activity $\lambda'<-1/4$. Then, to show that $(\Delta,\lambda)$ implements a dense set of activities, we only need to consider whether $\lambda'\in\Bc$. If $\lambda'\notin\Bc$, we can argue by decorating the paths from Lemma~\ref{lem:path} using the graph $G$. Otherwise, if $\lambda'\in\Bc$, we can first construct a tree $T$ of maximum degree $\Delta$ such that $Z_T(\lambda)=0$ (by decorating the path from Corollary~\ref{lem:bad}), and then invoke Lemma~\ref{lem:zero}.  Thus, we are left with the task of implementing  an activity $\lambda'<-1/4$. For that, we combine appropriately $(\Delta-1)$-ary trees of appropriate depth, which can be analysed relatively simply using a recursion. (A technical detail here is that, initially, we are not able to implement this boosted activity $\lambda'$ in the sense of Definition~\ref{def:implement} since the terminal of the relevant tree has degree bigger than 1; nevertheless, the degree of the terminal is at most~$\Delta-2$, so  it can be combined with the paths without overshooting the degree bound $\Delta$.)

Putting together these pieces yields the following lemma, which is proved in Section~\ref{sec:darytree}.
\newcommand{\statelemsmalllambda}{Let $\Delta\geq 3$ and $-1/4\leq \lambda<-\lambda^*(\Delta)$. Then,  $(\Delta,\lambda)$ implements a dense set of activities in $\mathbb{R}$.
}
\begin{lemma}\label{lem:smalllambda}
\statelemsmalllambda
\end{lemma}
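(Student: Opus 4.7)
The plan is to first use the freedom provided by $\lambda<-\lambda^*(\Delta)$ to implement some auxiliary activity $\lambda'<-1/4$ via a bipartite graph of maximum degree $\Delta$, and then to bootstrap by running the Lemma~\ref{lem:path}/Lemma~\ref{lem:zero} machinery at activity $\lambda'$ in place of $\lambda$.

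\textbf{Boosting step.}  I consider the $(\Delta-1)$-ary tree $T_h$ of depth $h$ rooted at $r$, so that $r$ has degree $\Delta-1$, every internal vertex has degree $\Delta$, and every leaf has degree $1$.  The ratio $R_h:=\Zin_{T_h,r}(\lambda)/\Zout_{T_h,r}(\lambda)$ obeys the familiar tree recursion
\begin{equation*}
R_0=\lambda,\qquad R_h=\frac{\lambda}{(1+R_{h-1})^{\Delta-1}}.
\end{equation*}
Shearer's characterisation of $-\lambda^*(\Delta)$ (as formulated by \cite{ScottSokal}) says that this orbit stays in $(-1,0)$ for all $h$ when $\lambda\geq-\lambda^*(\Delta)$, whereas below the threshold the orbit leaves this interval at some finite depth $h^\ast$.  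I would pick $h$ near this escape depth---possibly attaching $\Delta-2$ copies of $T_{h-1}$ to a fresh root $\hat r$ to gain an extra tunable parameter---so that the resulting bipartite tree $\hat T$ has root-ratio $\hat R\in(-1,4|\lambda|-1\,]$ (this target interval is nonempty because $|\lambda|\leq 1/4$).  Attaching a pendant vertex $u$ to $\hat r$ then produces a bipartite graph $G_0$ of maximum degree at most $\Delta$ whose terminal $u$ has degree $1$ and implements $\lambda':=\lambda/(1+\hat R)<-1/4$.

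\textbf{Bootstrap.}  With $G_0$ in hand, I split on whether $\lambda'\in\mathcal{B}$.  If $\lambda'\notin\mathcal{B}$, for each $n\geq 2$ consider the path $P_n=(v_1,\ldots,v_n)$ with non-uniform activity vector $\lambda_{v_1}=\lambda$ and $\lambda_{v_i}=\lambda'$ for $i\geq 2$.  Lemma~\ref{lem:nonuniform}, applied with $G_0$ attached only at $v_2,\ldots,v_n$, yields a bipartite graph $G'_n$ of maximum degree at most $\Delta$ at uniform activity $\lambda$ in which $v_1$ retains degree $1$, and a direct computation gives
\begin{equation*}
\frac{\Zin_{G'_n,v_1}(\lambda)}{\Zout_{G'_n,v_1}(\lambda)}=\frac{\lambda}{1+r_{n-1}},
\end{equation*}
where $r_{n-1}$ is the endpoint-ratio of the uniform-$\lambda'$ path $P_{n-1}$.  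By Lemma~\ref{lem:path} at activity $\lambda'$, the set $\{r_{n-1}\}_{n\geq 2}$ is dense in $\mathbb{R}$; since $x\mapsto\lambda/(1+x)$ is a homeomorphism $\mathbb{R}\setminus\{-1\}\to\mathbb{R}\setminus\{0\}$ whose image accumulates at $0$ as $|x|\to\infty$, the set of implemented activities is dense in $\mathbb{R}$.  If $\lambda'\in\mathcal{B}$, Corollary~\ref{lem:bad} supplies an $m$ with $Z_{P_m}(\lambda')=0$; simulating this uniform-$\lambda'$ path via $G_0$ (Lemma~\ref{lem:nonuniform}) produces a bipartite tree $T^\ast$ of maximum degree at most $\Delta$ with $Z_{T^\ast}(\lambda)=0$, and Lemma~\ref{lem:zero} applied to $T^\ast$ finishes the argument.

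\textbf{Main obstacle.}  The principal technical step is the boosting: quantitatively confirming that, for every $\lambda\in[-1/4,-\lambda^*(\Delta))$, the tree recursion---together with the extra degree of freedom provided by attaching $\Delta-2$ subtrees at a fresh root---realises a root-ratio $\hat R$ in the narrow window $(-1,4|\lambda|-1\,]$.  This requires a careful analysis of the orbit near the neutral fixed point $R=-1/\Delta$ of the critical recursion at $\lambda=-\lambda^*(\Delta)$; the target window shrinks as $\Delta$ grows, and one must ensure it is accessible uniformly in $\Delta\geq 3$, handling in particular the edge case where $-1/\Delta$ lies above the window so that the iterates must first drift far from the fixed point before crossing into it.
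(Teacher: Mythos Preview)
Your overall two-stage plan---boost to an auxiliary activity below $-1/4$, then run the path machinery and split on $\mathcal{B}$---is exactly the paper's strategy, and your bootstrap half is essentially correct. The gap is precisely where you flag it, in the boosting, and it is a genuine obstacle for your chosen architecture rather than a routine missing calculation.

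The difficulty is that you insist on a degree-$1$ terminal for $G_0$ so that Lemma~\ref{lem:nonuniform} applies. This forces the boosted activity to have the form $\lambda'=\lambda/(1+\hat R)$, hence the narrow target $\hat R\in(-1,\,4|\lambda|-1]$. You give no mechanism for landing in that window: the tree recursion is a discrete orbit that may jump across it. Concretely, attaching a pendant to the root of $T_h$ gives $\lambda'=\lambda\,x_h$ with $x_h=\Zout_{T_h,\rho}/Z_{T_h}$, and the only available bound $x_h\ge(1/|\lambda|)^{1/(\Delta-1)}$ yields $|\lambda'|\ge|\lambda|^{(\Delta-2)/(\Delta-1)}$, which already falls below $1/4$ when $\Delta=4$ and $|\lambda|$ is near $27/256$. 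Your ``$\Delta-2$ subtrees on a fresh root'' variant gives $\hat R=\lambda(x_h)^{\Delta-2}$ and then $\lambda'=\lambda/(1+\hat R)$; here the danger is the opposite, since $x_h$ at the first escape depth can be arbitrarily large, driving $\hat R<-1$ and making $\lambda'$ positive.

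The paper sidesteps the window problem by \emph{not} building a degree-$1$ gadget at all. Working with $x_h=\Zout_{T_h,\rho}(\lambda)/Z_{T_h}(\lambda)$ and $A=|\lambda|$, $d=\Delta-1$, it shows via the one-variable inequality $f(x)-x=(1-x+Ax^{d+1})/(1-Ax^d)>0$ on $[0,(1/A)^{1/d})$ (valid exactly because $A>d^d/(d+1)^{d+1}$) that the increasing sequence $x_h$ must reach some $x_h\ge(1/A)^{1/d}$. It then attaches $\Delta-2$ copies of $T_h$ directly to \emph{every} vertex of the path $P_n$, using the full degree budget, so that $Z_{G_n}(\lambda)=(Z_{T_h}(\lambda))^{(\Delta-2)n}\,Z_{P_n}(\hat\lambda)$ with effective activity $\hat\lambda=\lambda(x_h)^{\Delta-2}$. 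The chain $\hat\lambda\le-A^{1/d}<-d/(d+1)^{(d+1)/d}<-1/4$ follows from the single inequality $x_h\ge(1/A)^{1/d}$, uniformly in $\Delta\ge3$; no overshoot control is needed. Because the path endpoint $v$ has degree $\Delta-1$ in $G_n$, a final pendant (Case~2) or a direct appeal to Lemma~\ref{lem:zero} on the tree $G_n$ (Case~1) finishes the argument, in place of Lemma~\ref{lem:nonuniform}.
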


Using Lemmas~\ref{lem:largelambda} and~\ref{lem:smalllambda}, the proof of 
our main theorem, Theorem~\ref{thm:precision}, is now immediate. We restate the theorem here for convenience.
\begin{thmprecision}
\statethmprecision
\end{thmprecision}
\begin{proof}
If $\lambda<-1/4$, the lemma follows by Lemma~\ref{lem:largelambda}. If $-1/4\leq \lambda<-\lambda^*(\Delta)$, the lemma follows by Lemma~\ref{lem:smalllambda}.
\end{proof}

In the remainder of this section, we give the proofs of Lemmas~\ref{lem:closedformula},~\ref{lem:path}, \ref{lem:zero}, and  \ref{lem:smalllambda}.

\subsection{Using paths for implementing activities---Proof of Lemmas~\ref{lem:closedformula} and~\ref{lem:path}}\label{sec:paths}
In this section, we prove Lemmas~\ref{lem:closedformula} and~\ref{lem:path}. We start with Lemma~\ref{lem:closedformula}, which we restate here for convenience.
\begin{lemclosedformula}
\statelemclosedformula
\end{lemclosedformula}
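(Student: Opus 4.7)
The plan is to prove the formula by induction on $n$, using the standard linear recurrence satisfied by the partition function of paths. Let $Z_n := Z_{P_n}(\lambda)$. Fix an endpoint $v$ of $P_n$ and partition independent sets according to whether $v \in I$ or $v \notin I$: if $v \notin I$, the remaining graph is $P_{n-1}$, contributing $Z_{n-1}$; if $v \in I$, then $v$'s unique neighbour is excluded, so the remaining freedom is on a $P_{n-2}$, contributing $\lambda Z_{n-2}$. Hence
\[
Z_n = Z_{n-1} + \lambda Z_{n-2}, \qquad n \geq 2,
\]
with $Z_0 = 1$ and $Z_1 = 1+\lambda$.

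Next I would verify that the proposed closed form
\[
f(n) := \frac{\sin((n+2)\theta)}{2^n (\cos\theta)^n \sin(2\theta)}
\]
matches the initial conditions. For $n=0$, $f(0) = \sin(2\theta)/\sin(2\theta) = 1 = Z_0$. For $n=1$, using $\sin(3\theta) = \sin\theta(4\cos^2\theta - 1)$ and $\sin(2\theta) = 2\sin\theta\cos\theta$, one computes
\[
f(1) = \frac{\sin(3\theta)}{2\cos\theta\sin(2\theta)} = \frac{4\cos^2\theta - 1}{4\cos^2\theta} = 1 - \frac{1}{4\cos^2\theta} = 1 + \lambda = Z_1,
\]
since $\lambda = -1/(4\cos^2\theta)$ by hypothesis.

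The inductive step reduces to checking $f(n) = f(n-1) + \lambda f(n-2)$. Clearing the common factor $2^n(\cos\theta)^n \sin(2\theta)$ and using $4\lambda \cos^2\theta = -1$, this is equivalent to
\[
\sin((n+2)\theta) = 2\cos\theta\,\sin((n+1)\theta) - \sin(n\theta),
\]
which is precisely the product-to-sum identity $2\cos\theta\,\sin((n+1)\theta) = \sin((n+2)\theta) + \sin(n\theta)$. Thus $f$ satisfies the same two-term recurrence as $Z_n$ with the same initial conditions, so $Z_n = f(n)$ for all $n \geq 0$, completing the proof.

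There is no real obstacle here beyond the routine trig identity; the only step requiring care is the sign/parameterisation bookkeeping, namely that $\lambda < -1/4$ is what permits writing $\lambda = -1/(2\cos\theta)^2$ with $\theta \in (0,\pi/2)$ in the first place (equivalently, the characteristic roots $\tfrac{1}{2\cos\theta}e^{\pm i\theta}$ of $x^2 - x - \lambda = 0$ are genuinely non-real, so the $\sin$-form of the solution is the natural one). A cleaner but equivalent alternative would be to solve the characteristic equation directly and fit the two constants from $Z_0, Z_1$, arriving at the same closed form.
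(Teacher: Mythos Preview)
Your proof is correct and follows essentially the same route as the paper: set up the two-term recurrence $Z_n = Z_{n-1} + \lambda Z_{n-2}$, verify that the proposed closed form satisfies the same recurrence via the identity $\sin((n+2)\theta) + \sin(n\theta) = 2\cos\theta\,\sin((n+1)\theta)$, and match initial conditions. The only cosmetic difference is that the paper anchors the induction at $n=-1,0$ (with $x_{-1}=x_0=1$) whereas you anchor at $n=0,1$; both choices work.
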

\begin{proof}
For $n=-1,0,\hdots$ consider the sequence $r_n$ given by
\[r_n:=\frac{\sin((n+2)\theta)}{2^{n} (\cos\theta)^n\sin(2\theta)}.\]
Also, for $n\geq 1$, let $x_n=Z_{P_{n}}(\lambda)$ denote the partition function of the path $P_n$. It will be useful to extend the sequence $x_n$ for $n=-1,0$ by setting $x_{-1}=x_0=1$. We will show that
\begin{equation}\label{eq:triggoal}
\mbox{for integer $n\geq -1$ it holds that $x_n=r_n$,}
\end{equation}
which clearly yields the lemma (by restricting to $n\geq 1$).

To prove \eqref{eq:triggoal}, note that by considering whether the start of the path belongs to an independent set of $P_n$, we have that for all $n\geq 3$ it holds that
\begin{equation}\label{eq:recnegval}
Z_{P_{n}}(\lambda)=Z_{P_{n-1}}(\lambda)+\lambda Z_{P_{n-2}}(\lambda), \mbox{ or equivalently } x_n=x_{n-1}+\lambda x_{n-2}.
\end{equation}
In fact, since $Z_{P_{1}}(\lambda)=1+\lambda$ and $Z_{P_{2}}(\lambda)=1+2\lambda$, our choice of $x_{-1}=1$ and $x_0=1$ ensures that the second equality in \eqref{eq:recnegval} holds for all integer $n\geq 1$. Using the identity $\sin(2\theta)=2\sin\theta \cos\theta$, we see that $r_{-1}=r_0=1$, so \eqref{eq:triggoal} will follow by showing that for all $n\geq 1$, it holds that
\begin{equation}\label{eq:trig3435}
r_n=r_{n-1}+\lambda r_{n-2}=r_{n-1}-\frac{1}{4(\cos\theta)^2} r_{n-2}.
\end{equation}
The  trigonometric identity $\sin(x+y)+\sin(x-y)=2\sin x\cos y$ gives for $x=(n+1) \theta$, $y=\theta$ that
\begin{equation*}
\sin((n+2)\theta)=2\sin((n+1) \theta)\cos \theta-\sin(n\theta).
\end{equation*}
By dividing this with $2^{n} (\cos\theta)^n \sin(2\theta)\neq 0$, we obtain \eqref{eq:trig3435}, as wanted.
\end{proof}

To prove Lemma~\ref{lem:path}, we will use the following couple of technical lemmas.
\begin{lemma}\label{lem:existence}
Let $a_1,b_1,a_2,b_2\in\mathbb{R}$ be  such that $a_1 b_2 \neq a_2 b_1$ and $a_2^2+b_2^2=1$. Then, for every $x\in \mathbb{R}$, there exist $u,t\in \mathbb{R}$ such that
\[a_2 u+b_2 t\neq 0, \quad x=\frac{a_1 u+b_1 t}{a_2 u+b_2 t},\quad  u^2+t^2=1.\]
\end{lemma}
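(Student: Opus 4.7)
The plan is to reformulate the equation $x = (a_1 u + b_1 t)/(a_2 u + b_2 t)$ as the homogeneous linear equation $(a_1 - x a_2)u + (b_1 - x b_2)t = 0$ in $(u,t)$, and then show that its solution set, intersected with the unit circle, gives a valid $(u,t)$. The nondegeneracy conditions will all follow cleanly from the determinant hypothesis $a_1 b_2 \neq a_2 b_1$.

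The first step is to observe that the coefficient vector $(a_1 - x a_2,\, b_1 - x b_2)$ is not the zero vector. Indeed, if it were, then $a_1 = x a_2$ and $b_1 = x b_2$ would force $a_1 b_2 - a_2 b_1 = x a_2 b_2 - x a_2 b_2 = 0$, contradicting the hypothesis. Hence the solutions of $(a_1 - x a_2)u + (b_1 - x b_2)t = 0$ form a one-dimensional subspace of $\mathbb{R}^2$, which meets the unit circle $u^2+t^2=1$ at exactly two antipodal points. Take $(u,t)$ to be either one of them; by construction, $u^2+t^2=1$.

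The remaining task is to verify $a_2 u + b_2 t \neq 0$, for then the ratio is well-defined and equals $x$ by the reformulation. Suppose for contradiction that $a_2 u + b_2 t = 0$. Combined with the chosen equation, the vector $(u,t)$ lies in the kernel of the matrix
\[
M = \begin{pmatrix} a_1 - x a_2 & b_1 - x b_2 \\ a_2 & b_2 \end{pmatrix}.
\]
But $\det M = (a_1 - x a_2)b_2 - a_2 (b_1 - x b_2) = a_1 b_2 - a_2 b_1 \neq 0$ by hypothesis, so $M$ has trivial kernel. This contradicts $(u,t)\neq (0,0)$, which holds because $(u,t)$ lies on the unit circle.

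There is no real obstacle here; the only thing to be careful about is that the hypothesis $a_2^2 + b_2^2 = 1$ is not actually used in this argument (it only ensures that $(a_2,b_2)\neq 0$, which is already implied by $a_1 b_2 \neq a_2 b_1$). The unit-norm assumption is presumably imposed for downstream convenience in how $(u,t)$ is fed into the path construction, and the proof still delivers exactly what is claimed.
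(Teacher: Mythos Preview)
Your proof is correct and follows essentially the same approach as the paper's: both observe that the coefficient vector $(a_1 - x a_2,\, b_1 - x b_2)$ is nonzero, pick a unit-norm solution of the resulting homogeneous linear equation, and then use the determinant condition $a_1 b_2 \neq a_2 b_1$ to rule out $a_2 u + b_2 t = 0$. Your matrix/kernel phrasing of this last step is a bit cleaner than the paper's case analysis, and your remark that $a_2^2 + b_2^2 = 1$ is not actually needed is correct.
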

\begin{proof}
First, note that from  $a_1 b_2 \neq a_2 b_1$ we have that
\begin{gather}
\mbox{ for every $x'\in\mathbb{R}$, at least one of $a_1-a_2x'\neq 0$, $b_1-b_2 x'\neq 0$ holds},\label{eq:projection}\\
\mbox{ for every $u',t'\in\mathbb{R}$ with $u'^2+t'^2=1$, at least one of $a_1u'+b_1t'\neq 0$, $a_2 u'+b_2 t'\neq 0$ holds}.\label{eq:projectionb}
\end{gather}
To see \eqref{eq:projection}, assume for the sake of contradiction that for some $x'\in \mathbb{R}$ we had  $a_1-a_2x'=b_1-b_2 x'= 0$. Then, we would have that $a_1b_2=(a_2x')b_2=a_2( b_2x')=a_2b_1$,
contradicting $a_1 b_2 \neq a_2 b_1$. To see \eqref{eq:projectionb}, for the sake of contradiction, assume that for some $u',t'\in\mathbb{R}$ with 
$u'^2+t'^2=1$ 
we had $a_1u'+b_1t'=a_2 u'+b_2 t'= 0$. Since $u'^2+t'^2=1$, we may assume w.l.o.g. that $u'\neq 0$. Then, we would have that $a_1b_2=(-b_1t'/u')b_2=(-b_2t'/u')b_1=a_2b_1$, contradicting again $a_1 b_2 \neq a_2 b_1$.

Let $x\in \mathbb{R}$ be arbitrary. By applying \eqref{eq:projection} to $x'=x$, we may assume w.l.o.g. that $a_1-a_2x\neq 0$.\footnote{If $a_1-a_2x= 0$, then $b_1-b_2x\neq 0$ and in the subsequent argument one would take $u,t$ such that $t=-\frac{a_1 - a_2 x}{b_1  -b_2  x}u$ and $u^2+t^2=1$.}   Let $u,t$ be such that $u=-\frac{b_1  -b_2  x}{a_1 - a_2 x}t$ and $u^2+t^2=1$. Such $u,t$ clearly exist since, in the $(u,t)$-plane, the first equality is a line through the origin and the second is the unit circle. We claim that $u,t$ satisfy the statement of the lemma. Indeed, the equality $u=-\frac{b_1  -b_2  x}{a_1 - a_2 x}t$ gives
\begin{equation}\label{eq:cdsxza43}
x(a_2u+b_2t)=a_1 u+b_1 t.
\end{equation}
By \eqref{eq:projectionb}, at least one of $a_1u+b_1t\neq 0, a_2u+b_2t\neq 0$ holds, so  from \eqref{eq:cdsxza43}, we obtain that $a_2u+b_2t\neq 0$, yielding also that $x=\frac{a_1 u+b_1 t}{a_2 u+b_2 t}$, as claimed.
\end{proof}

\begin{lemma}\label{lem:crudeapprox}
Let $\theta_1,\theta_2\in \mathbb{R}$ be such that $\sin\theta_2\neq 0$. There exist positive constants $\eta, M$ such that for all $\phi$ with $|\phi|\leq \eta$, it holds that $\sin(\theta_2+\phi)\neq 0$ and
\[\bigg|\frac{\sin(\theta_1+\phi\big)}{\sin\big(\theta_2+\phi\big)}-\frac{\sin\theta_1}{\sin\theta_2}\bigg|\leq M|\phi|.\]
\end{lemma}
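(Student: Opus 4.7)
The plan is to prove the estimate by combining a continuity argument (to keep the denominator bounded away from zero) with an exact algebraic simplification of the numerator.

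First I would use continuity of $\sin$ at $\theta_2$ to choose $\eta_1>0$ small enough that $|\sin(\theta_2+\phi)|\geq |\sin\theta_2|/2$ for all $|\phi|\leq \eta_1$; this is possible since $\sin\theta_2\neq 0$ by hypothesis, and it simultaneously ensures $\sin(\theta_2+\phi)\neq 0$ in the relevant range.

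Next I would put the difference over a common denominator and simplify the numerator using the angle-addition formula. Writing out
\[
\sin(\theta_1+\phi)\sin\theta_2-\sin\theta_1\sin(\theta_2+\phi)
=\bigl[\sin\theta_1\cos\phi+\cos\theta_1\sin\phi\bigr]\sin\theta_2-\sin\theta_1\bigl[\sin\theta_2\cos\phi+\cos\theta_2\sin\phi\bigr],
\]
the $\sin\theta_1\sin\theta_2\cos\phi$ terms cancel, and what remains factors as $\sin\phi\cdot(\cos\theta_1\sin\theta_2-\sin\theta_1\cos\theta_2)=\sin\phi\cdot\sin(\theta_2-\theta_1)$. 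Thus
\[
\frac{\sin(\theta_1+\phi)}{\sin(\theta_2+\phi)}-\frac{\sin\theta_1}{\sin\theta_2}=\frac{\sin\phi\cdot\sin(\theta_2-\theta_1)}{\sin\theta_2\cdot\sin(\theta_2+\phi)}.
\]

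Finally, taking absolute values, using the elementary bound $|\sin\phi|\leq |\phi|$, and the denominator estimate from the first step gives
\[
\left|\frac{\sin(\theta_1+\phi)}{\sin(\theta_2+\phi)}-\frac{\sin\theta_1}{\sin\theta_2}\right|\leq \frac{2\,|\sin(\theta_2-\theta_1)|}{\sin^2\theta_2}\,|\phi|,
\]
so I would set $\eta:=\eta_1$ and $M:=2|\sin(\theta_2-\theta_1)|/\sin^2\theta_2$ (or any positive constant if this happens to vanish). There is no real obstacle here; the only thing to be careful about is confirming the algebraic cancellation that reduces the numerator to $\sin\phi\cdot\sin(\theta_2-\theta_1)$, which is what makes $M$ come out finite and the dependence on $\phi$ linear rather than merely continuous.
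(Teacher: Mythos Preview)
Your proof is correct, and the algebraic identity reducing the numerator to $\sin\phi\cdot\sin(\theta_2-\theta_1)$ is clean and checks out. Your approach is genuinely different from the paper's: the paper simply observes that $f(x)=\sin(\theta_1+x)/\sin(\theta_2+x)$ is $C^1$ on a closed interval $[-\eta,\eta]$ where the denominator does not vanish, sets $M=\max_{x\in[-\eta,\eta]}|f'(x)|$, and invokes the Mean Value Theorem to get $|f(\phi)-f(0)|\leq M|\phi|$. The paper's argument is shorter and entirely soft---it would apply verbatim to any quotient of smooth functions---whereas yours exploits the specific trigonometric structure to produce an explicit constant $M=2|\sin(\theta_2-\theta_1)|/\sin^2\theta_2$. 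For the purposes of the paper (which only needs the existence of some $M$) the Mean Value Theorem route is arguably tidier, but your explicit bound is a nice bonus and makes transparent why the Lipschitz constant is finite.
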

\begin{proof}
Let $\eta>0$ be sufficiently small so that $\sin(\theta_2+x)\neq 0$ for all $x\in[-\eta,\eta]$ (such an $\eta$ exists since $\sin \theta_2\neq 0$). Consider the function $f(x)=\frac{\sin(\theta_1+x)}{\sin(\theta_2+x)}$ defined on the interval $I:=[-\eta,\eta]$. By the choice of $\eta$, the function $f$ is well-defined throughout the interval $I$ and has continuous derivative. Set $M:=\max_{x\in I}|f'(x)|$. Then, for all $\phi\in I$, we have  by the Mean Value theorem  that $|f(\phi)-f(0)|\leq M|\phi|$, as wanted.
\end{proof}

We are now ready to prove Lemma~\ref{lem:path}, which we restate here for convenience.
\begin{lempath}
\statelempath
\end{lempath}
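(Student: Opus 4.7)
The plan is to exploit the trigonometric closed form from Lemma~\ref{lem:closedformula}. Let $\theta\in(0,\pi/2)$ be defined by $\lambda=-1/(2\cos\theta)^2$. Conditioning on whether the endpoint $v$ lies in the independent set gives $\Zout_{P_n,v}(\lambda)=Z_{P_{n-1}}(\lambda)$ and $\Zin_{P_n,v}(\lambda)=\lambda Z_{P_{n-2}}(\lambda)$ (with the convention $Z_{P_0}(\lambda):=1$), and substituting Lemma~\ref{lem:closedformula} collapses the ratio to the sine-quotient already advertised in Section~\ref{sec:overviewprecision}:
\[
R_n \;:=\; \frac{\Zin_{P_n,v}(\lambda)}{\Zout_{P_n,v}(\lambda)} \;=\; -\frac{1}{2\cos\theta}\cdot\frac{\sin(n\theta)}{\sin((n+1)\theta)}.
\]
The task is then to show that $\{R_n\}_{n\geq 1}$ is dense in $\mathbb{R}$.

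I split this into a continuous ``solve exactly'' step and a discrete ``approximate by integer multiples of $\theta$'' step. For the continuous step, I want $\phi^*\in\mathbb{R}$ with $\sin(\phi^*+\theta)\neq 0$ realising $-\frac{1}{2\cos\theta}\cdot\frac{\sin\phi^*}{\sin(\phi^*+\theta)}=\lambda'$ exactly; this is precisely the scenario handled by Lemma~\ref{lem:existence} with $a_1=-1/(2\cos\theta)$, $b_1=0$, $a_2=\cos\theta$, $b_2=\sin\theta$, since these parameters satisfy $a_2^2+b_2^2=1$ and $a_1b_2-a_2b_1=-\sin\theta/(2\cos\theta)\neq 0$ (using $\theta\in(0,\pi/2)$). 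The lemma supplies $(u,t)$ on the unit circle with $a_2u+b_2t\neq 0$ realising $\lambda'$; writing $(u,t)=(\sin\phi^*,\cos\phi^*)$ and using $\sin\phi^*\cos\theta+\cos\phi^*\sin\theta=\sin(\phi^*+\theta)$ yields the desired $\phi^*$ along with the key nondegeneracy $\sin(\phi^*+\theta)\neq 0$.

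For the discrete step, the hypothesis $\lambda\notin\Bc$ translates to $\theta/\pi$ being irrational, so by Weyl's equidistribution theorem the orbit $\{n\theta\bmod 2\pi\}_{n\geq 1}$ is dense in $[0,2\pi)$. Applying Lemma~\ref{lem:crudeapprox} with $\theta_1=\phi^*$ and $\theta_2=\phi^*+\theta$ furnishes constants $\eta,M>0$ such that, whenever $n\theta\equiv\phi^*+\psi\pmod{2\pi}$ with $|\psi|\leq\eta$, one has $\sin((n+1)\theta)=\sin(\phi^*+\theta+\psi)\neq 0$ and
\[
\bigl|R_n-\lambda'\bigr| \;\leq\; \frac{M}{2\cos\theta}\,|\psi|.
\]
Density lets me pick $n$ with $|\psi|<\min\{\eta,\;2\epsilon\cos\theta/M\}$, giving $|R_n-\lambda'|<\epsilon$; the same non-vanishing of $\sin((n+1)\theta)$ ensures $\Zout_{P_n,v}(\lambda)\neq 0$, so $R_n$ is well-defined. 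The only subtle point I anticipate is ensuring in the continuous step that $\phi^*$ avoids the singularities of the sine-quotient---without this guarantee the Lipschitz transfer in the discrete step would break down---and this is precisely the side-condition that Lemma~\ref{lem:existence} was crafted to supply.
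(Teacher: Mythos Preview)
Your proof is correct and follows essentially the same route as the paper: derive the sine-quotient formula for $R_n$ via Lemma~\ref{lem:closedformula}, invoke Lemma~\ref{lem:existence} to realise the target exactly at some angle $\phi^*$ with $\sin(\phi^*+\theta)\neq 0$, and then transfer via Lemma~\ref{lem:crudeapprox}. The only notable difference is that you appeal to Weyl equidistribution (density of the irrational rotation orbit) to land $n\theta$ near $\phi^*$ modulo $2\pi$, whereas the paper uses Dirichlet's approximation theorem with a small case analysis to exhibit such an $n$ explicitly; your route is slightly cleaner but otherwise equivalent.
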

\begin{proof}[Proof of Lemma~\ref{lem:path}]
Let $\theta\in (0,\pi/2)$ be such that $\lambda=-1/(2\cos \theta)^2$; since $\lambda<-1/4$, $\theta$ exists and it is unique in the interval $(0,\pi/2)$. From $\lambda\notin \Bc$, we have that $\theta$ is \emph{not} a rational multiple of $\pi$.

For all integer $n\geq 3$, it holds that
\[\Zin_{P_n,v}(\lambda)=\lambda  Z_{P_{n-2}}(\lambda),\qquad  \Zout_{P_n,v}(\lambda)=Z_{P_{n-1}}(\lambda).\]
Since $\theta$ is not a rational multiple of $\pi$, by Lemma~\ref{lem:closedformula} we have that $\Zout_{P_n,v}(\lambda)\neq 0$. Hence, for all $n\geq 3$, we have
\begin{equation}\label{eq:ZNWN}
\frac{\Zin_{P_n,v}(\lambda)}{\Zout_{P_n,v}(\lambda)}=-\frac{1}{2\cos \theta}W_n, \mbox{ where }W_n:=\frac{\sin(n\theta)}{\sin((n+1)\theta)}.
\end{equation}
In fact, it is not hard to see by explicit algebra that \eqref{eq:ZNWN} is valid for $n=1,2$ as well (this can also be inferred from the proof of Lemma~\ref{lem:closedformula}). Thus, to prove the lemma, we only need to show that, for all $w\in \Rnz$, for all $\epsilon>0$, there exists a positive integer $n$ such that $|W_n-w|\leq \epsilon$ (the following argument can also account for $w=0$, by modifying the choice of $\epsilon'$ below, but $w\in \Rnz$ is already sufficient for the density argument and hence we do not need to explicitly do so).

Consider an arbitrary $w\in \Rnz$ and $\epsilon>0$. By Lemma~\ref{lem:existence} applied to $a_1=1,b_1=0,a_2=\cos \theta, b_2=\sin \theta$ and $x=w$, there exist $u,t$ such that $u^2+t^2=1$ and
\[u\cos\theta+t\sin\theta\neq0, \quad w=\frac{u}{u\cos\theta+t\sin\theta}.\]
Since $u^2+t^2=1$ and $w\neq 0$, there is a unique $\theta^*\in (0,2\pi)$ such that $u=\sin\theta^*$ and $t=\cos\theta^*$. For later use, note that with this parametrisation of $u,t$ we have $u\cos\theta+t\sin\theta=\sin(\theta+\theta^*)$ and hence
\[\sin(\theta+\theta^*)\neq 0, \quad w=\frac{\sin\theta^*}{\sin(\theta+\theta^*)}.\]

Let $\eta,M>0$ be the constants in Lemma~\ref{lem:crudeapprox} obtained by setting $\theta_1=\theta^*$ and $\theta_2=\theta+\theta^*$. In particular, we have that for all $\phi$ satisfying $|\phi|\leq \eta$, it holds that $\sin(\theta+\theta^*+\phi)\neq 0$ and
\begin{equation}\label{eq:crudeapprox}
\bigg|\frac{\sin(\theta^*+\phi)}{\sin(\theta+\theta^*+\phi)}-\frac{\sin\theta^*}{\sin(\theta+\theta^*)}\bigg|\leq M |\phi|.
\end{equation}
Let $\epsilon':=\min\{\eta,\epsilon/M, \theta^*,2\pi-\theta^*, 5\theta/\pi\}>0$. Using the fact that $\theta$ is not a rational multiple of $\pi$, we will show that there exist integers $n\geq 1,m\geq 0$ such that
\begin{equation}\label{eq:dirichlet}
|n\theta-(2\pi m+\theta^*)|\leq \epsilon'.
\end{equation}
Before proving this, let us first conclude that for $n$ as in \eqref{eq:dirichlet}, it holds that $|W_n-w|\leq \epsilon$.

Set $\phi:=n\theta-(2\pi m+\theta^*)$ and observe that
\begin{equation*}
\sin(n\theta)=\sin(\theta^*+\phi), \quad \sin((n+1)\theta)=\sin(\theta+\theta^*+\phi).
\end{equation*}
From \eqref{eq:dirichlet} and the choice of $\epsilon'$, we have that $|\phi|\leq  \eta$ and $|\phi|\leq \epsilon/M$, so using \eqref{eq:crudeapprox} we obtain  that
\[|W_n-w|=\bigg|\frac{\sin(n\theta)}{\sin((n+1)\theta)}-\frac{\sin\theta^*}{\sin(\theta+\theta^*)}\bigg|=\bigg|\frac{\sin(\theta^*+\phi)}{\sin(\theta+\theta^*+\phi)}-\frac{\sin\theta^*}{\sin(\theta+\theta^*)}\bigg|\leq M |\phi|\leq \epsilon,\]
as wanted.

It remains to prove \eqref{eq:dirichlet}. By Dirichlet's approximation theorem, there exist positive integers $n',m'$ such that $n' \leq \left\lceil 10/\epsilon'\right\rceil$ and $\big|n'\frac{\theta}{2\pi}-m'|\leq \epsilon'/10$. We thus have that
\[\big|n'\theta-2m'\pi |\leq \epsilon'.\]
Let $z:=n'\theta-2m'\pi$ and note that $z\neq 0$ since $\theta$ is not a rational multiple of $\pi$. We consider two cases depending on the sign of $z$.

\begin{enumerate}[label=\textbf{Case \arabic*}.,  leftmargin=*]
\item $z>0$. By the choice of $\epsilon'$, we have that $\epsilon'\leq \theta^*$, and hence $z\leq \theta^*$. Consider the positive integer
\[k:=\lfloor \theta^*/z\rfloor=\Big\lfloor  \frac{\theta^*}{n'\theta-2m'\pi}\Big\rfloor.\] We claim that \eqref{eq:dirichlet} holds with $n=kn'$ and $m=km'$. Indeed, by the definition of $k,n,m$ we have that
\[ n\theta-2m\pi\leq \theta^*<n\theta-2m\pi+z, \mbox{ so that } |n\theta-(2m\pi+\theta^*)|\leq |z|\leq \epsilon'.\]
\item $z<0$. By the choice of $\epsilon'$, we have that $\epsilon'\leq 2\pi-\theta^*$, and hence $0<-z\leq 2\pi -\theta^*$. Consider the positive integer
\[k:=\lfloor (\theta^*-2\pi)/z\rfloor=\Big\lfloor \frac{2\pi-\theta^*}{2m'\pi-n'\theta}\Big\rfloor.\]
We claim that \eqref{eq:dirichlet} holds with $n=kn'$ and $m=km'-1$.
Indeed, by the definition of $k,n,m$ we have that
\[ z< 2m\pi+\theta^*-n\theta\leq 0, \mbox{ so that } |n\theta-(2m\pi+\theta^*)|\leq |z|\leq \epsilon'.\]
\end{enumerate}
This concludes the proof of \eqref{eq:dirichlet} and hence the proof of Lemma~\ref{lem:path}.
\end{proof}

\subsection{The ratio $R_\lambda(G,v)$ and a simple way to implement activities}\label{sec:Rl}
Let $\lambda\in \Rnz$. To prove Lemmas \ref{lem:zero} and  \ref{lem:smalllambda}, it will be sometimes more convenient to work with the ratio $\frac{\Zout_{G,v}(\lambda)}{Z_{G}(\lambda)}$ (rather than $\frac{\Zin_{G,v}(\lambda)}{\Zout_{G,v}(\lambda)}$). Formally, let $G=(V,E)$ be a graph such that $Z_G(\lambda)\neq 0$. Then, for a vertex $v\in V$, we will be interested in the quantity $R_\lambda(G,v)$ defined as
\[R_\lambda(G,v):=\frac{\Zout_{G,v}(\lambda)}{Z_G(\lambda)}.\]

The following simple lemma shows how to implement activities using the quantities $R_\lambda(G,v)$.
\begin{lemma}\label{lem:translation}
Let $\lambda\in \Rnz$ and $r\in \mathbb{R}$. Let $G'$ be a graph and $u$ be a vertex of $G'$ such that $r=R_\lambda(G',u)$. Consider the graph $G$ obtained from $G'$ by adding a new vertex $v$ whose single neighbour is the vertex $u$. Then,
\[\Zout_{G,v}(\lambda)\neq 0\mbox{ and }\frac{\Zin_{G,v}(\lambda)}{\Zout_{G,v}(\lambda)}=\lambda r.\]
\end{lemma}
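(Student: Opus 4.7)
The plan is to directly compute the two quantities $\Zin_{G,v}(\lambda)$ and $\Zout_{G,v}(\lambda)$ by partitioning the independent sets of $G$ according to whether they contain the new vertex $v$, and then use the fact that $v$'s only neighbour in $G$ is $u$ to re-express everything in terms of quantities on $G'$.

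Concretely, I would argue as follows. Since $v$ has no edges in $G$ other than the edge $\{u,v\}$, the independent sets $I$ of $G$ with $v \notin I$ are precisely the independent sets of $G'$ (formally, $I \mapsto I$ gives a bijection $\mathcal{I}_{G'} \to \{I \in \mathcal{I}_G : v \notin I\}$, and the cardinalities agree). Hence
\[
\Zout_{G,v}(\lambda) \;=\; Z_{G'}(\lambda).
\]
Similarly, the independent sets $I$ of $G$ with $v \in I$ are in bijection with the independent sets of $G'$ that exclude $u$ (via $I \mapsto I \setminus \{v\}$), and under this bijection $|I|$ decreases by exactly $1$, giving
\[
\Zin_{G,v}(\lambda) \;=\; \lambda \cdot \Zout_{G',u}(\lambda).
\]

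Combining these identities with the hypothesis $r = R_\lambda(G',u) = \Zout_{G',u}(\lambda)/Z_{G'}(\lambda)$ (which in particular presupposes $Z_{G'}(\lambda) \neq 0$) immediately yields $\Zout_{G,v}(\lambda) = Z_{G'}(\lambda) \neq 0$ and
\[
\frac{\Zin_{G,v}(\lambda)}{\Zout_{G,v}(\lambda)} \;=\; \frac{\lambda\,\Zout_{G',u}(\lambda)}{Z_{G'}(\lambda)} \;=\; \lambda r,
\]
as required. There is no real obstacle here; the lemma is essentially a bookkeeping identity, and the only point that needs to be flagged in the write-up is the nonvanishing of $\Zout_{G,v}(\lambda)$, which is forced by $Z_{G'}(\lambda) \neq 0$ (implicit in the definition of $R_\lambda(G',u)$).
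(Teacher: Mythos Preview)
Your proof is correct and follows essentially the same approach as the paper's: both identify $\Zout_{G,v}(\lambda)=Z_{G'}(\lambda)$ and $\Zin_{G,v}(\lambda)=\lambda\,\Zout_{G',u}(\lambda)$, use the well-definedness of $R_\lambda(G',u)$ to get $Z_{G'}(\lambda)\neq 0$, and then read off the ratio. The only difference is that you spell out the underlying bijections on independent sets, whereas the paper states the identities directly.
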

\begin{proof}
Note that $\Zout_{G,v}(\lambda)=Z_{G'}(\lambda)\neq 0$, where the disequality follows from the fact that the quantity $R_\lambda(G',u)$ is well-defined. Then, observe that
\[\frac{\Zin_{G,v}(\lambda)}{\Zout_{G,v}(\lambda)}=\frac{\lambda\Zout_{G',u}(\lambda)}{Z_{G'}(\lambda)}=\lambda\cdot R_\lambda(G',u)=\lambda r.\qedhere\]
\end{proof}
It is instructive at this point to note the following consequence of Lemma~\ref{lem:translation}: to show that $(\Delta,\lambda)$ implements an activity $\lambda'$, it suffices to construct a bipartite graph $G$ with maximum degree at most $\Delta$ which has a vertex $v$ whose degree is at most $\Delta-1$ such that $R_\lambda(G,v)=\lambda'/\lambda$.

\subsection{The case where the partition function of some tree is zero --- Proof of Lemma~\ref{lem:zero}.}\label{sec:zero}
In this section, we prove Lemma~\ref{lem:zero}. 
We start by defining a multivariate version of the hard-core partition function.

\subsubsection{The hard-core model with non-uniform activities}\label{sec:nonuniform}

Implementing activities can be thought of as constructing unary gadgets that allow  modification of the activity at a particular vertex $v$. We will use the implemented activities to simulate a more general version of the hard-core model with non-uniform activities. In particular, let $G=(V,E)$ be a graph and $\lambdab=\{\lambda_v\}_{v\in V}$ be a real vector; we associate to every vertex $v\in V$ the activity $\lambda_v$. The hard-core partition function with activity vector $\lambdab$ is defined as
\[Z_G(\lambdab):=\sum_{I\in \mathcal{I}_G}\prod_{v\in I} \lambda_v.\]
Note that the standard hard-core model with activity $\lambda$ is obtained from this general version by setting all vertex activities equal to $\lambda$. For a vertex $v\in V$, we define $\Zin_G(\lambdab)$ and $\Zout_G(\lambdab)$ for the non-uniform model analogously to $\Zin_G(\lambda)$ and $\Zout_G(\lambda)$ for the uniform model, respectively.

The following lemma connects the partition function $Z_G(\lambdab)$ with non-uniform activities to the hard-core partition function with uniform activity $\lambda$. Roughly, whenever all the activities in the activity vector $\lambdab$ can be implemented, we can just stick graphs on the vertices of $G$ which implement the corresponding activities in $\lambdab$ (if a vertex activity equals $\lambda$, no action is required).  
\begin{lemma}\label{lem:nonuniform}
Let $\lambda\in \Rnz$, let $t\geq 1$ be an arbitrary integer and let $\lambda_1',\hdots,\lambda_t'\in \mathbb{R}$. Suppose that, for $j\in [t]$, the graph $G_j$ with terminal $v_j$ implements the activity $\lambda_j'$, and let  $C_j:=\Zout_{G_j,v_j}(\lambda)$. Then, the following holds for every graph $G=(V,E)$ and every activity vector $\lambdab=\{\lambda_v\}_{v\in V}$ such that $\lambda_v\in\{\lambda,\lambda_1',\hdots,\lambda_t'\}$ for every $v\in V$.

For $j\in[t]$, let $V_j:=\{v\in V\mid \lambda_v=\lambda_j'\}$. Consider the graph $G'$ obtained from $G$ by attaching, for every $j\in [t]$ and every vertex $v\in V_j$, a copy of the graph $G_j$ to the vertex $v$ and identifying the terminal $v_j$ with the vertex $v$ (see Figure~\ref{fig:zaq}). Then, for $C:=\prod^t_{j=1}C_j^{|V_j|}$, it holds that
\begin{gather}
Z_{G'}(\lambda)=C\cdot Z_G(\lambdab),\label{eq:uni}\\
\mbox{$\forall v\in V$: } \Zin_{G',v}(\lambda)=C\cdot \Zin_{G,v}(\lambdab), \qquad \Zout_{G',v}(\lambda)=C\cdot \Zout_{G,v}(\lambdab).\label{eq:uniinout}
\end{gather}
\end{lemma}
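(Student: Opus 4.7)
The plan is to decompose every independent set of $G'$ into its restrictions to $G$ and to each glued gadget, observe that these restrictions are independent given the assignment at the shared vertices, and then apply the implementation hypothesis vertex-by-vertex.

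First I would set up notation. For $j\in[t]$ and $v\in V_j$, let $G_j^{(v)}$ denote the copy of $G_j$ glued at $v$, with the terminal $v_j$ identified with $v$; so the vertex set of $G'$ is $V\sqcup\bigsqcup_{j,v\in V_j}(V(G_j^{(v)})\setminus\{v\})$. Given any $I'\in\mathcal{I}_{G'}$, set $I:=I'\cap V$ and $I_v:=I'\cap V(G_j^{(v)})$ for each $v\in V_j$. Then $I\in\mathcal{I}_G$, each $I_v\in\mathcal{I}_{G_j^{(v)}}$, and the consistency condition $v\in I\iff v\in I_v$ holds; conversely, any such tuple assembles into a valid $I'\in\mathcal{I}_{G'}$. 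Because $v$ is identified with $v_j$ in $G'$, we have the weight identity
\[
|I'|=|I|+\sum_{j\in[t]}\sum_{v\in V_j}\bigl(|I_v|-[v\in I_v]\bigr),
\]
so $\lambda^{|I'|}=\lambda^{|I|}\prod_{j,v\in V_j}\lambda^{|I_v|-[v\in I_v]}$.

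Next I would sum over $I'$ by first fixing $I\in\mathcal{I}_G$ and then summing the compatible $I_v$ independently. For each pair $(j,v\in V_j)$, the inner sum equals $C_j=\Zout_{G_j,v_j}(\lambda)$ if $v\notin I$, and equals $\Zin_{G_j,v_j}(\lambda)/\lambda$ if $v\in I$ (the extra factor of $1/\lambda$ accounts for the $-[v\in I_v]=-1$ in the exponent). Because $G_j$ implements $\lambda_j'$ exactly, $\Zin_{G_j,v_j}(\lambda)=\lambda_j'\,C_j$, so in the second case the inner sum equals $\lambda_j'C_j/\lambda$. Substituting and factoring out $C=\prod_jC_j^{|V_j|}$ yields
\[
Z_{G'}(\lambda)=C\sum_{I\in\mathcal{I}_G}\lambda^{|I|}\prod_{j=1}^t\bigl(\lambda_j'/\lambda\bigr)^{|I\cap V_j|}.
\]
Finally, writing $V_0:=\{v\in V:\lambda_v=\lambda\}$ and splitting $I=(I\cap V_0)\sqcup\bigsqcup_j(I\cap V_j)$, the $\lambda$'s and $\lambda_j'/\lambda$'s combine telescopically into $\prod_{v\in I}\lambda_v$, which identifies the remaining sum with $Z_G(\lambdab)$ and proves~\eqref{eq:uni}.

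For~\eqref{eq:uniinout}, I would observe that the decomposition is bijective on the sub-families $\{I'\in\mathcal{I}_{G'}:v\in I'\}$ and $\{I'\in\mathcal{I}_{G'}:v\notin I'\}$, which correspond precisely to fixing $v\in I$ or $v\notin I$ in the outer sum over $\mathcal{I}_G$; repeating the computation above with the outer sum restricted in the same way gives the two stated identities for $\Zin_{G',v}$ and $\Zout_{G',v}$. The only delicate step is the weight bookkeeping at the shared vertices, and in particular the single factor of $\lambda^{-1}$ that arises when $v\in I$---once this is handled, the rest is mechanical.
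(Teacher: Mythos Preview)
Your proposal is correct and follows essentially the same approach as the paper: both decompose independent sets of $G'$ according to their restriction $I$ to $V$, show that the fibre over a fixed $I$ factors over the attached gadgets, and then use the implementation identity $\Zin_{G_j,v_j}(\lambda)=\lambda_j'\,\Zout_{G_j,v_j}(\lambda)$ to collapse the product into $C\cdot\prod_{v\in I}\lambda_v$. The only cosmetic difference is the bookkeeping at the shared vertices: the paper separates out $V_0$ at the start so that vertices in $V_j\cap I$ contribute via $\Zin_{G_j,v_j}(\lambda)$ directly (which already contains the factor $\lambda$ for the terminal), whereas you keep the full $\lambda^{|I|}$ and correct with the explicit $\lambda^{-[v\in I_v]}$ term---both are equivalent ways of avoiding double-counting the identified vertex.
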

\begin{remark}\label{rem:blowup}
Note that, in the construction of Lemma~\ref{lem:nonuniform}, every vertex $v\in G$ with $\lambda_v=\lambda$ maintains its degree in $G'$ (in fact, the neighbourhood of such a vertex $v$ is the same in $G$ and $G'$). The degree of every other vertex $v$ in $G$ gets increased by one. This observation will ensure in applications of Lemma~\ref{lem:nonuniform} that we do not blow up the degree.

Note also that, if the graph $G$ is bipartite and the graphs $G_j$ are bipartite for all $j=1,\hdots,t$, then the graph $G'$ in the construction of Lemma~\ref{lem:nonuniform} is bipartite as well.  This observation will ensure in applications of Lemma~\ref{lem:nonuniform} that we preserve the bipartiteness of the underlying graph $G$.
\end{remark}

\begin{figure}[h]
\captionsetup[subfigure]{justification=centering}
\begin{minipage}{0.4\textwidth}
\begin{center}
{
\tikzset{lab/.style={circle,draw,inner sep=0pt,fill=none,minimum size=5mm}}
\begin{tikzpicture}[xscale=1,yscale=1]
\draw (0,0) node[lab] (1) {$v_1$};
\draw (1,0) node[lab] (2) {};
\draw (2,0) node[lab] (3) {};
\draw (1) -- (2) -- (3);
\end{tikzpicture}
}
\end{center}
\subcaption{The graph $G_1$ with terminal $v_1$ implementing an activity $\lambda_1'$}
\label{fig:G1v1}
\par\bigskip

\begin{center}
{
\tikzset{lab/.style={circle,draw,inner sep=0pt,fill=none,minimum size=5mm}}
\begin{tikzpicture}[xscale=1,yscale=1]
\draw (0,0) node[lab] (1) {$v_2$};
\draw (1,0) node[lab] (2) {};
\draw (1.866,1/2) node[lab] (3) {};
\draw (1.866,-1/2) node[lab] (4) {};
\draw (1) -- (2) -- (3);
\draw (2) -- (4);
\end{tikzpicture}
}
\end{center}
\subcaption{The graph $G_2$ with terminal $v_2$ implementing an activity $\lambda_2'$}
\label{fig:G2v2}
\end{minipage}
\begin{minipage}{0.6\textwidth}
\begin{center}
{
\tikzset{lab/.style={circle,draw,inner sep=0pt,fill=none,minimum size=5mm}}
\begin{tikzpicture}[xscale=1,yscale=1]
\draw (0,0) node[lab, label={below:$\lambda_1=\lambda_1'$}] (1) {$1$};
\draw (2,0) node[lab, label={below:$\lambda_2=\lambda_2'$}] (2) {$2$};
\draw (4,0) node[lab, label={below:$\lambda_3=\lambda$}] (3) {$3$};
\draw (6,0) node[lab, label={below:$\lambda_4=\lambda_1'$}] (4) {$4$};
\draw (1) -- (2) -- (3) -- (4);
\end{tikzpicture}
}
\end{center}
\subcaption{The graph $G$ with activity vector $\lambdab$}
\label{fig:Glambdab}
\par\medskip

\begin{center}
{
\tikzset{lab/.style={circle,draw,inner sep=0pt,fill=none,minimum size=5mm}}
\begin{tikzpicture}[xscale=1,yscale=1]
\draw (0,0) node[lab] (1) {$1$};
\draw (0,-1) node[lab] (1a) {};
\draw (0,-2) node[lab] (1b) {};
\draw (2,0) node[lab] (2) {$2$};
\draw (2,-1) node[lab] (2a) {};
\draw (2.5,-1.866) node[lab] (2b) {};
\draw (1.5,-1.866) node[lab] (2c) {};
\draw (4,0) node[lab] (3) {$3$};
\draw (6,0) node[lab] (4) {$4$};
\draw (6,-1) node[lab] (4a) {};
\draw (6,-2) node[lab] (4b) {};
\draw (1) -- (2) -- (3) -- (4);
\draw (1) -- (1a) -- (1b);
\draw (4) -- (4a) -- (4b);
\draw (2) -- (2a) -- (2b);
\draw (2a) -- (2c);
\end{tikzpicture}
}
\end{center}
\subcaption{The graph $G'$ with uniform activity $\lambda$}
\label{fig:Gplambda}
\end{minipage}
\caption{An illustrative depiction of the construction in the statement of Lemma~\ref{lem:nonuniform}. The graphs $G_1,G_2$ in Figures \ref{fig:G1v1}, \ref{fig:G2v2} implement the activities $\lambda_1',\lambda_2'$, respectively, i.e., $\frac{\Zin_{G_1,v_1}(\lambda)}{\Zout_{G_1,v_1}(\lambda)}=\lambda_1'$ and $\frac{\Zin_{G_2,v_2}(\lambda)}{\Zout_{G_2,v_2}(\lambda)}=\lambda_2'$ for some $\lambda_1',\lambda_2'\in \Reals$. In Figure~\ref{fig:Glambdab}, we have a graph $G$ with non-uniform activities $\{\lambda_i\}_{i\in[4]}$ such that $\lambda_i\in \{\lambda,\lambda_1',\lambda_2'\}$ for $i\in [4]$. By sticking onto $G$ the graphs $G_1,G_2$ as in Figure~\ref{fig:Gplambda}, we obtain the graph $G'$. Note that the vertex whose activity was equal to $\lambda$ was not modified.}
\label{fig:zaq}
\end{figure}
\begin{proof}[Proof of Lemma~\ref{lem:nonuniform}]
For an independent set $I$ of $G$, let
$\Omega_I=\{I'\in \mathcal{I}_{G'}\mid I'\cap V=I\}$,
i.e., $\Omega_I$ is the set of independent sets in $G'$ whose restriction on $G$ is the independent set $I$. We will show that
\begin{equation}\label{eq:basicI}
\sum_{I'\in \Omega_I} \lambda^{|I'|}=C\cdot \prod_{v\in I} \lambda_v,
\end{equation}
where $C=\prod^t_{j=1}C_j^{|V_j|}$ is as in the statement of the lemma. Note that the sets $\{\Omega_{I}\}_{I\in \mathcal{I}_G}$ are a partition of the set $\mathcal{I}_{G'}$ of independent sets of $G'$, so summing \eqref{eq:basicI} over $I\in \mathcal{I}_G$ yields \eqref{eq:uni}. Analogously, for any vertex $v$ of the graph $G$,  we may sum \eqref{eq:basicI} over those $I\in \mathcal{I}_G$ such that $v\in I$ to obtain the first equality in \eqref{eq:uniinout}; by summing over those  $I\in \mathcal{I}_G$ such that $v\notin I$, we also obtain the second equality in \eqref{eq:uniinout}. We thus focus on proving \eqref{eq:basicI}.

Let $V_0=V\backslash (V_1\cup \cdots V_t)$, i.e., $V_0$ consists of all the vertices in $G$ such that $\lambda_v=\lambda$. Further, for each $j\in [t]$ and $v\in V_j$, denote by $G^v_j$ the copy of the graph $G_j$ which is attached to $v$. Consider the product
\[P:=\lambda^{|V_0\cap I|}\prod^t_{j=1}\prod_{v\in V_j\cap I} \Zin_{G^v_j,v}(\lambda)\prod_{v\in V_j\backslash I} \Zout_{G^v_j,v}(\lambda).\]
We claim that $P=\sum_{I'\in \Omega_I} \lambda^{|I'|}$. Indeed, using the definition of $\Zin_{G^v_j,v}(\lambda)$ and $\Zout_{G^v_j,v}(\lambda)$, we can rewrite $P$ as
\[P=\lambda^{|V_0\cap I|}\prod^{t}_{j=1}\prod_{v\in V_j\cap I}\bigg(\sum_{I'\in \mathcal{I}_{G^v_j}; v\in I'}\lambda^{|I'|}\bigg)\prod_{v\in V_j\backslash I}\bigg(\sum_{I'\in \mathcal{I}_{G^v_j}; v\notin I'}\lambda^{|I'|}\bigg).\]
By multiplying out the last expression, we recover precisely the sum $\sum_{I'\in \Omega_I} \lambda^{|I'|}$.

To prove \eqref{eq:basicI}, it thus remains to massage $P$ into the r.h.s. of \eqref{eq:basicI}. Let $j\in [t]$ and $v\in V_j$. By construction, $G^v_j$ is a copy of $G_j$ and $v$ has been identified with the terminal $v_j$ of $G_j$, so  $\Zin_{G^v_j,v}(\lambda)=\Zin_{G_j,v_j}(\lambda)$ and $\Zout_{G^v_j,v}(\lambda)=\Zout_{G_j,v_j}(\lambda)$. Recall also that $G_j$ implements $\lambda_j'$ and, in particular, $\lambda_v=\lambda_j'=\Zin_{G_j,v_j}(\lambda)/\Zout_{G_j,v_j}(\lambda)$. It follows that for every $j\in[t]$ it holds that
\[\prod_{v\in V_j\cap I} \Zin_{G^v_j,v}(\lambda)\prod_{v\in V_j\backslash I} \Zout_{G^v_j,v}(\lambda)=\prod_{v\in V_j\cap I} \Zin_{G_j,v_j}(\lambda)\prod_{v\in V_j\backslash I} \Zout_{G_j,v_j}(\lambda)= C_j^{|V_j|} (\lambda_v)^{|V_j\cap I|},\]
where we recall that $C_j=\Zout_{G_j,v_j}(\lambda)$. By multiplying this over $j\in[t]$, we obtain that $P=C\cdot \prod_{v\in I} \lambda_v$, as wanted. This proves \eqref{eq:basicI} and completes the proof of Lemma~\ref{lem:nonuniform}.
\end{proof}

\subsubsection{The proof of Lemma~\ref{lem:zero}}

Having extended the hard-core model to have non-uniform activities, we proceed with the proof of Lemma~\ref{lem:zero}.
We start with the following lemma.
\begin{lemma}\label{lem:minusone}
Let $\lambda\in \Rnz$ and $d\geq 2$ be a positive integer. Suppose that there is a tree $T$ with maximum degree $d$ such that $Z_T(\lambda)=0$. Then, for $\Delta=\max\{d,3\}$, we have that $(\Delta,\lambda)$ implements the activities $-1$ and $+1$.
\end{lemma}
\begin{proof}
\textbf{Part  I.} We first prove that $(\Delta,\lambda)$ implements the activity $-1$. 

If $\lambda=-1$, we have that the path $P$ with length 3 implements the activity $-1$, since for an endpoint $u$ of $P$ it holds that $\Zout_{P,u}(\lambda)=1+3\lambda+\lambda^2=-1$ and $\Zin_{P,u}(\lambda)=\lambda(1+2\lambda)=1$. (The reason that we use a  path of length 3 rather than a single-vertex path is to ensure that the terminal of the path has degree 1, as Definition~\ref{def:implement} requires. This will be technically convenient in the upcoming proof of Lemma~\ref{lem:implementingS}.) 

We will therefore assume that $\lambda\neq -1$. By assumption, there exists a tree $T$ of maximum degree $d$ satifying $Z_{T}(\lambda)=0$. Among the trees $T$ of maximum degree at most $d$ satisfying $Z_{T}(\lambda)=0$, let $T^*$ be a tree which has the minimum number of vertices. Since the partition function of a single-vertex graph is $1+\lambda\neq 0$, we have that the tree $T^*$ has at least two vertices. We thus conclude that for every leaf $u$ of $T$ it holds that $\Zout_{T^*,u}(\lambda)\neq 0$: if not, the tree $T^*\backslash u$ satisfies $Z_{T^*\backslash u}(\lambda)=\Zout_{T^*,u}(\lambda)=0$, contradicting the minimality of $T^*$.

Let $u$ be an arbitrary leaf of the tree $T^*$. Since
\begin{equation*}
0=Z_{T^*}(\lambda)=\Zin_{T^*,u}(\lambda)+\Zout_{T^*,u}(\lambda)\mbox{ and }\Zout_{T^*,u}(\lambda)\neq 0,
\end{equation*}
we have that the tree $T^*$ (with terminal $u$) implements the activity $\frac{\Zin_{T^*,u}(\lambda)}{\Zout_{T^*,u}(\lambda)}=-1$. This completes the proof that $(\Delta,\lambda)$ implements the activity $-1$.

\textbf{Part II.} We next show that $(\Delta,\lambda)$ implements the activity $+1$.

We will make use of the hard-core model with non-uniform activities, cf. Section~\ref{sec:nonuniform}. In particular, consider the (bipartite) graph $G$ in Figure~\ref{fig:activityp1}, with vertex activities that are also given in the figure.
\begin{figure}[t]
\begin{center}
{
\tikzset{lab/.style={circle,draw,inner sep=0pt,fill=none,minimum size=5mm}}
\begin{tikzpicture}[xscale=1,yscale=1]

\draw (-6,0) node[lab, label={$\lambda_v=\lambda$}] (1) {$v$};
\draw (-4,0) node[lab, label={$\lambda_1=-1$}] (2) {$1$};
\draw (-2,0) node[lab, label={$\lambda_2=-1$}] (3) {$2$};
\draw (0,0) node[lab, label={$\lambda_3=\lambda$}] (4) {$3$};
\draw (1.732,1) node[lab, label={above:$\lambda_4=\lambda$}] (5) {$4$};
\draw (3.732,1) node[lab, label={above:$\lambda_5=-1$}] (6) {$5$};
\draw (1.732,-1) node[lab, label={above:$\lambda_6=-1$}] (7) {$6$};
\draw (3.732, -1) node[lab, label={above:$\lambda_7=\lambda$}] (8) {$7$};
\draw (5.732, 0) node[lab, label={above:$\lambda_8=-1$}] (9) {$8$};
\draw (1) -- (2) -- (3) -- (4) -- (5) -- (6) -- (9);
\draw (4) -- (7) -- (8) -- (9);
\end{tikzpicture}
}
\end{center}
\caption{The bipartite graph $G$ with nonuniform activities $\lambdab$ used in Lemma~\ref{lem:minusone} to prove that we can implement the activity $+1$. We show that $\Zin_{G,v}(\lambdab)=\Zout_{G,v}(\lambdab)=-\lambda^2$, see \eqref{eq:adhocgadget}. By invoking Lemma~\ref{lem:nonuniform}, we obtain a bipartite graph $G'$ with uniform activities equal to $\lambda$ and whose terminal is the vertex $v$ such that $\frac{\Zin_{G',v}(\lambda)}{\Zout_{G',v}(\lambda)}=\frac{\Zin_{G,v}(\lambdab)}{\Zout_{G,v}(\lambdab)}=+1$. It follows that $G'$ implements the activity $+1$.}
\label{fig:activityp1}
\end{figure}
We will use $\lambdab$ to denote the activity vector on $G$. By enumerating the independent sets $I$ of $G$, we compute
\begin{equation}\label{eq:adhocgadget}
\begin{aligned}
\Zin_{G,v}(\lambdab)&=\lambda_v\lambda_3(1+\lambda_5+\lambda_7+\lambda_8+\lambda_5\lambda_7)\\
&\hskip 1cm+\lambda_v(1+\lambda_2)(1+\lambda_4+\lambda_5+\lambda_6+\lambda_7+\lambda_8\\
&\hskip 4cm+\lambda_4\lambda_6+\lambda_4\lambda_7+\lambda_4\lambda_8+\lambda_5\lambda_6+\lambda_5\lambda_7+\lambda_6\lambda_8+\lambda_4\lambda_6\lambda_8)\\
&=-\lambda^2,\\
\Zout_{G,v}(\lambdab)&=\lambda_3(1+\lambda_1)(1+\lambda_5+\lambda_7+\lambda_8+\lambda_5\lambda_7)\\
&\hskip 1cm+(1+\lambda_1+\lambda_2)(1+\lambda_4+\lambda_5+\lambda_6+\lambda_7+\lambda_8\\
&\hskip 4cm+\lambda_4\lambda_6+\lambda_4\lambda_7+\lambda_4\lambda_8+\lambda_5\lambda_6+\lambda_5\lambda_7+\lambda_6\lambda_8+\lambda_4\lambda_6\lambda_8)\\
&=-\lambda^2,\\
\end{aligned}
\end{equation}
Since $(\Delta,\lambda)$ implements the activity $-1$ by the first part of the proof, we may invoke Lemma~\ref{lem:nonuniform} (see also Remark~\ref{rem:blowup}) to obtain a bipartite graph $G'=(V',E')$ of maximum degree $\Delta$ such that $v\in V'$ and
\[\frac{\Zin_{G',v}(\lambda)}{\Zout_{G',v}(\lambda)}=\frac{\Zin_{G,v}(\lambdab)}{\Zout_{G,v}(\lambdab)}=+1.\]
Further, by the construction in Lemma~\ref{lem:nonuniform}, $v$ continues to have degree 1 in $G'$. It follows that $G'$ (with terminal $v$) implements the activity $+1$, as wanted.
This completes   the proof of Lemma~\ref{lem:minusone}.
\end{proof}

The following functions $f_+$ and $f_-$ will be important in what follows:
\begin{equation*}
\begin{gathered}
f_+:\mathbb{R}\backslash\{-1\}\mapsto \mathbb{R}\backslash\{0\},\mbox{ given by }f_+(x)=\frac{1}{1+x}\mbox{ for all } x\neq -1,\\
f_-:\mathbb{R}\backslash\{+1\}\mapsto \mathbb{R}\backslash\{0\},\mbox{ given by }f_-(x)=\frac{1}{1-x}\mbox{ for all } x\neq +1.
\end{gathered}
\end{equation*}
\begin{definition}\label{def:S}
Let $S\subseteq \mathbb{R}$ be the set of real numbers defined as follows: $z\in S$ iff for some integer $n\geq 0$, there exists a sequence $x_0,\hdots,x_n$ such that $x_0=0$, $x_n=z$ and for all $i=0,\hdots,n-1$ it holds that
\[\mbox{ either }x_{i+1}=f_+(x_i) \mbox{ or } x_{i+1}=f_-(x_i).\]
\end{definition}
In other words, the set $S$ in Definition~\ref{def:S} can be obtained by the following recursive procedure. Initialise $S_0=\{0\}$. For $h=0,1,\hdots$, define $S_{h+1}$ by first letting $S^+_{h+1}=f_+(S_{h})$ and  $S^-_{h+1}=f_-(S_h)$  and then setting $S_{h+1}=S^+_{h+1}\cup S^-_{h+1}$. The set $S$  can then be recovered by taking the union of the sets $S_h$, i.e., $S=\cup^{\infty}_{h=0}S_h$.

Our interest in the set $S$ is justified by the following lemma.
\begin{lemma}\label{lem:implementingS}
Let $\Delta\geq 3$ and $\lambda<0$. Suppose that $(\Delta,\lambda)$ implements the activities $-1$ and $+1$. Then, $(\Delta,\lambda)$ also implements the set of activities $\{\lambda z\mid z\in S\}$, where $S\subseteq \mathbb{R}$ is given in Definition~\ref{def:S}.
\end{lemma}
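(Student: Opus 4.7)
The plan is to build, for each $z \in S$, a bipartite graph with activities in $\{\lambda, +1, -1\}$ (a nonuniform hard-core instance) whose ratio $\Zout/Z$ at a designated vertex $u_z$ equals exactly $z$, then convert to uniform activity $\lambda$ via Lemma~\ref{lem:nonuniform} (using the gadgets $G_+$ and $G_-$ guaranteed by the hypothesis that $(\Delta,\lambda)$ implements $+1$ and $-1$), and finally invoke Lemma~\ref{lem:translation} to turn $R_\lambda(\cdot, u_z) = z$ into an implementation of the activity $\lambda z$. The degree of $u_z$ must stay at most $\Delta-1$ throughout, so that the final pendant attachment does not overshoot the degree bound. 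Bipartiteness is preserved at every step: attaching leaves keeps a bipartite graph bipartite, and Remark~\ref{rem:blowup} ensures the same for Lemma~\ref{lem:nonuniform}.

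The central computation, which drives the recursion, is that appending a new leaf $u_2$ of activity $a \in \{+1, -1\}$ to the current terminal $u_1$ of a nonuniform instance $G_1$ transforms the ratio according to $f_+$ or $f_-$. Directly,
\begin{equation*}
\Zin_{G_2, u_2}(\lambdab) = a\cdot\Zout_{G_1,u_1}(\lambdab), \quad \Zout_{G_2, u_2}(\lambdab) = Z_{G_1}(\lambdab),
\end{equation*}
so the ratio $\Zout_{G_2,u_2}/Z_{G_2}$ equals $Z_{G_1}/(Z_{G_1} + a\,\Zout_{G_1,u_1}) = 1/(1 + a\cdot R_\lambda(G_1, u_1))$, which is $f_+(R_\lambda(G_1,u_1))$ when $a=+1$ and $f_-(R_\lambda(G_1,u_1))$ when $a=-1$. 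This operation raises $u_1$'s degree by one and introduces $u_2$ as a fresh leaf of degree $1$, so the degree bounds ride along harmlessly whenever $u_1$'s old degree was at most $\Delta-1$.

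With this operation in hand, I induct on the length $n$ of the sequence $x_0 = 0, x_1, \dots, x_n = z$ from Definition~\ref{def:S}. The inductive step simply appends a pendant of activity $+1$ or $-1$ (according to which of $f_+,f_-$ was used) at the current terminal, and the previous paragraph takes care of both the ratio update and the degree/bipartiteness bookkeeping. After the induction I land in the nonuniform model; applying Lemma~\ref{lem:nonuniform} with $\lambda_1' = +1$, $\lambda_2' = -1$ converts the instance to uniform activity $\lambda$ while preserving the ratio (the multiplicative constants cancel in $\Zout/Z$), and the final pendant of Lemma~\ref{lem:translation} converts $R_\lambda = z$ into $\Zin/\Zout = \lambda z$.

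The main subtlety lies in the base case $x_0 = 0$, which demands a graph whose ratio $\Zout/Z$ is exactly $0$, that is, $\Zout = 0$ while $Z \ne 0$. This is where the hypothesis that $(\Delta, \lambda)$ implements $-1$ is used in an essential way. Since $G_-$ implements $-1$, its partition function satisfies $Z_{G_-}(\lambda) = \Zin_{G_-,u_-}(\lambda) + \Zout_{G_-,u_-}(\lambda) = 0$. Take $G_0$ to consist of $G_-$ together with a fresh vertex $u_0$ of activity $\lambda$ adjacent to $u_-$. Then $\Zout_{G_0,u_0}(\lambda) = Z_{G_-}(\lambda) = 0$, while $\Zin_{G_0,u_0}(\lambda) = \lambda \cdot \Zout_{G_-,u_-}(\lambda) \ne 0$ by Definition~\ref{def:Gimplement}, so $Z_{G_0}(\lambda) \ne 0$ and $R_\lambda(G_0, u_0) = 0$. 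The vertex $u_0$ has degree $1$ in $G_0$, and the only other degree change is $u_-$'s, going from $1$ to $2$, so the bound $\Delta \geq 3$ is respected. Engineering this base case is the only place where the $-1$-implementation hypothesis is used beyond its role as a ``replacement gadget'' in Lemma~\ref{lem:nonuniform}, and is the one step that requires any real cleverness; the rest of the argument is a straightforward recursion on the construction of $S$.
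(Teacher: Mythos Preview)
Your argument is correct and is essentially the paper's proof, reorganised. The paper carries out the same induction on the defining sequence $x_0=0,\dots,x_n=z$ but works with uniform activity $\lambda$ from the start: at each step it attaches a full copy of $G_s$ (with $s\in\{+,-\}$) to the current terminal $u_i$ and takes the terminal $v_s$ of that copy as the new terminal $u_{i+1}$, verifying directly that $R_\lambda(G_{i+1},u_{i+1})=f_s(x_i)$. Your approach of attaching a single pendant of activity $\pm 1$ in the nonuniform model and then invoking Lemma~\ref{lem:nonuniform} at the end produces literally the same graph (after the identification in Lemma~\ref{lem:nonuniform}, your pendant $u_{i+1}$ \emph{is} the paper's $v_s$), so the two presentations are interchangeable. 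The base case, the degree bookkeeping (terminals always have degree at most $2$ in the final uniform graph before Lemma~\ref{lem:translation} is applied), and the bipartiteness check are handled identically in both.
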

\begin{proof}
For simplicity, we drop the $\lambda$'s from notation, i.e., we will just write $Z_G,\Zout_{G,v},\Zin_{G,v}$ instead of $Z_G(\lambda),\Zout_{G,v}(\lambda),\Zin_{G,v}(\lambda)$.

Since $(\Delta,\lambda)$ implements the activities $-1$ and $+1$, we have that there exist bipartite graphs $G_+,G_-$ of maximum degree at most $\Delta$ with terminals $v_+,v_-$, respectively, such that
\begin{gather*}
\Zout_{G_+,v_+}\neq 0\mbox{ and }\frac{\Zin_{G_+,v_+}}{\Zout_{G_+,v_+}}=+1,\quad \Zout_{G_-,v_-}\neq 0 \mbox{ and } \frac{\Zin_{G_-,v_-}}{\Zout_{G_-,v_-}}=-1.
\end{gather*}
Recall also (from Definitions~\ref{def:Gimplement} and~\ref{def:implement}) that $v_+,v_-$ have degree 1 in $G_+$ and $G_-$, respectively.

Consider an arbitrary $z\in S$. Then, there exists a sequence  $x_0,\hdots,x_n$ such that $x_0=0$, $x_n=z$ and for all $i=0,\hdots,n-1$ it holds that
\[\mbox{ either }x_{i+1}=f_+(x_i) \mbox{ or } x_{i+1}=f_-(x_i).\]
For $i=0,\hdots,n$, we will construct inductively  a bipartite graph $G_i$ of maximum degree $\Delta$ and specify a vertex $u_i$ in $G_i$ with degree at most $2$ such that $R(G_i,u_i)=x_i$. Using this for $i=n$ in conjuction with Lemma~\ref{lem:translation} yields that $(\Delta,\lambda)$ implements $\lambda z$, as wanted.

We begin with the base case $i=0$. Consider the graph $G$ obtained by adding a new vertex $u$ to the graph $G_-$ and connecting $u$ and $v_-$ with an edge. Note that $G$ is bipartite and has maximum degree at most $\Delta$; also,  $u$ has degree 1 in $G$.  Further, we have
\begin{gather*}
\Zout_{G,u}=Z_{G_-}=\Zin_{G_-,v_-}+\Zout_{G_-,v_-}=0,\\
Z_G=\Zin_{G,u}+\Zout_{G,u}=\Zin_{G,u}=\lambda\cdot \Zout_{G_-,v_-}\neq 0.
\end{gather*}
It follows that $R(G,u)=\Zout_{G,u}/Z_G=0$, which completes the proof for $i=0$ by setting $G_0=G$ and $u_0=u$.

For the induction step, let $i$ be an integer satisfying $0\leq i\leq n-1$ and assume that $G_i$ is a bipartite graph of maximum degree at most $\Delta$ such that $R(G_i,u_i)=x_{i}$ for some vertex $u_i$  whose degree is at most two. Let $s\in \{+,-\}$ be such that $x_{i+1}=f_s(x_{i})$.

Let $G$ be the bipartite graph obtained as follows: take a copy of $G_s$, a copy of $G_i$ and connect the vertices $v_s$ and $u_i$ with an edge. Note that $G$ has maximum degree $\Delta$ since the only vertices whose degree has increased (by one) are the vertices $u_i$ and $v_s$; the vertex $v_s$ has degree two in $G$ and $u_i$ has degree at most three.

We next show that $R(G,v_s)=x_{i+1}$ which establishes the induction step by setting $G_{i+1}=G$ and $u_{i+1}=v_s$. We have that
\[\Zout_{G,v_s}=\Zout_{G_s, v_s}Z_{G_{i}}, \quad \Zin_{G,v_s}=\Zin_{G_s,v_s} \Zout_{G_{i},u_i}, \quad Z_{G}=\Zin_{G,v_s}+\Zout_{G,v_s},\]
so that
\[R(G,v_s)=\frac{\Zout_{G,v_s}}{Z_{G}}=\frac{\Zout_{G_s, v_s} Z_{G_{i}}}{\Zout_{G_s, v_s} Z_{G_{i}}+\Zin_{G_s,v_s}\Zout_{G_{i},v_i}}=\frac{1}{1+\frac{\Zin_{G_s,v_s}}{\Zout_{G_s, v_s}}x_i}=f_s(x_i)=x_{i+1}.\]
This concludes the proof of Lemma~\ref{lem:implementingS}.
\end{proof}

It is simple to see that all numbers in the set $S$ of Definition~\ref{def:S} are rationals. Somewhat surprisingly, the following lemma asserts that $S$ is in fact the set $\mathbb{Q}$ of all rational numbers.
\begin{lemma}\label{lem:SequalsQ}
Let $S\subseteq \mathbb{R}$ be the set in Definition~\ref{def:S}. Then, $S=\mathbb{Q}$, i.e., $S$ is the set of all rational numbers. 
\end{lemma}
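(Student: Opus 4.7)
The proof plan has two directions. The easy inclusion $S \subseteq \mathbb{Q}$ follows by induction on the length of the generating sequence: $0 \in \mathbb{Q}$, and both $f_+(x) = 1/(1+x)$ and $f_-(x) = 1/(1-x)$ send rationals to rationals.

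For the hard inclusion $\mathbb{Q} \subseteq S$, the plan is to first reduce to the positive case. Let $r = -p/q$ with integers $p, q \geq 1$ and $\gcd(p,q)=1$. Setting $x = (p+q)/p$, one checks that $x > 0$, $x \neq 1$, $\gcd(p+q,p) = 1$, and $f_-(x) = 1/(1 - (p+q)/p) = -p/q = r$. So if every positive rational belongs to $S$, then $x \in S$ and hence $r \in S$. Thus it suffices to prove that every nonnegative rational is in $S$.

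The main step is then a strong induction on $p + q$ for $r = p/q$ written in lowest terms with $p \geq 0$ and $q \geq 1$. The base cases are $p+q=1$ (giving $r = 0 \in S$ by definition) and $p+q = 2$ (giving $r = 1 = f_+(0) \in S$). For the inductive step, I split on whether $p < q$ or $p > q$.

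\textbf{Case $p < q$} (so $0 < r < 1$, and necessarily $p \geq 1$ by coprimality with $q \geq 2$): let $r' = (q-p)/p$. Then $r' > 0$, $\gcd(q-p,p)=\gcd(q,p)=1$, and its size $(q-p)+p = q$ is strictly less than $p+q$, so $r' \in S$ by induction. Since $r' \neq -1$, $f_+(r')$ is defined and equals $1/(1+(q-p)/p) = p/q = r$, so $r \in S$. \textbf{Case $p > q$}: let $x = q/(p-q)$. Then $x > 0$, $\gcd(q,p-q)=\gcd(q,p)=1$, and its size $q+(p-q) = p$ is strictly less than $p+q$, so $x \in S$ by induction. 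The identity $f_- \circ f_+(x) = 1/(1 - 1/(1+x)) = (1+x)/x = 1 + 1/x$ then yields $f_-(f_+(x)) = 1 + (p-q)/q = p/q = r$; one checks the intermediate value $f_+(x) = (p-q)/p \neq 1$ (since $q \geq 1$), so $f_-$ is defined there, and $x > 0 \neq -1$, so $f_+$ is defined at $x$. Hence $r \in S$.

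The main obstacle I anticipate is Case $p > q$: the naive single-step inverses $f_+^{-1}(p/q) = (q-p)/p$ and $f_-^{-1}(p/q) = (p-q)/p$ both have size $2p - q$, which is not smaller than $p+q$ when $p \geq 2q$. The trick of using the two-step composition $f_- \circ f_+$ to jump to $q/(p-q)$ (whose size is $p$, always strictly less than $p+q$) is what makes the induction go through cleanly, and this is the crux of the argument.
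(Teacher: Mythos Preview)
Your proof is correct and takes a cleaner route than the paper's. Both arguments ultimately rest on the same Euclidean-type descent on $|p|+|q|$, but the organization differs. The paper works \emph{backward}: for a target $\rho$, it studies the forward orbit $S_\rho$ under $f_\pm$, uses a minimality argument to show $S_\rho$ meets the finite set $\{-1,0,1/2,1,2\}$, deduces $-1\in S_\rho$, and then traces back from $-1$ to $\rho$ by expressing $f_+^{-1}$ and $f_-^{-1}$ as 5-step and 2-step compositions of $f_\pm$. Your argument works \emph{forward}: you build a path from $0$ to $r=p/q$ directly, and because you are going forward you only need the short compositions $f_+$ (one step, case $p<q$) and $f_-\circ f_+$ (two steps, case $p>q$) to lift from a predecessor of smaller size. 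Your reduction of negative rationals to positive ones via a single $f_-$ is likewise more economical than the paper's separate treatment of signs via the identity $f_-(f_-(f_+(x)))=-x$. The paper's framing is closer to the group-theoretic ``ping-pong'' motivation mentioned in the overview, but your direct induction is shorter and arguably more transparent.
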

\begin{proof}
As noted earlier, it is simple to see that $S\subseteq \mathbb{Q}$. Thus, we only need to argue that $\mathbb{Q}\subseteq S$.

Since $0\in S$ (by taking $n=0$ in Definition~\ref{def:S}) and $f_+(0)=1$, we have that $0,1\in S$. Note that
\begin{equation}\label{eq:oppositesign}
f_-(f_-(f_+(x)))=-x \mbox{ for } x\neq -1,0.
\end{equation}
It follows that $-1\in S$.  Observe also that $\frac{1}{2},2\in S$ since $f_+(f_+(0))=1/2$ and $f_-(f_+(f_+(0)))=2$.

Let $T:=\{-1,0,1/2,1,2\}$; the arguments above established that $T\subseteq S$. Consider an arbitrary $\rho \in \mathbb{Q}$ such that $\rho\notin T$. To prove the lemma, we need to show that $\rho\in S$.

We will show that, for some integer $n\geq 0$, there is a sequence $\{\rho_{i}\}^n_{i=0}$  such that
\begin{enumerate}[label=(\roman*),ref=(\roman*),leftmargin=*]
\item \label{it:r0rn} $\rho_0=\rho$, $\rho_n=-1$.
\item \label{it:rineq} $\rho_i\notin \{0,1/2,1\}$ for  $i=0,\hdots,n-1$.
\item \label{it:rieq} $\rho_{i+1}=f_+(\rho_i)$ or  $\rho_{i+1}=f_-(\rho_i)$ for $i=0,\hdots,n-1$.
\end{enumerate}
Before proving the existence of such a sequence, we first show how to conclude that $\rho\in S$. To do this,  let $x_i:=\rho_{n-i}$ for  $i=0,\hdots,n$. Properties \ref{it:r0rn}--\ref{it:rieq} of the sequence $\{\rho_{i}\}^n_{i=0}$ translate into the following properties of the sequence $\{x_{i}\}^n_{i=0}$:
\begin{enumerate}[label=(\alph*),ref=(\alph*),leftmargin=*]
\item \label{it:x0xn} $x_0=-1$, $x_n=\rho$.
\item \label{it:xineq} $x_i\notin \{0,1/2,1\}$ for $i=1,2,\hdots,n$.
\item \label{it:xieq} $x_{i}=f^{-1}_+(x_{i-1})$ or  $x_{i}=f^{-1}_-(x_{i-1})$ for $i=1,2,\hdots,n$.
\end{enumerate}
We show by induction on $i$ that $x_i\in S$ for all $i=0,\hdots,n$, which for $i=n$ gives that $\rho\in S$ (since by Item~\ref{it:x0xn} we have $x_n=\rho$). For the base case $i=0$, we have that $x_0=-1$ by Item~\ref{it:x0xn} and hence $x_0\in S$. For the induction step, assume that $x_i\in S$ for some integer $0\leq i\leq n-1$, our goal is to show that $x_{i+1}\in S$. The main observation is that the inverses of the functions $f_-$ and $f_+$ can be obtained by composing appropriately the functions $f_-$ and $f_+$. Namely, we have that
\begin{gather}
f^{-1}_-(x)=\frac{x-1}{x}=f_-(f_-(x)) \mbox{ for $x\neq 0,1$},\label{eq:fminverse}\\
f^{-1}_+(x)=\frac{1-x}{x}=f_-(f_-(f_+(f_-(f_-(x)))))) \mbox{ for $x\neq 0,\frac{1}{2},1$}.\label{eq:fpinverse}
\end{gather}
\eqref{eq:fminverse} is proved by just making the substitutions. \eqref{eq:fpinverse} is obtained from  \eqref{eq:oppositesign} and \eqref{eq:fminverse}, and checking when $f_-(f_-(x))=\frac{x-1}{x}$ equals $-1$ and $0$. Since by Items~\ref{it:x0xn} and~\ref{it:xineq} we have that $x_j\neq 0,1/2,1$ for all $0\leq j\leq n$ and $x_i\in S$ by the induction hypothesis, it follows by Item~\ref{it:xieq} and \eqref{eq:fminverse}, \eqref{eq:fpinverse} that $x_{i+1}\in S$, as wanted.

It remains to establish the existence of the sequence $\{\rho_i\}^n_{i=0}$ with the properties \ref{it:r0rn}--\ref{it:rieq}. Consider the following set $S_\rho$, which is defined analogously to the set $S$ with the only difference that the starting point for $S_\rho$ is the point $\rho$ (instead of $0$ that was used in the definition of $S$). Formally, $z\in S_\rho$ iff for some integer $n\geq 0$, there exists a sequence $\{\rho_i\}^n_{i=0}$ such that $\rho_0=\rho$, $\rho_n=z$ and for all $i=0,\hdots,n-1$ it holds that either $\rho_{i+1}=f_+(\rho_i)$ or $x_{i+1}=f_-(\rho_i)$. For convenience, we will call such a sequence a certificate that $z\in S_\rho$ and we will refer to $n$ as the length of the certificate.

We will show that, for any $\rho\in\mathbb{Q}$ such that $\rho\notin T=\{-1,0,1/2,1,2\}$, it holds that
\begin{equation}\label{eq:Srho}
0,1\notin S_\rho, \qquad -1\in S_\rho.
\end{equation}
Prior to that, let us use \eqref{eq:Srho} to establish the existence of the desired sequence. Among all certificates that $-1\in S_\rho$,  consider one with the smallest possible length, which we will denote by $\{\rho_i\}^n_{i=0}$. (The existence of such a certificate is guaranteed by \eqref{eq:Srho}.) We claim that the sequence  $\{\rho_i\}^n_{i=0}$ has all of the required properties~\ref{it:r0rn},~\ref{it:rineq}, and \ref{it:rieq}. By the definition of a certificate and since $\rho_n=-1$, we have that the sequence  $\{\rho_i\}^n_{i=0}$ satisfies automatically properties~\ref{it:r0rn} and~\ref{it:rieq}. To prove property~\ref{it:rineq}, we  argue that $\rho_i\not\in \{0,1/2,1\}$ for all integers $0\leq i\leq n$. We  cannot have  an  $i$ such that $\rho_i=0$ or $\rho_i=1$ since this would contradict that $0,1\notin S_\rho$ (by \eqref{eq:Srho}). Suppose then that $\rho_i=1/2$ for some $i$. We have that $i>0$ since $\rho_0=\rho\neq 1/2$. Thus, it must be that either $\rho_i=f_+(\rho_{i-1})$ or $\rho_i=f_-(\rho_{i-1})$; in the former case, we have that $\rho_{i-1}=1$, contradicting that  $1\notin S_\rho$, and in the latter case, we have that $\rho_{i-1}=-1$, contradicting that $\{\rho_i\}^n_{i=0}$ was a certificate of smallest length certifying that $-1\in S_\rho$.

To complete the proof, we only need to establish \eqref{eq:Srho}. First, we show that $0,1\notin S_\rho$. Observe that $\rho\neq 0,1$, so any certificate that $0,1\in S_\rho$ must have nonzero length. Further, the range of the functions $f_+,f_-$ excludes 0, which implies that $0\notin S_\rho$. Moreover, the only way that we can have $1\in S_\rho$ is if for some $x\in S_\rho$ it holds that $f_+(x)=1$ or $f_-(x)=1$. Both of these mandate that $x=0$, but $0\notin S_\rho$ as we just showed.

The remaining bit of \eqref{eq:Srho}, i.e., that $-1\in S_\rho$, will require more  effort to prove. As a starting point, note that from $\rho\in \mathbb{Q}$, we have  that $S_\rho\subseteq \mathbb{Q}$. Also, $S_\rho$ is nonempty since $\rho\in S_\rho$. Thus, there exists $z^*\in S_\rho$ such that $z^*=p/q$ where $p,q$ are integers such that $|p|+|q|$ is minimum. Since $|p|+|q|$ is minimum, it must be the case that $\gcd(p,q)=1$.

We first prove that $z^*\in T$; note, we already know that $z^*\neq 0,1$ since $0,1\notin S_\rho$ and $z^*\in S_\rho$, but keeping the values $0,1$ into consideration will be convenient for the upcoming argument. Namely, for the sake of contradiction, assume that $z^*\notin T$, which implies in particular that $z^*\neq 0,-1$. Since $z^*\in S_\rho$,  by \eqref{eq:oppositesign}, we obtain that $-z^*\in S_\rho$ as well. By switching to $-z^*$ if necessary, we may thus assume that $z^*$ is positive and hence that $p,q>0$, i.e., that both $p,q$ are positive integers. Since $z^*\neq 1$ (from $z^*\notin T$), we have that $p\neq q$. For each of the cases $p>q$ and $p<q$,  we obtain a contradiction to the minimality of $p+q$ by constructing $z'=p'/q'\in S_\rho$ with $p',q'$ positive integers such that $0<p'+q'<p+q$.

\begin{enumerate}[label=\textbf{Case \arabic*}.,leftmargin=*]
\item $p>q$. Since $z^*\neq 1,2$ (from $z^*\notin T$), we have that $p/q\neq 1$ and $f_-(p/q)=\frac{q}{q-p}\neq 0,-1$, so by \eqref{eq:oppositesign}  we have that
\[f_-(f_-(f_+(f_-(p/q))))=\frac{q}{p-q}.\]
Thus, letting $p'=q$ and $q'=p-q$ yields $z'=p'/q'\in S_\rho$ with $p'>0,q'>0$ and $0<p'+q'<p+q$.

\item $p<q$. Since $z^*\neq 0,1/2,1$ (from $z^*\notin T$), by \eqref{eq:fpinverse} we have that
\[f_-(f_-(f_+(f_-(f_-(p/q))))))=\frac{q-p}{p}.\]
Thus, letting $p'=q-p$ and $q'=p$ yields $z'=p'/q'\in S_\rho$ with $p'>0,q'>0$ and $0<p'+q'<p+q$.
\end{enumerate}
This concludes the proof that $z^*\in T$. In fact, we can now deduce easily that $-1\in S_\rho$. As noted earlier, we have that $z^*\neq 0,1$ as a consequence of $0,1\notin S_\rho$, so in fact $z^*\in\{-1,1/2,2\}$. If $z^*=-1$, then we automatically have that $-1\in S_\rho$ since $z^*$ was chosen to be in $S_\rho$. If $z^*=2$, then  we have that $2\in S_\rho$ and hence $f_-(2)=-1\in S_\rho$ as well. Finally, if $z^*=1/2$, we  have that $1/2\in S_\rho$ and hence $f_-(f_-(1/2))=-1\in S_\rho$.
Thus, it holds that $-1\in S_\rho$, which completes the proof of \eqref{eq:Srho}.

This concludes the proof of Lemma~\ref{lem:SequalsQ}.
\end{proof}

We are now ready to prove Lemma~\ref{lem:zero}.
\begin{lemzero}
\statelemzero
\end{lemzero}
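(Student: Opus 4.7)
The plan is to deduce Lemma~\ref{lem:zero} essentially by chaining the three preceding results (Lemma~\ref{lem:minusone}, Lemma~\ref{lem:implementingS}, Lemma~\ref{lem:SequalsQ}), after first observing that the hypothesis $Z_T(\lambda)=0$ forces $\lambda<0$.

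First I would note that $\lambda$ must be negative. Indeed, the independent set polynomial $Z_T(\lambda)=\sum_{I\in \mathcal{I}_T}\lambda^{|I|}$ has all nonnegative coefficients and contains the constant term $1$ (from the empty independent set), so $Z_T(\lambda)\geq 1>0$ whenever $\lambda\geq 0$. Combined with $\lambda\in \Rnz$, this forces $\lambda<0$, which allows me to invoke Lemma~\ref{lem:implementingS} (which is stated only for $\lambda<0$). At the same time, $\Delta=\max\{d,3\}\geq 3$, so the degree hypotheses of both Lemma~\ref{lem:minusone} and Lemma~\ref{lem:implementingS} are satisfied.

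Next I would apply Lemma~\ref{lem:minusone} to the tree $T$ with $Z_T(\lambda)=0$ and maximum degree $d$. Its conclusion is precisely that $(\Delta,\lambda)$ implements both activities $-1$ and $+1$. Feeding this into Lemma~\ref{lem:implementingS}, I obtain that $(\Delta,\lambda)$ implements every activity in the set $\{\lambda z \mid z\in S\}$, where $S$ is the set from Definition~\ref{def:S}. Finally, Lemma~\ref{lem:SequalsQ} identifies $S=\mathbb{Q}$, so $(\Delta,\lambda)$ implements every activity in $\lambda\cdot\mathbb{Q}=\{\lambda q\mid q\in\mathbb{Q}\}$. Since $\lambda\neq 0$, the set $\lambda\cdot\mathbb{Q}$ is dense in $\mathbb{R}$: given any target $\lambda'\in \mathbb{R}$ and any $\epsilon>0$, one can pick $q\in \mathbb{Q}$ within $\epsilon/|\lambda|$ of $\lambda'/\lambda$, so that $|\lambda q-\lambda'|\leq \epsilon$. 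This density is exactly the statement of Lemma~\ref{lem:zero}.

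There is really no hard step left here: all the substantive work has been done in the lemmas above. The only small item to double-check is a notational compatibility point, namely that \emph{implementing} an activity $\lambda q$ (in the sense of Definition~\ref{def:implement}) together with the density of $\lambda\cdot\mathbb{Q}$ in $\mathbb{R}$ yields implementation with accuracy $\epsilon$ for any target $\lambda'\in \mathbb{R}$; this is immediate from Definition~\ref{def:Gimplement}, since $|\Zin_{G,v}(\lambda)/\Zout_{G,v}(\lambda)-\lambda'|\leq |\lambda q-\lambda'|\leq \epsilon$ whenever $G$ exactly implements $\lambda q$ and $|\lambda q-\lambda'|\leq \epsilon$. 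Thus the proof reduces to a two-line composition of the three preceding lemmas, prefaced by the observation that $\lambda<0$.
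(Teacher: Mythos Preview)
Your proof is correct and follows essentially the same chain of lemmas as the paper's proof (Lemma~\ref{lem:minusone} $\Rightarrow$ Lemma~\ref{lem:implementingS} $\Rightarrow$ Lemma~\ref{lem:SequalsQ}). Your additional observation that $Z_T(\lambda)=0$ forces $\lambda<0$ is a welcome clarification, since Lemma~\ref{lem:implementingS} is stated only for $\lambda<0$ and the paper's own proof leaves this hypothesis implicit.
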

\begin{proof}[Proof of Lemma~\ref{lem:zero}]
By Lemma~\ref{lem:minusone}, $(\Delta,\lambda)$ implements the activities $-1$ and $+1$. Thus, by Lemma~\ref{lem:implementingS}, we have that $(\Delta,\lambda)$ also implements the set of activities $\{\lambda z\mid z\in S\}$, where $S\subseteq \mathbb{R}$ is given in Definition~\ref{def:S}. By Lemma~\ref{lem:SequalsQ}, we have that $S=\mathbb{Q}$ and hence $(\Delta,\lambda)$ implements a dense set of activities in $\Reals$, as wanted (since $\lambda\neq 0$).
\end{proof}

\subsection{Proof of Lemma~\ref{lem:smalllambda}}\label{sec:darytree}
In this section, we give the proof of Lemma~\ref{lem:smalllambda}, which is the final missing piece that was used in the proof of Theorem~\ref{thm:precision}.

Recall from Section~\ref{sec:Rl} that the ratio $R_\lambda(G,v)$ is defined as $\Zout_{G,v}(\lambda)/Z_G(\lambda)$, whenever $Z_G(\lambda)\neq0$.  The following standard lemma gives a recursive procedure to compute $R_\lambda(G,v)$ and will thus be useful in studying the activities that $(\Delta,\lambda)$ implements.
\begin{lemma}\label{lem:recursion}
Let $\lambda\in\Rnz$. Let $G$ be a connected graph and let  $v$ be a vertex all of whose neighbours are in different components of $G\backslash v$.  Denote by $G_1,\hdots, G_d$ the connected components of $G\backslash v$ and by $v_1,\hdots,v_d$ the neighbours of $v$ in $G_1,\hdots, G_d$. Assume that $Z_{G_1}(\lambda),\hdots,Z_{G_d}(\lambda)\neq 0$.

Then, $Z_G(\lambda)=0$ iff $\prod^{d}_{i=1}R_\lambda(G_i,v_i)=-1/\lambda$. Further, if $Z_G(\lambda)\neq 0$, it holds that
\[R_{\lambda}(G,v)=f\big(R_\lambda(G_1,v_1),\hdots,R_\lambda(G_d,v_d)\big), \mbox{ where } f(x_1,\hdots,x_d):=\frac{1}{1+\lambda\prod^d_{i=1}x_i}.\]
\end{lemma}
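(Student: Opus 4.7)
The plan is to split independent sets of $G$ based on whether they contain $v$ and then exploit the fact that the components $G_1,\dots,G_d$ of $G\setminus v$ are disconnected from each other, which makes the partition function factor as a product.

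First I would establish two factorisation identities. For independent sets not containing $v$: any such $I$ is just an independent set of $G\setminus v$, and since the $G_i$ are the connected components of $G\setminus v$, the contributions multiply, yielding
\[
\Zout_{G,v}(\lambda)=\prod_{i=1}^{d} Z_{G_i}(\lambda).
\]
For independent sets containing $v$: the neighbourhood constraint forces each $v_i\notin I$, and once $v$ and the $v_i$ are fixed the remaining vertices in each $G_i$ are unconstrained apart from being an independent set that avoids $v_i$. Hence
\[
\Zin_{G,v}(\lambda)=\lambda\prod_{i=1}^{d}\Zout_{G_i,v_i}(\lambda).
\]
Both identities are immediate from the definition of the partition function once one observes that the hypothesis ``the neighbours of $v$ lie in different components of $G\setminus v$'' prevents any cross-component interaction when $v$ is included.

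Next I would add these to obtain
\[
Z_G(\lambda)=\prod_{i=1}^{d} Z_{G_i}(\lambda)\;+\;\lambda\prod_{i=1}^{d}\Zout_{G_i,v_i}(\lambda)=\Big(\prod_{i=1}^{d} Z_{G_i}(\lambda)\Big)\Big(1+\lambda\prod_{i=1}^{d} R_\lambda(G_i,v_i)\Big),
\]
where I used $Z_{G_i}(\lambda)\neq 0$ to factor out $\prod_i Z_{G_i}(\lambda)$ and rewrite the second term using the definition $R_\lambda(G_i,v_i)=\Zout_{G_i,v_i}(\lambda)/Z_{G_i}(\lambda)$. Since this prefactor is nonzero, the vanishing of $Z_G(\lambda)$ is equivalent to $1+\lambda\prod_i R_\lambda(G_i,v_i)=0$, which rearranges to $\prod_i R_\lambda(G_i,v_i)=-1/\lambda$ (recall $\lambda\neq 0$). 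This settles the first conclusion.

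Finally, assuming $Z_G(\lambda)\neq 0$, I would simply divide $\Zout_{G,v}(\lambda)=\prod_i Z_{G_i}(\lambda)$ by the displayed expression for $Z_G(\lambda)$; the product $\prod_i Z_{G_i}(\lambda)$ cancels, leaving
\[
R_\lambda(G,v)=\frac{1}{1+\lambda\prod_{i=1}^{d} R_\lambda(G_i,v_i)}=f\bigl(R_\lambda(G_1,v_1),\dots,R_\lambda(G_d,v_d)\bigr),
\]
which is the desired recursion. There is really no ``hard step'' here: the only thing to be careful about is verifying that the component hypothesis genuinely gives the factorisation of $\Zin_{G,v}$ (so that the constraints on $v_i$ decouple across the $G_i$), and keeping track that all denominators $Z_{G_i}(\lambda)$ are nonzero so that dividing is legitimate.
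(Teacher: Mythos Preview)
Your proof is correct and follows essentially the same route as the paper's: both establish the factorisations $\Zout_{G,v}(\lambda)=\prod_i Z_{G_i}(\lambda)$ and $\Zin_{G,v}(\lambda)=\lambda\prod_i \Zout_{G_i,v_i}(\lambda)$, add them and pull out $\prod_i Z_{G_i}(\lambda)$ to obtain $Z_G(\lambda)=\big(\prod_i Z_{G_i}(\lambda)\big)\big(1+\lambda\prod_i R_\lambda(G_i,v_i)\big)$, then read off both conclusions. The only difference is presentational --- you spell out the factorisation reasoning more explicitly, whereas the paper compresses it into one displayed line.
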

\begin{proof}
For convenience, we drop the $\lambda$'s from notation. Using that $Z_{G_1},\hdots,Z_{G_d}\neq 0$, we have that
\[Z_G=\Zin_{G,v}+\Zout_{G,v}=\lambda\prod^d_{i=1}\Zout_{G_i,v_i}+\prod^{d}_{i=1}Z_{G_i}=\Big(\prod^{d}_{i=1}Z_{G_i}\Big)\Big(\lambda\prod^d_{i=1}R(G_i,v_i)+1\Big),\]
and thus $Z_G=0$ iff $\prod^{d}_{i=1}R(G_i,v_i)=-1/\lambda$. Also, we have
\[R(G,v)=\frac{\Zout_{G,v}}{\Zin_{G,v}+\Zout_{G,v}}=\frac{\prod^{d}_{i=1}(\Zin_{G_i,v_i}+\Zout_{G_i,v_i})}{\lambda\prod^d_{i=1}\Zout_{G_i,v_i}+\prod^{d}_{i=1}(\Zin_{G_i,v_i}+\Zout_{G_i,v_i})}=\frac{1}{1+\lambda\prod^d_{i=1}R(G_i,v_i)}.\qedhere\]
\end{proof}

We will also need the following technical lemma.
\begin{lemma}\label{lem:approx2q2}
Let $\lambda\in \mathbb{R}$. Then, for all $x\neq -1$, there exist  positive constants $\eta,M>0$ such that for all $x'$ with $|x-x'|\leq \eta$, it holds that
\[\Big|\frac{\lambda}{1+x}-\frac{\lambda}{1+x'}\Big|\leq M|x-x'|.\]
\end{lemma}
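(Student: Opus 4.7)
The plan is a direct Lipschitz estimate for the function $g(t) := \lambda/(1+t)$ on a neighbourhood of $x$, exploiting that $x \neq -1$ so $g$ is smooth at $x$. The case $\lambda = 0$ is trivial (both sides vanish, any $\eta,M > 0$ works), so I may assume $\lambda \neq 0$.

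First I would isolate $x'$ away from the singularity $-1$. Setting $\eta := |1+x|/2 > 0$, the triangle inequality gives, for any $x'$ with $|x - x'| \leq \eta$,
\[
|1 + x'| \;\geq\; |1+x| - |x - x'| \;\geq\; |1+x| - \tfrac{1}{2}|1+x| \;=\; \tfrac{1}{2}|1+x| \;>\; 0,
\]
so in particular $1 + x' \neq 0$ and the expression $\lambda/(1+x')$ is well-defined throughout this neighbourhood.

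Next I would expand the difference by putting the two fractions over a common denominator:
\[
\frac{\lambda}{1+x} - \frac{\lambda}{1+x'} \;=\; \frac{\lambda(x' - x)}{(1+x)(1+x')}.
\]
Taking absolute values and applying the lower bound on $|1+x'|$ from the previous step,
\[
\Big|\frac{\lambda}{1+x} - \frac{\lambda}{1+x'}\Big| \;=\; \frac{|\lambda|\,|x - x'|}{|1+x|\,|1+x'|} \;\leq\; \frac{|\lambda|\,|x - x'|}{|1+x| \cdot \tfrac{1}{2}|1+x|} \;=\; \frac{2|\lambda|}{|1+x|^2}\,|x - x'|.
\]
Thus the choice $M := 2|\lambda|/|1+x|^2$ works together with $\eta := |1+x|/2$.

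There is essentially no obstacle here; the lemma is a routine local Lipschitz bound for a Möbius transformation away from its pole. The only thing to be careful about is making $\eta$ small enough that $|1+x'|$ stays bounded below by a positive constant depending only on $x$, which is exactly what the factor $1/2$ in $\eta = |1+x|/2$ achieves. (Equivalently, one could invoke the mean value theorem applied to $g$ on $[x-\eta, x+\eta]$ with $M := \max_{t \in [x-\eta, x+\eta]}|g'(t)| = \max |\lambda|/(1+t)^2$, but the direct algebraic manipulation above is shorter.)
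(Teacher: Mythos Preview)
Your proof is correct and follows essentially the same approach as the paper: both establish a local Lipschitz bound for $t\mapsto \lambda/(1+t)$ on a neighbourhood of $x$ avoiding the pole at $-1$. The paper invokes the Mean Value Theorem (exactly the alternative you mention parenthetically), whereas you carry out the algebra directly and obtain explicit constants $\eta=|1+x|/2$ and $M=2|\lambda|/|1+x|^2$; the difference is purely cosmetic.
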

\begin{proof}
We may assume that $\lambda\neq 0$, otherwise the result is trivial. The proof is analogous to that of Lemma~\ref{lem:crudeapprox}.  In particular, since $x\neq -1$, there exists $\eta>0$ such that $1+x'\neq 0$ for all $x'$ such that  $|x-x'|\leq \eta$. Consider the function $f(y)=\lambda/(1+y)$ for $y$ in the interval $I=[x'-\eta,x'+\eta]$. Then, $f'(y)$ is well-defined and continuous in the interval $I$, so by letting $M=\max_{y\in I} |f'(y)|$, we obtain the result.
\end{proof}

We are now ready to prove Lemma~\ref{lem:smalllambda}, which we restate here for convenience.
\begin{lemsmalllambda}
\statelemsmalllambda
\end{lemsmalllambda}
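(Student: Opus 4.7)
The plan is to reduce Lemma~\ref{lem:smalllambda} to the situation of Lemma~\ref{lem:largelambda} by constructing (in a suitably relaxed sense) an activity $\lambda'<-1/4$ that can be realised by $(\Delta,\lambda)$, and then transferring the density results available for $\lambda'$ back to $(\Delta,\lambda)$.

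First I would build a bipartite gadget from $(\Delta-1)$-ary trees. Let $T_h$ denote the $(\Delta-1)$-ary tree of depth $h$ rooted at $\rho_h$, so that $T_h$ has maximum degree $\Delta$ and $\rho_h$ has degree $\Delta-1$ in $T_h$ (for $h\geq 1$). By Lemma~\ref{lem:recursion}, whenever $Z_{T_{h-1}}(\lambda)\neq 0$, the ratio $R_h:=R_\lambda(T_h,\rho_h)$ satisfies $R_h=1/(1+\lambda R_{h-1}^{\Delta-1})$ with $R_0=1/(1+\lambda)>0$. The defining property of the Shearer threshold (for $\lambda<-\lambda^*(\Delta)$ there exists a truncation of the $\Delta$-regular tree of maximum degree $\Delta$ whose partition function is nonpositive) implies that there is a smallest $h^*\geq 1$ with $Z_{T_{h^*}}(\lambda)\leq 0$. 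If $Z_{T_{h^*}}(\lambda)=0$, then $T_{h^*}$ itself is a bipartite tree of maximum degree $\Delta$ with vanishing partition function, and Lemma~\ref{lem:zero} concludes the argument. Otherwise $Z_{T_{h^*}}(\lambda)<0$, and $R_0,\ldots,R_{h^*-1}$ are positive reals with $R_{h^*-1}^{\Delta-1}>1/|\lambda|$.

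In this remaining case I would assemble a tree gadget $T^*$ by joining a fresh root $\rho$ to $k$ copies of $T_{h^*-1}$ (and, if helpful, to additional copies of shallower $T_h$'s), choosing $k\leq\Delta-2$ so that $\rho$ can later be identified with a degree-$2$ vertex of a path without exceeding the overall degree bound $\Delta$. By Lemma~\ref{lem:recursion}, the ratio $\lambda':=\Zin_{T^*,\rho}(\lambda)/\Zout_{T^*,\rho}(\lambda)$ equals $\lambda$ times a product of available $R_h$-values. By tuning the multiplicities and the depths carefully, I expect this ratio to be driven strictly below $-1/4$. This is the technical crux of the proof: the degree budget $k\leq\Delta-2$ is tight, and one must exploit the inequality $R_{h^*-1}^{\Delta-1}>1/|\lambda|$ in combination with the shallower $R_h$'s to ensure that $\lambda'$ lands in the target range.

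With such a gadget $T^*$ in hand, the density argument proceeds by decorating paths. Attach a copy of $T^*$ to every vertex of a path $P_n$ by identifying $\rho$ with the path vertex; because internal path vertices have degree $2$ and $\rho$ has degree at most $\Delta-2$ in $T^*$, the resulting bipartite graph $P_n^*$ has maximum degree at most $\Delta$. A mild generalisation of Lemma~\ref{lem:nonuniform}, allowing terminals of degree larger than $1$ and proved by the same case analysis (on whether the attachment vertex is in the independent set) as the original, shows that the $\Zin$ and $\Zout$ quantities for $P_n^*$ at uniform activity $\lambda$ are proportional (by an explicit nonzero constant) to the corresponding quantities for $P_n$ with uniform vertex activity $\lambda'$. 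If $\lambda'\notin\Bc$, Lemma~\ref{lem:path} applied at activity $\lambda'$ asserts that the endpoint ratio $\Zin_{P_n,v}(\lambda')/\Zout_{P_n,v}(\lambda')$ takes values dense in $\mathbb{R}$ as $n$ varies, and after adjoining one extra pendant vertex to produce a genuine degree-$1$ terminal this yields a dense family of activities implemented by $(\Delta,\lambda)$. If instead $\lambda'\in\Bc$, Corollary~\ref{lem:bad} supplies an $n$ with $Z_{P_n}(\lambda')=0$, and the corresponding decorated graph is a bipartite tree of maximum degree $\Delta$ whose partition function at activity $\lambda$ vanishes, so Lemma~\ref{lem:zero} again delivers a dense implementation.
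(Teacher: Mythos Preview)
Your approach is essentially the paper's: attach $\Delta-2$ copies of a suitable $(\Delta-1)$-ary tree to each vertex of a path to simulate the hard-core model at an effective activity $\lambda'<-1/4$, then invoke Lemma~\ref{lem:path} (if $\lambda'\notin\Bc$) or Corollary~\ref{lem:bad} followed by Lemma~\ref{lem:zero} (if $\lambda'\in\Bc$). The paper's graph $G_n$ is precisely your decorated path $P_n^*$ with $k=\Delta-2$ copies of a single tree, and its equation~\eqref{eq:wPn} is exactly the ``mild generalisation of Lemma~\ref{lem:nonuniform}'' you describe.

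The one step you leave open---your ``technical crux'' of pushing $\lambda'$ below $-1/4$---in fact needs no tuning at all: taking $k=\Delta-2$ copies of $T_{h^*-1}$ already works, since your inequality $R_{h^*-1}^{\Delta-1}>1/|\lambda|$ gives
\[
|\lambda'|=|\lambda|\,R_{h^*-1}^{\Delta-2}>|\lambda|\cdot(1/|\lambda|)^{(\Delta-2)/(\Delta-1)}=|\lambda|^{1/(\Delta-1)}>\big(\lambda^*(\Delta)\big)^{1/(\Delta-1)}=\frac{\Delta-1}{\Delta^{\Delta/(\Delta-1)}}>\frac14
\]
for all $\Delta\geq 3$; this is exactly the paper's computation~\eqref{eq:lambdastarin}. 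A minor stylistic difference is that the paper locates the relevant depth by a self-contained fixed-point argument (the recursion $x\mapsto 1/(1-|\lambda|x^{\Delta-1})$ is strictly increasing and has no fixed point on $[0,(1/|\lambda|)^{1/(\Delta-1)})$ when $|\lambda|>\lambda^*(\Delta)$, so the sequence must escape), rather than citing the Shearer threshold as you do; either route is fine.
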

\begin{proof}
For convenience, let $d:=\Delta-1$. Also, let $A:=-\lambda$ so that $A>0$; in fact, the condition $-1/4\leq \lambda<-\lambda^*(\Delta)$ translates into the bounds $1/4\geq A>d^d/(d+1)^{d+1}$.

For an integer  $h\geq 0$, let $T_h$ denote the $d$-ary tree of height $h$ and denote the root of the tree by $\rho$. For all $h$ such that $Z_{T_h}(\lambda)\neq 0$, let $x_h=R_\lambda(T_h,\rho)$.

\textbf{Part I.}  We show that there exists an $h$ such that $Z_{T_h}(\lambda)\neq 0$ and $x_h\geq (1/A)^{1/d}$.

For the sake of contradiction, assume otherwise. Then,
\begin{equation}\label{eq:contra23}
\mbox{ for all $h\geq 0$, either $Z_{T_h}(\lambda)=0$ or $x_h< (1/A)^{1/d}$}.
\end{equation}
In the following, we first exclude the possibility that $Z_{T_h}(\lambda)=0$ for some $h$, so that we can use the recursion from Lemma~\ref{lem:recursion} to study the range of the sequence $\{x_{h}\}^{\infty}_{h=0}$. In particular, assuming \eqref{eq:contra23}, we first  prove by induction that, for all $h\geq 0$, the following hold. (\eqref{eq:ZTh} is just used for the proof of \eqref{eq:recur}, later we will only appeal to \eqref{eq:recur}.)
\begin{gather}
Z_{T_h}(\lambda)\neq 0,\label{eq:ZTh}\\
x_h\in [0, (1/A)^{1/d}),\quad x_{h+1}=f(x_{h})\mbox { where } f(x)=\frac{1}{1-Ax^d}.\label{eq:recur}
\end{gather}
 For $h=0$, we have that $T_0$ is the single vertex graph, so $Z_{T_0}(\lambda)=1+\lambda \neq 0$ and hence $x_0=1/(1+\lambda)=1/(1-A)\geq 0$. Since $Z_{T_0}(\lambda)\neq 0$, \eqref{eq:contra23} yields that $x_0<(1/A)^{1/d}$.  For the induction step, assume that \eqref{eq:ZTh} and \eqref{eq:recur} hold for some integer $h$, we will prove them for $h+1$ as well. Since $T_{h+1}\backslash \rho$ consists of $d$ disconnected copies of $T_h$ and $Z_{T_h}(\lambda)\neq 0$ by \eqref{eq:ZTh}, we may apply Lemma~\ref{lem:recursion}. In particular, since $(1/A)^{1/d}> x_h\geq 0$ by \eqref{eq:recur}, we have that $-1/\lambda=1/A>(x_h)^d$ and hence the first part of Lemma~\ref{lem:recursion} yields that $Z_{T_{h+1}}(\lambda)\neq0$, so that $x_{h+1}$ is well-defined. Further, by \eqref{eq:contra23}, we obtain that $x_{h+1}<(1/A)^{1/d}$. Now, we note that  the second part of Lemma~\ref{lem:recursion} applies, so that $x_{h+1}=f(x_h)$ where the function $f$ is as in \eqref{eq:recur}. Since $x_h<(1/A)^{1/d}$, this in turn implies that $x_{h+1}\geq 0$ as well. This completes the induction.

Thus, assuming \eqref{eq:contra23}, we have established that the values $x_h$ are well-defined for all $h$ and that they satisfy the recursion in \eqref{eq:recur}. We will reach a contradiction to \eqref{eq:contra23} by showing that the sequence $x_h$ shoots over $(1/A)^{1/d}$. To do this, we will use that the sequence $x_h$ is increasing by proving that, for all $A>d^d/(d+1)^{d+1}$, it holds that
\begin{equation}\label{eq:sign}
f(x)>x \mbox{ for all }  x\in[0, (1/A)^{1/d}).
\end{equation}
To see this, note that
\[f(x)-x=\frac{1-x+Ax^{d+1}}{1-Ax^{d}},\]
and hence to show \eqref{eq:sign} it suffices to show that $g(x):=1-x+Ax^{d+1}>0$ for all $x\in[0, (1/A)^{1/d})$. Note that $g'(x)=(d+1)Ax^d-1$, so $g(x)\geq g(z_0)$ where $z_0$ satisfies $Az_0^{d}=1/(d+1)$. Then
\[g(z_0)=1+z_0(Az_0^d-1)=1-\frac{d z_0}{d+1}=1-\frac{d}{A^{1/d}(d+1)^{(d+1)/d}}>0,\]
where in the last inequality we used that $A>d^d/(d+1)^{d+1}$. By \eqref{eq:recur} and \eqref{eq:sign}, we obtain that the sequence $x_h$ is strictly increasing. Since $x_h\in[0, (1/A)^{1/d})$ it must converge to a limit $x^*\in [0, (1/A)^{1/d}]$  satisfying $f(x^*)=x^*$. By \eqref{eq:sign}, it must be the case that $x^*=(1/A)^{1/d}$ which is a contradiction to \eqref{eq:recur} since $f(x)\uparrow \infty$ as $x\uparrow (1/A)^{1/d}$. Thus, our assumption \eqref{eq:contra23} is false and, in particular, there is an integer $h\geq 0$ such that $Z_{T_h}(\lambda)\neq 0$ and $x_h\geq(1/A)^{1/d}$.

\textbf{Part II.} We next show how to use Part I to conclude the proof of the lemma. Let $\widehat{\lambda}:=\lambda (x_h)^{d-1}$. The key observation that will allow us to use the analysis of the paths is that $\widehat{\lambda}<-1/4$. Indeed, we have that
\begin{equation}\label{eq:lambdastarin}
\widehat{\lambda}\leq -A(1/A)^{(d-1)/d}= -A^{1/d}<-\frac{d}{(d+1)^{(d+1)/d}}<-1/4,
\end{equation}
where the last inequality holds for all $d\geq 2$. To utilise this, we need to construct graphs that simulate the hard core model with activity  $\widehat{\lambda}$ on paths.

In particular, let $P_n=(V,E)$ denote the path with $n$ vertices and $v$ be one of the endpoints of $P_n$. Let $G_n=(V_n,E_n)$ be the tree obtained from $P_n$ as follows. For each vertex $w\in V$ of the path, take $d-1$ distinct copies of the tree $T_h$ and connect $w$ to the roots of these trees. Note that the degree of the vertex $v$ in $G_n$ is $d=\Delta-1$, while every other vertex of $P_n$ which is not an endpoint has degree $d+1=\Delta$ in $G_n$.

 We claim that
\begin{equation}\label{eq:wPn}
\Zout_{P_n,v}\big(\widehat{\lambda}\big)=\frac{\Zout_{G_n,v}(\lambda)}{\big(Z_{T_h}(\lambda)\big)^{(d-1)n}}, \quad \Zin_{P_n,v}\big(\widehat{\lambda}\big)=\frac{\Zin_{G_n,v}(\lambda)}{\big(Z_{T_h}(\lambda)\big)^{(d-1)n}}, \quad Z_{P_n}\big(\widehat{\lambda}\big)=\frac{Z_{G_n}(\lambda)}{\big(Z_{T_h}(\lambda)\big)^{(d-1)n}}.
\end{equation}
Indeed, let $I$ be an independent set of $P_n$ and consider the set $\Omega_I$ of independent sets of $G_n$ whose restriction on $V$ coincides with $I$. Then, we have that
\begin{equation}\label{eq:tosumPn}
\sum_{I'\in \Omega_I}\lambda^{|I'|}=\prod_{u\in I}\lambda\big(\Zout_{T_h,\rho}(\lambda)\big)^{d-1}\prod_{u\notin I}\big(Z_{T_h}(\lambda)\big)^{d-1}=\big(Z_{T_h}(\lambda)\big)^{(d-1)n}\big(\widehat{\lambda}\big)^{|I|}.
\end{equation}
Observe also that the sets $\{\Omega_{I}\}_{I\in \mathcal{I}_{P_n}}$ form a partition of the set $\mathcal{I}_{G_n}$ of independent sets of $G_n$. Thus, summing \eqref{eq:tosumPn} over all $I\in \mathcal{I}_{P_n}$ such that $v\notin I$ yields the first equality in \eqref{eq:wPn}, summing \eqref{eq:tosumPn} over all $I\in \mathcal{I}_{P_n}$ such that $v\in I$ yields the second equality in \eqref{eq:wPn}, and, finally, summing \eqref{eq:tosumPn} over all  $I\in \mathcal{I}_{P_n}$ yields the third equality in \eqref{eq:wPn}.

Using the fact that $\widehat{\lambda}<-1/4$ by \eqref{eq:lambdastarin} and equation \eqref{eq:wPn}, we can now complete the proof of the lemma, by considering cases whether $\widehat{\lambda}\in\Bc$.

\vskip 0.2cm \noindent \textbf{Case 1.} $\widehat{\lambda}\in \Bc$. By Corollary~\ref{lem:bad}, there exists an integer $n\geq 1$ such that $Z_{P_n}\big(\widehat{\lambda}\big)=0$. Using the third equality in \eqref{eq:wPn}, we have that $G_n$ is a tree of maximum degree $\Delta$ such that $Z_{G_n}(\lambda)=0$. It follows by Lemma~\ref{lem:zero} that $(\Delta,\lambda)$ implements a dense set of activities in $\mathbb{R}$ as wanted.

\vskip 0.2cm \noindent \textbf{Case 2.} $\widehat{\lambda}\notin \Bc$. In this case, we have that for all $n\geq 1$ it holds that $Z_{P_n}\big(\widehat{\lambda}\big)\neq 0$ (this follows from Lemma~\ref{lem:closedformula} and  the definition \eqref{eq:bad} of the set $\Bc$; note that for $\lambda<-1/4$, there is a unique value of $\theta\in (0,\pi/2)$ such that $\lambda=-1/(2\cos\theta)^2$). Hence, we also have that  $\Zout_{P_n,v}\big(\widehat{\lambda}\big)=Z_{P_{n-1}}\big(\widehat{\lambda}\big)\neq 0$.\footnote{For $n=1$, we have $\Zout_{P_n,v}\big(\widehat{\lambda}\big)=1$ (the only independent set $I$ of $P_1$ such that $v\notin I$ is the empty one).} It follows from \eqref{eq:wPn} that $Z_{G_n}(\lambda),\Zout_{G_n,v}(\lambda)\neq 0$ as well.

In this case, our goal is to apply the path implementation of Lemma~\ref{lem:path} in combination with \eqref{eq:wPn}. Note however that the degree of $v$ in $G_n$ is $\Delta-1$ instead of one that is required for implementations, so we will add to the graph $G_n$ a new vertex $v'$ whose single neighbour is the vertex $v$. We denote this graph by $G_n'$, and note that, just as $G_n$, $G_n'$ is a tree of maximum degree $\Delta$. For all integers $n\geq 1$, using that $Z_{G_n}(\lambda),\Zout_{G_n,v}(\lambda)\neq 0$, we have that
\[\frac{\Zin_{G_n',v'}(\lambda)}{\Zout_{G_n',v'}(\lambda)}=\frac{\lambda \Zout_{G_n,v}(\lambda)}{\Zout_{G_n,v}(\lambda)+\Zin_{G_n,v}(\lambda)}=\frac{\lambda}{1+\frac{\Zin_{G_n,v}(\lambda)}{\Zout_{G_n,v}(\lambda)}}.\]
Let $\lambda_g$ be an activity that we wish to implement with error $\epsilon>0$. From the definition of a dense set, we may assume that $\lambda_g\neq \lambda$. Let $\lambda'=(\lambda-\lambda_g)/\lambda_g$ and note that $\lambda'\notin\{ -1,0\}$.  Let $\eta,M>0$ be the constants in Lemma~\ref{lem:approx2q2} when applied to $x=\lambda'$, so that for all $x'\in [\lambda'-\eta,\lambda'+\eta]$ it holds that
\begin{equation}\label{eq:sillyapprox}
\Big|\frac{\lambda}{1+x'}-\frac{\lambda}{1+\lambda'}\Big|=\Big|\frac{\lambda}{1+x'}-\lambda_g\Big|\leq M|x'-\lambda'|.
\end{equation}
Further, let $\epsilon':=\min\{\eta,\epsilon/M\}$. By the path implementation of Lemma~\ref{lem:path}, there exists $n$ such that
\begin{equation*}
\Big|\frac{\Zin_{P_n,v}\big(\widehat{\lambda}\big)}{\Zout_{P_n,v}\big(\widehat{\lambda}\big)}-\lambda'\Big|\leq \epsilon', \mbox{ and hence by \eqref{eq:wPn} we obtain   \ \ } \Big|\frac{\Zin_{G_n,v}(\lambda)}{\Zout_{G_n,v}(\lambda)}-\lambda'\Big|\leq \epsilon',
\end{equation*}
Since $\epsilon'\leq \eta$ and $\epsilon'\leq \epsilon/M$, by \eqref{eq:sillyapprox} we obtain
\begin{equation*}
\Big|\frac{\Zin_{G_n',v'}(\lambda)}{\Zout_{G_n',v'}(\lambda)}-\lambda_g\Big|=\Big|\frac{\lambda}{1+\frac{\Zin_{G_n,v}(\lambda)}{\Zout_{G_n,v}(\lambda)}}-\frac{\lambda}{1+\lambda'}\Big|\leq M\Big|\frac{\Zin_{G_n,v}(\lambda)}{\Zout_{G_n,v}(\lambda)}-\lambda'\Big|\leq \epsilon.
\end{equation*}
Thus, $G_n'$ with terminal $v'$ implements the activity $\lambda_g$ with error $\epsilon>0$, as desired.

This completes the proof of Lemma~\ref{lem:smalllambda}.
\end{proof}

\bibliographystyle{plain}
\bibliography{\jobname}

\clearpage

\section{Appendix: Application --- an intractability result}
 
Section~\ref{sec:alg} already 
discussed one intractability result that uses our Theorem~\ref{thm:precision}.
In particular,
 \cite[Theorem 1]{OurComplex} shows
that, for any 
$\epsilon>0$,
$\Delta\geq 3$ and $\lambda<-\lambda^*(\Delta)$,
it is \#P-hard to approximate the absolute value of $Z_G(\lambda)$   within a factor of~$2^{n^{1-\epsilon}}$, where $n$ is the number of vertices of~$G$
(which is bipartite and has maximum degree at most~$\Delta$). 
In this appendix, we again use Theorem~\ref{thm:precision} to
obtain an incomparable hardness result, Theorem~\ref{thm:neglambda},
 which shows that,
 with the same restrictions on~$\Delta$, $\lambda$ and~$G$, it is NP-hard to approximate $|Z_G(\lambda)|$ within an exponential factor.\footnote{Not also the earlier NP-hardness result of~\cite[Theorem 4.4]{HSVV1} 
 which 
 shows, for $\Delta\geq 62$ and $\lambda<-39/\Delta$, that a PTAS would imply NP=RP. }

\subsection{Preliminaries on antiferromagnetic 2-spin systems on $\Delta$-regular graphs}
In our setting, where every vertex has degree at most $\Delta$, an   implementation consumes one of the $\Delta$ slots that a vertex has available to connect to other vertices. This is particularly problematic for the case where $\Delta=3$. In the following we circumvent this problem by constructing suitable binary gadgets, so that we can use inapproximability results for computing the partition function of antiferromagnetic 2-spin systems on $\Delta$-regular graphs.

An antiferromagnetic 2-spin system (without an external field) is specified by two parameters $\beta,\gamma>0$ such that $\beta\gamma<1$. Let $\Mb=\{M_{ij}\}_{i,j\in\{0,1\}}$ be the matrix $\big[\begin{smallmatrix} \beta&1\\1&\gamma\end{smallmatrix}\big]$. For a graph $H=(V,E)$, configurations of the 2-spin system are assignments $\sigma:V\rightarrow\{0,1\}$ and the weight of a configuration $\sigma$ is given by $w_{H,\beta,\gamma}(\sigma)=\prod_{\{u,v\}\in E} M_{\sigma(u),\sigma(v)}$. The partition function of $H$ is then given by
\[Z_{H,\beta,\gamma}=\sum_{\sigma:V\rightarrow\{0,1\}} w_{H,\beta,\gamma}(\sigma)= \sum_{\sigma:V\rightarrow\{0,1\}} \prod_{\{u,v\}\in E}M_{\sigma(u),\sigma(v)}.\]
For positive parameters $\beta,\gamma$ and $c>1$, we consider the following computational problem, where the input is a 3-regular graph $H$.
\prob{ $\TwoSpin{\beta}{\gamma}$.}
{ An $n$-vertex graph $H$ which is $3$-regular.}
{  A number $\hat{Z}$ such that $c^{-n}Z_{H,\beta,\gamma}\leq \hat{Z}\leq c^n Z_{H,\beta,\gamma}$.}

The case $\beta=\gamma<1$ corresponds to the well-known (antiferromagnetic) Ising model. As a corollary of results of Sly and Sun \cite{SlySun} (see also \cite{GSVIsing}), it is known that, for $0<\beta=\gamma<1/3$, there exists $c>1$ such  that $\TwoSpin{\beta}{\beta}$ is $\NP$-hard, i.e.,  approximating the partition function $Z_{G,\beta,\beta}$ of the Ising model on $3$-regular graphs $H$ is $\NP$-hard, even within an exponential factor.\footnote{The inapproximability result for the Ising model holds for general degrees $\Delta\geq 3$ in the regime $0<\beta=\gamma<(\Delta-2)/\Delta$. While we do not prove it here (since we only need the result for $\Delta=3$), Lemma~\ref{lem:spinhard} also holds for general degrees $\Delta\geq 3$ in the square $0<\beta,\gamma<(\Delta-2)/\Delta$.}   The following lemma is somewhat less known but follows easily from the results of \cite{SlySun}.
\begin{lemma}\label{lem:spinhard}
Let $\Delta=3$ and $\beta,\gamma$ be such that $0<\beta,\gamma<1/3$. Then, there exists $c>1$ such  that $\TwoSpin{\beta}{\gamma}$ is $\NP$-hard.
\end{lemma}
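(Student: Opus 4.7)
The strategy is to invoke the Sly--Sun inapproximability theorem~\cite{SlySun} for antiferromagnetic two-spin systems without external field on $\Delta$-regular graphs, which establishes $\NP$-hardness (within an exponential factor) of approximating the partition function for any parameters $(\beta,\gamma)$ in the non-uniqueness region of the infinite $\Delta$-regular tree. Granted this, the entire content of Lemma~\ref{lem:spinhard} is to check that the open square $(\beta,\gamma)\in(0,1/3)^{2}$ is contained in the non-uniqueness region of the $3$-regular tree.

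To carry out this verification, I would analyse the standard belief-propagation recursion on the $(d+1)$-regular tree with $d=\Delta-1=2$,
\[
f(R)=\frac{\beta R^{d}+1}{R^{d}+\gamma},\qquad
f'(R)=\frac{d\,R^{d-1}(\beta\gamma-1)}{(R^{d}+\gamma)^{2}}.
\]
Since $\beta\gamma<1$, $f$ is strictly decreasing and has a unique positive fixed point $R^{*}$, and non-uniqueness is equivalent to $|f'(R^{*})|>1$. On the diagonal $\beta=\gamma$ one has $R^{*}=1$ and the condition collapses to $d(1-\beta)/(1+\beta)>1$, i.e.\ $\beta<(\Delta-2)/\Delta=1/3$, recovering the well-known Ising threshold from the footnote. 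Off the diagonal, the involution $(\beta,\gamma,R)\mapsto(\gamma,\beta,1/R)$ preserves both $f$ and $|f'(R^{*})|$, so the critical locus $\{|f'(R^{*})|=1\}$ in the $(\beta,\gamma)$-plane is symmetric about the diagonal. For $\Delta=3$ the fixed-point equation $\beta R^{*2}+1=R^{*}(R^{*2}+\gamma)$ is a cubic whose positive root depends continuously on $(\beta,\gamma)$, and substituting into $|f'(R^{*})|$ reduces the claim to an explicit polynomial inequality in two variables which is strict on the interior of the square and saturates only at the symmetric corner $(1/3,1/3)$.

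The main technical obstacle is precisely this off-diagonal verification, since controlling $R^{*}$ as an algebraic function of $(\beta,\gamma)$ is a bit delicate. A cleaner shortcut I would try instead is the gauge identity that on every $\Delta$-regular graph $H$ on $n$ vertices,
\[
Z_{H,\beta,\gamma}=(\beta/\gamma)^{\Delta n/4}\,Z^{\mathrm{Ising}}_{H,\sqrt{\beta\gamma},\,(\gamma/\beta)^{\Delta/2}},
\]
which follows directly from the $\Delta$-regularity counts $2m_{00}+m_{01}=\Delta n_{0}$ and $2m_{11}+m_{01}=\Delta n_{1}$ applied to the weight $\beta^{m_{00}}\gamma^{m_{11}}$. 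This rewrites the asymmetric no-field problem on $H$ as an antiferromagnetic Ising model with interaction $\beta'=\sqrt{\beta\gamma}<1/3$ and external field $\lambda=(\gamma/\beta)^{\Delta/2}$, so Lemma~\ref{lem:spinhard} reduces to Sly--Sun's hardness for antiferromagnetic Ising \emph{with external field} at parameters $(\beta',\lambda)$, bypassing the off-diagonal fixed-point analysis entirely; the non-uniqueness check for Ising-with-field in the relevant parameter window is then routine.
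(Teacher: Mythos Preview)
Your high-level strategy---invoke Sly--Sun and verify that the open square $(0,1/3)^2$ lies in the non-uniqueness region of the $3$-regular tree---is exactly what the paper does. The difference is in how the non-uniqueness verification is executed.

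Your first route, via the fixed-point stability criterion $|f'(R^*)|>1$, is valid, and your diagonal calculation is fine; but as you yourself note, the off-diagonal case is left as ``an explicit polynomial inequality'' that is not actually written down or checked. The paper bypasses the implicit cubic for $R^*$ entirely by working with the period-$2$ system $x=\big(\tfrac{\beta y+1}{y+\gamma}\big)^2$, $y=\big(\tfrac{\beta x+1}{x+\gamma}\big)^2$ and extracting a single explicit \emph{quadratic} $Az^2+Bz+C=0$ whose two roots are exactly the sought period-$2$ points; the whole verification then reduces to the one-line discriminant check
\[
B^2-4AC=(1-\beta\gamma)^2\bigl(1-4\beta^3-6\beta\gamma-3\beta^2\gamma^2-4\gamma^3\bigr)>0
\]
together with the sign pattern $A>0$, $B<0$, $C>0$. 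This is the missing concrete step in your first approach.

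Your proposed ``shortcut'' via the gauge identity is correct as an identity of partition functions on $\Delta$-regular graphs, but it does not buy you anything for the non-uniqueness check. The same gauge also identifies the Gibbs measures on the infinite $\Delta$-regular tree: with $c=\sqrt{\beta/\gamma}$ the substitution $S=cR$ conjugates the no-field recursion for $(\beta,\gamma)$ to the Ising-with-field recursion for $(\sqrt{\beta\gamma},\,(\gamma/\beta)^{\Delta/2})$, so the two models are in non-uniqueness together or not at all. Thus your ``routine'' check for Ising-with-field at precisely these parameters is the very same off-diagonal computation you were trying to avoid. In particular it is \emph{not} the case that all $\beta'<1/3$ and all fields $\lambda'$ lie in non-uniqueness (for fixed $\beta'<1/3$, uniqueness is restored once $|\log\lambda'|$ is large), so you cannot simply cite a blanket Ising-with-field result; you would still have to delimit the window in $\lambda'$ and show the gauged parameters land inside it, which undoes the claimed simplification.
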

\begin{proof} 
Sly and Sun \cite{SlySun} give a sufficient condition on the range of $\beta,\gamma$ such that the conclusion of the lemma holds (the condition is in fact tight apart, perhaps, from certain boundary cases). Our goal is thus to verify that all $\beta,\gamma$ in the square $0<\beta,\gamma<1/3$ lie within the range where the result of \cite{SlySun} applies.

The condition in \cite{SlySun} asks that, for the 2-spin system specified by $\beta$ and $\gamma$, the 3-regular tree exhibits non-uniqueness. This is somewhat implicit for our purposes, so we will instead use the following algebraic criterion which is well-known in the area (see, e.g., \cite[Section 3.1]{GGhypergraphs} for a detailed discussion). The 3-regular tree has non-uniqueness (for the 2-spin system specified by $\beta,\gamma$) iff the equations
\begin{equation}\label{eq:nonuniqueness}
x=\Big(\frac{\beta y+1}{y+\gamma}\Big)^2, \quad y=\Big(\frac{\beta x+1}{x+\gamma}\Big)^2
\end{equation}
admit a solution $x,y>0$ with $x\neq y$ (note, there is always a solution with $x=y>0$). To verify this, consider the quadratic equation
\begin{equation}\label{eq:quadratic}
\begin{gathered}
A z^2+B z+C=0, \mbox{ where }\\
A:=(\beta^2+\gamma)^2,\quad B:=-1 + (\beta^2 + 2 \gamma) (2 \beta + \gamma^2),\quad C:=(\beta+\gamma^2)^2.
\end{gathered}
\end{equation}
We will show that for all $0<\beta,\gamma<1/3$, the quadratic equation admits two solutions $z_1,z_2>0$ with $z_1\neq z_2$. Then, we will show that $x=z_1$ and $y=z_2$ satisfies \eqref{eq:nonuniqueness}, thus verifying the condition of non-uniqueness on the 3-regular tree and, consequently, proving the lemma by applying the result of \cite{SlySun}.

So, suppose that $\beta,\gamma$ are such that $0<\beta,\gamma<1/3$. Note that the discriminant of the equation \eqref{eq:quadratic} is strictly positive,  since
\[B^2-4AC=(1-\beta \gamma)^2(1-4 \beta^3-6 \beta \gamma-3 \beta^2 \gamma^2-4 \gamma^3)>0,\]
where the last inequality follows from $0<\beta,\gamma<1/3$. Since $A>0, B<0, C>0$, we conclude that \eqref{eq:quadratic} has two positive solutions $z_1,z_2>0$ which satisfy $z_1\neq z_2$.

It remains to show that $z_1,z_2$ satisfy \eqref{eq:nonuniqueness}. We only need to show the first equality since the second follows by swapping the roles of $z_1,z_2$. Note that $z_1+z_2=-B/A$, so we only need to show that
\[-\frac{B}{A}-z_1=\Big(\frac{\beta z_1 +1}{z_1+\gamma}\Big)^2 \mbox{ or, by multiplying out, } Az_1(z_1+\gamma)^2+ A(\beta z_1+1)^2+B(z_1+\gamma)^2=0.\]
Using the values of $A,B,C$ in \eqref{eq:quadratic} and in particular that
$
A+B \gamma^2=C(\beta^2+2\gamma)$ and $A(2\beta+\gamma^2)=C+B\beta^2$,
we obtain the factorisation
$(z_1+\beta^2 + 2 \gamma)(A z^2_1+B z_1+C)=0$,
which is clearly true since $z_1$ is a root of \eqref{eq:quadratic}.

This concludes the proof of Lemma~\ref{lem:spinhard}.
\end{proof}

The following lemma will be used in  the proof of Theorem~\ref{thm:neglambda} to specify the activities that we need to implement to utilise the inapproximability result of Lemma~\ref{lem:spinhard}. It allows us to use the graph in Figure~\ref{fig:antiferro} as a binary gadget to simulate a 2-spin system with parameters $\beta,\gamma$.

\begin{lemma}\label{lem:betagamma}
Let $\lambda<0$. Then, there exist $\lambda_1',\lambda_2'$  such that
\begin{equation}\label{eq:lambda12}
-2-\frac{1}{3}|\lambda|^{1/3}<\lambda_1'<\min\Big\{-2,-2-\frac{|\lambda|^{2/3}-1}{3|\lambda|^{1/3}+1}\Big\},\qquad -1 <\lambda_2'<-1-\frac{\lambda_1'(\lambda_1'+2+\frac{1}{3}|\lambda|^{1/3})}{1+\frac{1}{3}|\lambda|^{1/3}}.
\end{equation}
For all $\lambda_1',\lambda_2'$ satisfying \eqref{eq:lambda12}, the following parameters $\beta,\gamma$ (defined in terms of $\lambda,\lambda_1',\lambda_2'$)
\begin{equation}\label{eq:betagamma}
\beta=-\frac{(\lambda'_1+1)^2+\lambda'_2}{|\lambda|^{1/3}\big(1+\lambda_1'+\lambda_2'\big)},\quad \gamma=-\frac{|\lambda|^{1/3}(1+\lambda_2')}{1+\lambda_1'+\lambda_2'}.
\end{equation}
satisfy  $0<\beta, \gamma<1/3$.
\end{lemma}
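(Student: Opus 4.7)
The plan is to introduce the shorthand $\mu := |\lambda|^{1/3} > 0$ and $D := 1 + \lambda_1' + \lambda_2'$, so that the formulas \eqref{eq:betagamma} become
\[
\beta = -\frac{(\lambda_1'+1)^2 + \lambda_2'}{\mu D}, \qquad \gamma = -\frac{\mu(1+\lambda_2')}{D}.
\]
The lemma makes two claims: that the interval \eqref{eq:lambda12} is non-empty, and that every $(\lambda_1',\lambda_2')$ in it gives $\beta,\gamma \in (0, 1/3)$. I would verify non-emptiness first, then establish the crucial sign fact $D < 0$, and finally check the four inequalities on $\beta$ and $\gamma$.

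For non-emptiness of the $\lambda_1'$-range, I would compare $-2-\mu/3$ with each of the two upper-bound candidates: the comparison with $-2$ reduces to $\mu > 0$, while $-2-\mu/3 < -2 - \frac{\mu^2-1}{3\mu+1}$ reduces after clearing denominators to $\mu > -3$; both hold. For any such $\lambda_1'$, the upper bound on $\lambda_2'$ exceeds $-1$ precisely when $\lambda_1'(\lambda_1' + 2 + \mu/3) < 0$, which is automatic from $\lambda_1' < -2 < 0$ and $\lambda_1' + 2 + \mu/3 > 0$ (the lower bound on $\lambda_1'$). To show $D < 0$, I would split $\lambda_1' + 2 + \mu/3 = (\lambda_1'+1) + (1 + \mu/3)$ to rewrite the upper bound on $\lambda_2'$ as $b = -1 - \lambda_1' - \frac{\lambda_1'(\lambda_1' + 1)}{1 + \mu/3}$. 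Since $\lambda_1' < -2$ makes both $\lambda_1'$ and $\lambda_1'+1$ negative, the last fraction is strictly positive, so $\lambda_2' < b < -1 - \lambda_1'$, i.e.\ $D < 0$. Given $D < 0$, $\gamma > 0$ is immediate from $1 + \lambda_2' > 0$, and $\beta > 0$ follows from $(\lambda_1'+1)^2 > 1 > -\lambda_2'$.

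The heart of the argument is the two ceilings. Multiplying $\beta < 1/3$ through by $3\mu D < 0$ (which flips the inequality) and factoring reduces it to $\lambda_2' < -\frac{(\lambda_1'+1)[3(\lambda_1'+1) + \mu]}{3+\mu}$, which after rescaling is exactly the upper bound on $\lambda_2'$ in \eqref{eq:lambda12}; hence $\beta < 1/3$ is equivalent to that bound. The same manoeuvre applied to $\gamma < 1/3$ yields the requirement $\lambda_2' < -\frac{1 + \lambda_1' + 3\mu}{1 + 3\mu}$, so it suffices to show that the $\beta$-bound is at most this $\gamma$-bound. Cross-multiplying by the (positive) denominators and subtracting gives, after simplification, the single expression $3\lambda_1'\bigl[(1+3\mu)\lambda_1' + 1 + 6\mu + \mu^2\bigr]$; since $\lambda_1' < 0$, the required sign reduces to $\lambda_1' \leq -\frac{1 + 6\mu + \mu^2}{1+3\mu} = -2 - \frac{\mu^2 - 1}{3\mu + 1}$, which is exactly one of the upper bounds on $\lambda_1'$ in \eqref{eq:lambda12}. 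The `$\min$' with $-2$ covers the subcase $\mu < 1$, in which $-\frac{\mu^2-1}{3\mu+1} > 0$ so that $-2$ itself becomes the binding constraint (while still implying the weaker one). The main obstacle is purely algebraic bookkeeping---tracking signs across multiplications by $D$ and $\mu D$, and carrying out the factorisation of the quadratic in $\lambda_1'$ above.
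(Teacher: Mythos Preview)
Your proposal is correct and follows essentially the same route as the paper: both arguments first verify that the two intervals are non-empty, then establish $D=1+\lambda_1'+\lambda_2'<0$ (the paper's \eqref{eq:ineq1b}), deduce $\beta,\gamma>0$ from the obvious sign facts, show that $\beta<1/3$ is equivalent to the upper bound on $\lambda_2'$, and finally reduce $\gamma<1/3$ to the constraint $\lambda_1'\leq -2-\frac{\mu^2-1}{3\mu+1}$. Your shorthand $\mu,D$ and your explicit factorisation $3\lambda_1'\bigl[(1+3\mu)\lambda_1'+1+6\mu+\mu^2\bigr]$ make the final step a bit cleaner than the paper's version, but the logic is the same.
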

\begin{proof} 
Note that for all $\lambda<0$, we have 
\begin{equation}\label{eq:c2ecc2ef2}
-2-\frac{1}{3}|\lambda|^{1/3}<-2, \qquad -2-\frac{1}{3}|\lambda|^{1/3}<-2-\frac{|\lambda|^{2/3}-1}{3|\lambda|^{1/3}+1},
\end{equation}
so the interval
\begin{equation}\label{eq:inI1}
I_1:=\bigg(-2-\frac{1}{3}|\lambda|^{1/3},\min\Big\{-2,-2-\frac{|\lambda|^{2/3}-1}{3|\lambda|^{1/3}+1}\Big\}\bigg).
\end{equation}
has nonzero length. Thus, choosing any  $\lambda_1'\in I_1$ satisfies the first inequality in \eqref{eq:lambda12}. Also, for $\lambda_1'\in I_1$ we have that $\lambda_1'<0$ and $\lambda_1'+2+\frac{1}{3}|\lambda|^{1/3}>0$, so the interval
\begin{equation}\label{eq:inI2}
I_2:=\Big(-1,-1-\frac{\lambda_1'(\lambda_1'+2+\frac{1}{3}|\lambda|^{1/3})}{1+\frac{1}{3}|\lambda|^{1/3}}\Big)
\end{equation}
has nonzero length as well. Thus, by first choosing $\lambda_1'\in I_1$ and then $\lambda_2'\in I_2$, we see that $\lambda_1',\lambda_2'$ satisfy \eqref{eq:lambda12}.

Next, we show that the parameters $\beta,\gamma$ in \eqref{eq:betagamma} satisfy the desired inequalities whenever $\lambda_1'\in I_1$ and $\lambda_2'\in I_2$. We first prove that $\beta,\gamma>0$ by showing the following inequalities:
\begin{gather}
(\lambda_1'+1)^2+\lambda_2'>0,\label{eq:ineq1c}\\
1+\lambda_2'>0,\label{eq:ineq1a}\\
1+\lambda_1'+\lambda_2'<0.\label{eq:ineq1b}
\end{gather}
The inequality in \eqref{eq:ineq1a} is an immediate consequence of $\lambda_2'\in I_2$. Inequality \eqref{eq:ineq1c} follows from the expansion
\begin{equation}\label{eq:expansion}
(\lambda_1'+1)^2+\lambda_2'=\lambda_1'(\lambda_1'+2)+\lambda_2'+1,
\end{equation}
and noting that $\lambda_1'(\lambda_1'+2)>0$ (from $\lambda_1'\in I_1$) and $\lambda_2'+1>0$ (from $\lambda_2'\in I_2$). Finally, for \eqref{eq:ineq1b}, we have that $\lambda_2'+1<-\frac{\lambda_1'(\lambda_1'+2+\frac{1}{3}|\lambda|^{1/3})}{1+\frac{1}{3}|\lambda|^{1/3}}$ (from $\lambda_2'\in I_2$) and hence
\[1+\lambda_1'+\lambda_2'<\lambda_1'\Big(1-\frac{\lambda_1'+2+\frac{1}{3}|\lambda|^{1/3}}{1+\frac{1}{3}|\lambda|^{1/3}}\Big)=-\frac{\lambda_1'(\lambda_1'+1)}{1+\frac{1}{3}|\lambda|^{1/3}}<0,\]
where the last inequality follows from $\lambda_1'<-2$. Thus, we have shown that $\beta,\gamma>0$.

Next, we show that $\beta,\gamma<1/3$.  To show that $\beta<1/3$, using \eqref{eq:ineq1b}, we only need to show that
\[(\lambda_1'+1)^2+\lambda_2'<-\frac{1}{3}|\lambda|^{1/3}(1+\lambda_1'+\lambda_2'), \mbox{ or equivalently } (\lambda_2'+1)(1+\frac{1}{3}|\lambda|^{1/3})<-\lambda_1'(\lambda_1'+2+\frac{1}{3}|\lambda|^{1/3}),\]
which is true since $\lambda_2'\in I_2$. To show that $\gamma<1/3$, using \eqref{eq:ineq1b} again, we see that the inequality $\gamma<1/3$ is equivalent to
\[-3|\lambda|^{1/3}(1+\lambda_2')>1+\lambda_1'+\lambda_2', \mbox{ or equivalently } \lambda_2'<-1-\frac{\lambda_1'}{1+3|\lambda|^{1/3}},\]
Since $\lambda_1'<0$ and $\lambda_2'<-1-\frac{\lambda_1'(\lambda_1'+2+\frac{1}{3}|\lambda|^{1/3})}{1+\frac{1}{3}|\lambda|^{1/3}}$, we only need to show that
\[\frac{\lambda_1'+2+\frac{1}{3}|\lambda|^{1/3}}{1+\frac{1}{3}|\lambda|^{1/3}}<\frac{1}{1+3|\lambda|^{1/3}},\mbox{ or } (\lambda_1'+2)(1+3|\lambda|^{1/3})+\frac{1}{3}|\lambda|^{1/3}(1+3|\lambda|^{1/3})<1+\frac{1}{3}|\lambda|^{1/3},\]
which is true since,  from $\lambda_1'\in I_1$, we have $(\lambda_1'+2)(1+3|\lambda|^{1/3})<-(|\lambda|^{2/3}-1)$.

Thus, we have shown that $0<\beta,\gamma<1/3$, thus completing the proof of Lemma~\ref{lem:betagamma}.
\end{proof}

\subsection{The reduction}
The reduction to obtain Theorem~\ref{thm:neglambda} uses a binary gadget to simulate an antiferromagnetic 2-spin system on 3-regular graphs, i.e., we will replace every edge of a 3-regular graph $H$ with a suitable graph $B$ which has two special vertices to encode the edge. The gadget $B$ is given in Figure~\ref{fig:antiferro}, the two special vertices are $v_1,v_2$. Note that the gadget $B$ has nonuniform activities but this will be compensated for later by invoking Lemma~\ref{lem:nonuniform}.
\begin{lemma}\label{lem:reduction}
Let $\lambda<0$ and $\lambda_1',\lambda_2'\in \mathbb{R}$ satisfy \eqref{eq:lambda12}. Then, for $\beta,\gamma$ as in \eqref{eq:betagamma}, the following holds.
For every 3-regular graph $H=(V_H,E_H)$ we can construct in linear time a bipartite graph $G=(V_G,E_G)$ of maximum degree $3$ and specify an activity vector $\lambdab=\{\lambda_v\}_{v\in V}$ on $G$ such that
\begin{enumerate}
\item \label{it:simulate} $Z_{H,\beta,\gamma}= Z_{G}(\lambdab)/C^{|E_H|}$, where $C:=-|\lambda|^{1/3}\big(\lambda_1'+\lambda_2'+1\big)>0$.\footnote{The fact that $C$ is positive follows from \eqref{eq:ineq1b}.}
\item \label{it:degree3} For every vertex $v$ of $G$, it holds that $\lambda_v\in\{\lambda,\lambda_1',\lambda_2'\}$. Moreover, if $\lambda_v\neq \lambda$, then $v$ has degree two in $G$.
\end{enumerate}
\end{lemma}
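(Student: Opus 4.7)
My plan is to build $G$ by substituting for each edge of $H$ a copy of the binary gadget $B$ shown in Figure~\ref{fig:antiferro}, which is the path $v_1 - a - b - c - v_2$ whose terminals have degree $1$ inside $B$. For each edge $e = \{u, w\}$ of $H$ I introduce fresh internal vertices $a_e, b_e, c_e$ and identify $v_1, v_2$ with $u, w$; the activity vector puts $\lambda_v = \lambda$ on every $v \in V_H$, $\lambda_{a_e} = \lambda_{c_e} = \lambda_1'$, and $\lambda_{b_e} = \lambda_2'$. The structural guarantees of Item~\ref{it:degree3} are immediate: each $v \in V_H$ is incident to three gadgets that contribute a single edge each at the terminal, so $\deg_G(v) = 3$, while the internal vertices have degree $2$ and carry activities in $\{\lambda_1', \lambda_2'\}$. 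The graph $G$ is bipartite because colouring $V_H$ together with all middle vertices $\{b_e\}_{e \in E_H}$ white and all $\{a_e, c_e\}_{e \in E_H}$ black puts every edge of $G$ between the two colour classes.

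For Item~\ref{it:simulate}, I first write down the transfer matrix of $B$: let $\tilde M_B(s_1, s_2)$ be the sum, over configurations of $\{a, b, c\}$ extending $(v_1, v_2) = (s_1, s_2)$, of the product of activities of the internal vertices that are selected. A direct enumeration of independent sets on the subpath $a - b - c$ subject to the boundary constraints from $s_1, s_2$ gives
\begin{equation*}
\tilde M_B(0,0) = (1 + \lambda_1')^2 + \lambda_2', \quad \tilde M_B(0,1) = \tilde M_B(1,0) = 1 + \lambda_1' + \lambda_2', \quad \tilde M_B(1,1) = 1 + \lambda_2'.
\end{equation*}
Fixing a configuration $\sigma : V_H \to \{0, 1\}$ on the original vertices and summing over the internal vertices of each gadget independently yields
\begin{equation*}
Z_G(\lambdab) = \sum_{\sigma : V_H \to \{0, 1\}} \prod_{v \in V_H} \lambda^{\sigma(v)} \prod_{\{u, w\} \in E_H} \tilde M_B(\sigma(u), \sigma(w)).
\end{equation*}

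The central step is a cube-root rescaling that absorbs the vertex weights into the edge product. Set $\alpha(0) = 1$ and $\alpha(1) = -|\lambda|^{1/3}$, so that $\alpha(s)^3 = \lambda^s$ for $s \in \{0, 1\}$; because $H$ is $3$-regular,
\begin{equation*}
\prod_{v \in V_H} \lambda^{\sigma(v)} = \prod_{v \in V_H} \alpha(\sigma(v))^3 = \prod_{\{u,w\} \in E_H} \alpha(\sigma(u))\, \alpha(\sigma(w)).
\end{equation*}
Pushing these factors inside the edge product replaces the per-edge weight by $\alpha(s_1) \tilde M_B(s_1, s_2) \alpha(s_2)$, and a direct substitution using the formulas from \eqref{eq:betagamma} and $C = -|\lambda|^{1/3}(1 + \lambda_1' + \lambda_2')$ verifies $\alpha(s_1) \tilde M_B(s_1, s_2) \alpha(s_2) = C \cdot M(s_1, s_2)$ for each of the four pairs $(s_1, s_2) \in \{0, 1\}^2$. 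Hence $Z_G(\lambdab) = C^{|E_H|} Z_{H, \beta, \gamma}$, as required.

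The main conceptual obstacle is to find the gadget: one must design a polynomial transfer matrix in $\lambda, \lambda_1', \lambda_2'$ which, after the cube-root diagonal conjugation forced by the $3$-regularity of $H$, realises an arbitrary antiferromagnetic $2$-spin interaction $(\beta, \gamma)$ from the two-parameter family \eqref{eq:betagamma}. The three-vertex path with activities $\lambda_1', \lambda_2', \lambda_1'$ is just rich enough to do this; once it is in place the matching with $C \cdot M$ is a mechanical substitution from \eqref{eq:betagamma}, and the degree and bipartiteness conditions fall out of the fact that the gadget is a path of even length attaching to each terminal by a single edge.
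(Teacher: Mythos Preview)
Your proof is correct and follows essentially the same approach as the paper: the same five-vertex path gadget with the same activity assignment, the same cube-root redistribution of the vertex weight $\lambda$ across the three incident gadgets (the paper phrases this as assigning activity $-|\lambda|^{1/3}$ to the terminals inside $B$, which is algebraically identical to your diagonal rescaling by $\alpha$), and the same verification that the resulting $2\times 2$ transfer matrix equals $C\cdot M$. Your transfer-matrix formulation and explicit bipartition are slightly more streamlined than the paper's independent-set enumeration and cycle-parity argument, but the content is the same.
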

\begin{proof}
Let $H=(V_H,E_H)$ be a 3-regular graph.
\begin{figure}[h]
\begin{center}
{
\tikzset{lab/.style={circle,draw,inner sep=0pt,fill=none,minimum size=5mm}}
\begin{tikzpicture}[xscale=1,yscale=1]
\draw (0,0) node[lab, label={below:$\lambda_{v_1}=-|\lambda|^{1/3}$}] (1) {$v_1$};
\draw (0,-3) node[lab, label={below:$\lambda_{v_2}=-|\lambda|^{1/3}$}] (2) {$v_2$};
\draw (6,-1.5) node[lab, label={right:$\lambda_{z}=\lambda_2'$}] (5) {$z$};
\draw (3,-0.75) node[lab, label={above:$\lambda_x=\lambda_1'$}] (3) {$x$};
\draw (3,-2.25) node[lab, label={below:$\lambda_{y}=\lambda_1'$}] (4) {$y$};
\draw (1) -- (3) -- (5);
\draw (2) -- (4) -- (5);
\end{tikzpicture}
}
\end{center}
\caption{The binary gadget $B=(U,F)$ used in Lemma~\ref{lem:reduction} to simulate an antiferromagnetic 2-spin system on 3-regular graphs. The gadget $B$ is used to encode  the edges of a 3-regular graph $H$. In particular, every edge $e=\{h_1,h_2\}$ of $H$ gets replaced by a distinct copy of $B$, with the vertices $v_1,v_2$ of $B$ getting identified with the vertices $h_1,h_2$ of $H$, respectively.}
\label{fig:antiferro}
\end{figure}

To construct the graph $G$, we will use the graph $B=(U,F)$ in Figure~\ref{fig:antiferro}; the vertices $v_1,v_2$ of $B$ will be used for connections. Roughly, the graph $G=(V_G,E_G)$ is constructed by replacing every edge $\{h_1,h_2\}$ of $H$ with a distinct copy of $B$ and identifying the vertex $v_1$ of $B$ with the vertex $h_1$ of $H$ and the vertex $v_2$ of $B$ with the vertex $h_2$ of $H$. The identification of the vertices $v_1,v_2$ with the vertices $h_1,h_2$ is done so that $V_H\subseteq V_G$, i.e., vertices in $H$ retain their labelling in $G$. Note that $B$ is symmetric with respect to $v_1,v_2$ and hence the ordering of the vertices $v_1,v_2$ and $h_1,h_2$ does not matter in the construction.

To give explicitly the construction of the graph $G$, for every edge $e=\{h_1,h_2\}\in E_H$, take a distinct copy of $B$. We will denote by $B^{(e)}=(U^{(e)},F^{(e)})$ the copy of $B$ corresponding to the edge  $e$ of $H$ and, for a vertex $u\in U$, we denote by $u^{(e)}$ the copy of the vertex of $u$ in the copy $B^{(e)}$. As noted earlier, we relabel $v^{(e)}_1$ to $h_1$ and $v^{(e)}_2$ to $h_2$. The graph $G=(V_G,E_G)$ is then given by
\begin{equation*}
V_G=\bigcup_{e\in E_H} U^{(e)}, \quad E_G=\bigcup_{e\in E_H} F^{(e)}.
\end{equation*}
We next specify an activity vector $\lambdab$ on $G$. Every vertex $v\in V_G\backslash V_H$ is the image of a vertex $u$ in $B$, and inherits the activity from its image $u$ in $B$ (cf. Figure~\ref{fig:antiferro} for the specification of the activities in $B$). Every vertex $h\in V_H$ is the image of three vertices whose activities in the graph $B$ were equal to $-|\lambda|^{1/3}$; since these three vertices were identified with $h$, we set the activity of the vertex $h$ to equal $(-|\lambda|^{1/3})^3=\lambda$ (our argument later will formally justify that multiplying the activities is indeed the right way to account for the effect of identification). Formally, the activity vector $\lambdab=\{\lambda_v\}_{v\in V_G}$ is given by
\begin{equation*}
\begin{gathered}
\forall h\in V_H:\quad\lambda_{h}=\lambda,\\
\forall e\in E_H:\quad  \lambda_{x^{(e)}}=\lambda_{y^{(e)}}=\lambda_1', \quad \lambda_{z^{(e)}}=\lambda_2',
\end{gathered}
\end{equation*}
where recall that the activities $\lambda_1',\lambda_2'$ satisfy \eqref{eq:lambda12}. It is now immediate that the graph $G$ has maximum degree three and that the activity vector $\lambdab$ satisfies Item~\ref{it:degree3} of the lemma statement. Moreover, $G$ is bipartite (every cycle in $G$ corresponds to a cycle in $H$; further, every cycle in $H$ maps to an even-length cycle in $G$ since the edge gadget $B$ is an even-length path).

To finish the proof of the lemma, it remains to establish Item~\ref{it:simulate}, i.e., to connect the partition functions $Z_{H,\beta,\gamma}$ and $Z_G(\lambdab)$, where the parameters $\beta,\gamma$ are given in \eqref{eq:betagamma}. 

Let $\sigma:V_H\rightarrow\{0,1\}$ be a $\{0,1\}$-assignment to the vertices of $H$. Let $\Omega_\sigma\subseteq \mathcal{I}_G$ be the set of independent sets of $G$ whose restriction on $H$ coincides with the set of vertices which are assigned the spin 1 under $\sigma$, i.e.,
\[\Omega_\sigma:=\{I\in \mathcal{I}_G\mid V_H\cap I=\sigma^{-1}(1)\}.\]
We will show that
\begin{equation}\label{eq:Hsigma}
w_{H,\beta,\gamma}(\sigma)=\frac{\sum_{I\in \Omega_\sigma}\prod_{v\in I} \lambda_v}{C^{|E_H|}}, \mbox{ where } C=-|\lambda|^{1/3}\big(\lambda_1'+\lambda_2'+1\big)>0.
\end{equation}
Note that the sets $\{\Omega_\sigma\}_{\sigma:V\rightarrow \{0,1\}}$ form a partition of the set $\mathcal{I}_G$, so adding \eqref{eq:Hsigma} over all $\sigma:V_H\rightarrow \{0,1\}$ gives that $Z_{H,\beta,\gamma}=Z_{G}(\lambdab)/C^{|E_H|}$, as wanted for Item~\ref{it:degree3}. Thus, we focus on proving \eqref{eq:Hsigma}.

To calculate the aggregate weight of independent sets in $\Omega_\sigma$, we first observe that the graph induced by $V_G\backslash V_H$ consists of $|E_H|$ disconnected copies of the graph $B\backslash \{v_1,v_2\}$. Thus, for each edge $e\in E_H$, we need to calculate the weight of independent sets that are consistent with the assignment $\sigma$ on the vertices $v^{(e)}_1,v^{(e)}_2$ of $B^{(e)}$. Then, to compute the aggregate weight of independent sets in $\Omega_\sigma$, we only need to multiply these quantities over all $e\in E_H$. Note that, for an independent set $I$ in $\Omega_\sigma$, a vertex $h$ in $V_H$ such that $\sigma(h)=1$ contributes a factor of $\lambda$ in the weight of $I$; a convenient way to account for this factor $\lambda$ is to split it into the three edges incident to $h$ by setting the activities of $v^{(e)}_1,v^{(e)}_2$ in $B^{(e)}$ equal to $-|\lambda|^{1/3}$. Then, when we multiply over $e\in E_H$, $h$ contributes in total a factor $(-|\lambda|^{1/3})^3=\lambda$, just as it should. In light of this, let
\begin{gather*}
Z_{00}=\sum_{I\in \mathcal{I}_B;\, v_1\notin I, v_2\notin I}\,\prod_{v\in I}\lambda_v,\qquad  Z_{11}=\sum_{I\in \mathcal{I}_B;\, v_1\in I, v_2\in I}\,\prod_{v\in I}\lambda_v\\
Z_{01}=\sum_{I\in \mathcal{I}_B;\, v_1\notin I, v_2\in I}\,\prod_{v\in I}\lambda_v,\qquad Z_{10}=\sum_{I\in \mathcal{I}_B;\, v_1\in I, v_2\notin I}\,\prod_{v\in I}\lambda_v
\end{gather*}
Note that $Z_{01}=Z_{10}$ since the graph $B$ is symmetric with respect to $v_1,v_2$.  Now, denote by $e_{00}, e_{11}, e_{01}$ the number of edges $\{h_1,h_2\}\in E_H$ such that $\sigma(h_1)=\sigma(h_2)=0$, $\sigma(h_1)=\sigma(h_2)=1$, and $\sigma(h_1)\neq\sigma(h_2)$, respectively. Then, we have that
\[\sum_{I\in \Omega_\sigma}\prod_{v\in I} \lambda_v=(Z_{00})^{e_{00}}(Z_{11})^{e_{11}}(Z_{01})^{e_{01}}=(Z_{01})^{|E_H|}\Big(\frac{Z_{00}}{Z_{01}}\Big)^{e_{00}}\Big(\frac{Z_{11}}{Z_{01}}\Big)^{e_{11}}.\]
Equation \eqref{eq:Hsigma} will thus follow by showing that for $\beta,\gamma$ as in \eqref{eq:betagamma} and $C$ as in \eqref{eq:Hsigma}, it holds that
\begin{equation}\label{eq:spoon}
Z_{01}=C, \quad \beta=\frac{Z_{00}}{Z_{01}}, \quad \gamma=\frac{Z_{11}}{Z_{01}}.
\end{equation}
We will justify \eqref{eq:spoon} by giving explicit expressions for $Z_{00},Z_{11},Z_{01}$ in terms of $\lambda,\lambda_1',\lambda_2'$; we give the derivation for $Z_{00}$, the other  quantities can be handled similarly (refer to Figure~\ref{fig:antiferro} for the following). For $Z_{00}$, we need only to consider independent sets $I\in \mathcal{I}_B$ such that $v_1,v_2\notin I$. Then, we consider cases whether $x,y\in I$. If $x,y\in I$, then $z\notin I$ and therefore the aggregate weight of  independent sets $I$ with $x,y\in I$ (and $v_1,v_2\notin I$) is given by $\lambda_x\lambda_y$. Similarly, the weight of independent sets $I$ such that $x\in I$ but $y\notin I$ is given by $\lambda_x$. The remaining cases ($x\notin I,y\in I$ and $x\notin I,y\notin I$) can be computed analogously. In this way, we obtain
\begin{align*}
Z_{11}&=\lambda_{v_1}\lambda_{v_2}(\lambda_z+1)=|\lambda|^{2/3}(1+\lambda_2'),\\[0.2cm]
Z_{01}&=\lambda_{v_2}\big(\lambda_x+\lambda_z+1\big)=-|\lambda|^{1/3}\big(\lambda_1'+\lambda_2'+1\big)=C,\\[0.2cm]
Z_{00}&=\lambda_x\lambda_y+\lambda_x+\lambda_y+\lambda_z+1=(\lambda'_1+1)^2+\lambda'_2.
\end{align*}
To conclude the validity of \eqref{eq:spoon}, it remains to juxtapose these expressions with the expressions of $\beta,\gamma$ given in \eqref{eq:betagamma} and verify that they are identical, which is indeed the case.

This completes the proof of \eqref{eq:Hsigma} and thus the proof of Lemma~\ref{lem:reduction}.
\end{proof}

\subsection{The inapproximability result}

 To formally state our result, we define the
following problem which has three parameters---the activity
$\lambda$, a degree bound $\Delta$, and a value $c>1$ which
specifies the desired accuracy of the approximation. \prob{
$\Hardcore$.} { An $n$-vertex bipartite graph $G$ with maximum degree at
most $\Delta$.} {  A number $\widehat{Z}$ such that
$c^{-n}|Z_{G}(\lambda)|\leq \big|\widehat{Z}\big|\leq
c^n|Z_{G}(\lambda)|$.}

\newcommand{\statethmneglambda}{Let $\Delta\geq 3$ and $\lambda<-\lambda^*(\Delta)$. Then there exists a constant $c>1$ such that   $\Hardcore$ is $\NP$-hard, i.e., it is $\NP$-hard to approximate $|Z_G(\lambda)|$ on bipartite graphs $G$ of maximum degree at most $\Delta$, even within an exponential factor.}
\begin{theorem}\label{thm:neglambda}
\statethmneglambda
\end{theorem}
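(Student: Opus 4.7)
The plan is to follow the standard framework for $\NP$-hardness of 2-spin partition functions on bounded-degree bipartite graphs (cf.\ \cite{caijcss,GSVIsing,SlySun}), reducing from a variant of MAX-CUT. The decisive new ingredient is to base the gadget construction on the existence of a bipartite graph of maximum degree $\Delta$ whose hard-core partition function vanishes at $\lambda$, which by Shearer's work together with Scott--Sokal \cite{Shearer,ScottSokal} is guaranteed precisely when $\lambda<-\lambda^*(\Delta)$.

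The workhorse of the reduction will be path gadgets. A path of $k$ edges with an appropriate pattern of pendant vertices acts on the ratio $R=\Zin/\Zout$ at its boundary via a Möbius transformation determined by a fixed $2\times 2$ transfer matrix $\Mb$; this is the content of Lemma \ref{lem:closedformula}. When $\lambda<-1/4$, the two eigenvalues of $\Mb$ are complex conjugates of equal modulus, so powers of $\Mb$ rotate on a circle in the projective line, and by choosing $k$ one can realise a dense set of ratios at the boundary vertex, including ratios that are arbitrarily close to ``pinning'' a vertex in or out of the independent set. This density of realisable boundary ratios is formalised in Lemma \ref{lem:path}.

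Next, Lemma \ref{lem:zero} will produce a bipartite graph $\Bc$ of maximum degree $\Delta$ with $Z_{\Bc}(\lambda)=0$: one starts from a truncation of the infinite $\Delta$-regular tree at a height where Shearer's criterion fails, then uses the path constructions of Lemma \ref{lem:path} to tune the partition function to exactly zero while preserving bipartiteness and the degree bound. Attaching $\Bc$ to a vertex $v$ of a larger host graph then imposes a diverging effective field on $v$, playing a role analogous to the two semi-translation-invariant fixed points that drive the $\lambda>0$ hardness arguments. Combining such pinning gadgets with the path gadgets, I would assemble a bipartite host graph that encodes a MAX-CUT instance so that $|Z_G(\lambda)|$ differs by an exponential factor between YES and NO instances.

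The main obstacle will be the \emph{precision} control of Lemma \ref{lem:precision}. Because $Z_G(\lambda)$ is a signed sum of exponentially many terms, even a tiny numerical perturbation in a gadget parameter can flip its sign, so every rational quantity produced by the path/zero constructions must be representable and manipulable in polynomial bit-length; I would handle this by working symbolically with the transfer matrix $\Mb$ and ensuring that the rotation angles used to approximate the desired ratios come from a computable dense subset. The exponential-factor slack granted by $\Hardcore$ is then absorbed in the usual way, by taking $\mathrm{poly}(n)$ disjoint copies of the reduction graph and invoking multiplicativity of $Z$ under disjoint union. Finally, Lemma \ref{lem:smalllambda} presumably addresses the delicate regime where $|\lambda|$ lies just below $\lambda^*(\Delta)$ (so that the witnessing subtree is deep), where an additional argument is needed to keep the overall gadget of polynomial size while still achieving $Z_{\Bc}(\lambda)=0$.
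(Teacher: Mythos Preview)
Your proposal contains several genuine gaps and also misreads the roles of the cited lemmas.

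First, you have the content of the paper's lemmas wrong. Lemma~\ref{lem:zero} does not \emph{produce} a graph with vanishing partition function; it \emph{assumes} a tree $T$ with $Z_T(\lambda)=0$ and concludes that $(\Delta,\lambda)$ implements a dense set of activities. Lemma~\ref{lem:precision} is not a bit-precision statement at all: it is the central structural lemma asserting that for every $\lambda<-\lambda^*(\Delta)$ one can implement a \emph{dense} set of effective activities $\lambda'\in\mathbb{R}$ using bipartite graphs of maximum degree $\Delta$. Lemma~\ref{lem:smalllambda} is not about keeping gadgets polynomial-size; it handles the regime $-1/4\le\lambda<-\lambda^*(\Delta)$ by first using $(\Delta-1)$-ary trees to manufacture an effective activity $\widehat\lambda<-1/4$ and then invoking the path analysis. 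Since your plan is built around these mischaracterisations, the argument as written does not go through.

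Second, the ``pinning via $Z=0$'' idea is unsound as stated. If you attach to a vertex $v$ a gadget $\Bc$ with $Z_{\Bc}(\lambda)=0$, you do not get a diverging field at $v$; you get $\Zout_{G,v}(\lambda)=0$ (or a global factor of zero), which is useless for a multiplicative approximation reduction. The paper uses a zero of the partition function very differently: from $Z_T(\lambda)=0$ at a leaf $u$ one deduces $\Zin_{T,u}/\Zout_{T,u}=-1$, i.e.\ one can implement the activity $-1$ exactly, and from there (together with an implementation of $+1$) one bootstraps to all rational activities via a ping-pong argument on the M\"obius maps $x\mapsto 1/(1\pm x)$. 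No ``diverging field'' appears anywhere.

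Third, the reduction target and architecture differ from what you propose. The paper does not reduce from MAX-CUT and does not need disjoint-copy amplification or polynomial bit-length bookkeeping. Instead, using Lemma~\ref{lem:precision} it implements two \emph{constant} activities $\lambda_1',\lambda_2'$ lying in carefully chosen intervals, plugs them into a fixed five-vertex binary edge-gadget, and thereby simulates an antiferromagnetic $2$-spin system with parameters $0<\beta,\gamma<1/3$ on $3$-regular graphs; hardness then comes directly from Sly--Sun. Because the two implementing graphs $G_1,G_2$ depend only on $\lambda$ (not on the input), all gadgets are of constant size and the precision issue you worry about never arises.
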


\begin{proof}
We first specify two activities $\lambda_1',\lambda_2'$ that $(\Delta,\lambda)$ implements, which further satisfy the condition \eqref{eq:lambda12} of Lemma~\ref{lem:betagamma}. 
Let $I_1$ be the following interval (which was considered in the proof of Lemma~\ref{lem:betagamma})
\begin{equation*}\tag{\ref{eq:inI1}}
I_1:=\bigg(-2-\frac{1}{3}|\lambda|^{1/3},\min\Big\{-2,-2-\frac{|\lambda|^{2/3}-1}{3|\lambda|^{1/3}+1}\Big\}\bigg),
\end{equation*}
and recall that $I_1$ has nonzero length for all $\lambda<0$, cf. \eqref{eq:c2ecc2ef2}. Hence, there exist $l_1\in I_1$ and $\epsilon>0$ such that $[l_1-\epsilon,l_1+\epsilon]\subset I_1$. By 
our main result, Theorem~\ref{thm:precision}, there is a bipartite graph $G_1$ of maximum degree $\Delta$ with terminal $v_1$ that implements $l_1$ with accuracy $\epsilon$. Let $\lambda_1':=\Zin_{G_1,v_1}(\lambda)/\Zout_{G_1,v_1}(\lambda)$, so that $G_1$ with terminal $v_1$ implements $\lambda_1'$. Since  $[l_1-\epsilon,l_1+\epsilon]\subset I_1$, we have that $\lambda_1'\in I_1$. Note that $\lambda_1'$ can be computed by brute force in constant time (since $G_1$ is a fixed graph). Let $I_2$ be the following interval (which was also considered in the proof of Lemma~\ref{lem:betagamma})
\begin{equation*}\tag{\ref{eq:inI2}}
I_2:=\Big(-1,-1-\frac{\lambda_1'(\lambda_1'+2+\frac{1}{3}|\lambda|^{1/3})}{1+\frac{1}{3}|\lambda|^{1/3}}\Big).
\end{equation*}
By an analogous argument (the fact that $I_2$ has nonzero length for $\lambda_1'\in I_1$ is proved in Lemma~\ref{lem:betagamma}), we can specify a bipartite graph $G_2$ of maximum degree $\Delta$ with terminal $v_2$ that implements an activity $\lambda_2'\in I_2$. By construction, the bipartite graphs $G_1,G_2$ implement the activities $\lambda_1',\lambda_2'$, respectively, and $\lambda_1',\lambda_2'$ satisfy the condition \eqref{eq:lambda12} of Lemma~\ref{lem:betagamma}, as wanted.  For later use, set
\begin{equation}\label{eq:C1C2}
C_1:=\Zout_{G_1,v_1}(\lambda), \quad C_2:=\Zout_{G_2,v_2}(\lambda),
\end{equation}
and note that $C_1,C_2$ are also explicitly computable constants.

Let $\beta,\gamma$ be the parameters given by \eqref{eq:betagamma}. By Lemma~\ref{lem:betagamma}, it holds that $0<\beta,\gamma<1/3$. Thus, by Lemma~\ref{lem:spinhard}, there exists $c>1$ such  that $\TwoSpin{\beta}{\gamma}$ is $\NP$-hard. We will use Lemmas~\ref{lem:nonuniform} and~\ref{lem:reduction} to reduce $\TwoSpin{\beta}{\gamma}$ to $\HardCore{c'}$ for some constant $c'>1$.

Let $H$ be a 3-regular graph which is an input graph to the problem $\TwoSpin{\beta}{\gamma}$. By Lemma~\ref{lem:reduction}, we can construct in linear time a bipartite graph $G$ of maximum degree $3$ and specify an activity vector $\lambdab=\{\lambda_v\}_{v\in V}$ on $G$ such that
\begin{enumerate}
\item \label{it:ws34522} $Z_{H,\beta,\gamma}= Z_{G}(\lambdab)/C^{|E_H|}$, where $C:=-|\lambda|^{1/3}\big(\lambda_1'+\lambda_2'+1\big)>0$.
\item \label{it:ws234w} For every vertex $v$ of $G$, it holds that $\lambda_v\in\{\lambda,\lambda_1',\lambda_2'\}$. Moreover, if $\lambda_v\neq \lambda$, then $v$ has degree two in $G$.
\end{enumerate}
Using the bipartite graphs $G_1,G_2$ that implement $\lambda_1',\lambda_2'$ respectively, we obtain from Lemma~\ref{lem:nonuniform} that we can construct in linear time a bipartite graph $G'=(V_{G'},E_{G'})$ of maximum degree at most $\Delta$ such that
\[Z_{G'}(\lambda)=C_1^{n_1}C_2^{n_2}\cdot Z_G(\lambdab),\]
where $n_1,n_2$ are the number of vertices in $G$ whose activity equals $\lambda_1',\lambda_2'$, respectively. Note, the fact that  $G'$ is a bipartite graph whose maximum degree is at most $\Delta$ follows from the construction of Lemma~\ref{lem:nonuniform} and Item~\ref{it:ws234w} (see also Remark~\ref{rem:blowup}).

It follows that
\begin{equation}\label{eq:estimate23}
Z_{H,\beta,\gamma}= Z_{G'}(\lambda)/\big(C^{|E_H|}C_1^{n_1}C_2^{n_2}\big).
\end{equation}
Since the size of $G'$ is bigger than the size of $H$ only by a constant factor, there exists a constant $c'>1$ (depending only on $\lambda$) such that an approximation to $|Z_{G'}(\lambda)|$ within a multiplicative factor $(c')^{|V_{G'}|}$ yields via \eqref{eq:estimate23} an estimate to $|Z_{H,\beta,\gamma}|=Z_{H,\beta,\gamma}$ within a multiplicative factor $c^{|V_H|}$. It follows that $\HardCore{c'}$ is $\NP$-hard.
This completes the proof of Theorem~\ref{thm:neglambda}.
\end{proof}

\end{document}